\newcommand{\blind}{0}
\newcommand{\R}{\mathbb{R}}
\newcommand{\one}{\mathbbm{1}}
\newcommand{\E}{\mathbb{E}}
\newcommand{\IN}{\mathbb{N}}
\newcommand{\I}{\mathcal{I}}
\renewcommand{\thefigure}{\thesection.\arabic{figure}}
\numberwithin{equation}{section}
\newcommand{\G}{\mathbb{G}}
\newcommand{\X}{{X}}
\newcommand{\xo}{x}
\newcommand{\x}{{x}}
\newcommand{\T}{\mathcal{T}}
\newcommand{\Supp}{\mathcal{S}}
\newcommand{\B}{\mathcal{B}(\T,\Theta)}
\newcommand{\te}{\theta}
\newcommand{\dint}{ \displaystyle\int}
\newcommand{\bt}{ \,\mid \,}
\DeclareMathOperator*{\argmin}{argmin}
\newtheorem{theorem}{Theorem}  
\newtheorem{lemma}{Lemma}
\newtheorem{corollary}{Corollary}  
\newtheorem{assumption}{Assumption}
\begin{document}

\bibliographystyle{ecta}

\def\spacingset#1{\renewcommand{\baselinestretch}%
{#1}\small\normalsize} \spacingset{1}

\thispagestyle{empty}

\if0\blind
{
  \title{\bf Flexible Specification Testing in Quantile Regression Models}
  \author{Tim Kutzker\thanks{Tim Kutzker (\textit{corresponding author}, tim.kutzker@hu-berlin.de) and Nadja Klein (nadja.klein@hu-berlin.de) acknowledge support by the Deutsche Forschungsgemeinschaft (DFG, German research foundation) through the Emmy Noether grant KL 3037/1-1. We thank Matteo Fasiolo for helpful comments on the \texttt{qgam} \citep{fasiolo2020fast} package in R.  The authors acknowledge computational facilities by the Regional Computing Center of the University
of Cologne via the DFG-funded High Performance Computing system CHEOPS (grant INST 216/512/1FUGG).}\ \ and 
    Nadja Klein\footnotemark[1]\\
    Emmy Noether Research Group in Statistics and Data Science,\\ Humboldt-Universität zu Berlin, Germany \\
    and\\
    Dominik Wied\thanks{dwied@uni-koeln.de} \\
    Econometrics and Statistics, Universität zu Köln, 
    Germany}
    \date{}
  \maketitle\thispagestyle{empty}
} \fi

\if1\blind
{
  \bigskip
  \bigskip
  \bigskip
  \begin{center}
    {\LARGE\bf Flexible Specification Testing in Quantile Regression Models}
\end{center}
  \medskip
} \fi
\bigskip
\begin{abstract}
\noindent We propose three novel consistent specification tests for quantile regression models which generalize former tests in three ways. First, we allow the covariate effects to be quantile-dependent and nonlinear. Second, we allow  parameterizing the conditional quantile functions by appropriate basis functions, rather than parametrically. We are hence able to test for functional forms beyond linearity, while retaining the linear effects as special cases. In both cases, the induced class of conditional distribution functions is tested with a Cram\'{e}r-von Mises type test statistic for which we derive the theoretical limit distribution and propose a  bootstrap method. Third, to increase the power of the tests, we further suggest a  modified test statistic. 
We highlight the merits of our tests in a detailed MC study and two real data examples. Our first application to conditional income distributions in Germany  indicates that there are not only still significant differences between East and West but also across the quantiles of the conditional income distributions, when conditioning on age and year. The second application to data from the Australian national electricity market reveals the importance of using interaction effects for modelling the highly skewed and heavy-tailed distributions of energy prices conditional on day, time of day and demand.
\end{abstract}

\noindent%
{\it Keywords:}  B-splines; Cram\'{e}r-von Mises test statistic; distribution regression; series terms; specification tests;
\vfill
\setcounter{page}{1}
\newpage
\spacingset{1.8} 

\setcounter{page}{1}

\section{Introduction}
Hypothesis testing plays a central role in many research areas. A necessary prerequisite for the statistical validity of the decisions to be made is the correct specification of the underlying model. Specification tests can be used to validate the correctness of theoretical assumptions. For linear regression, a whole range of specification tests are available for both, parametric and non-parametric approaches. In general, testing misspecification in linear ordinary least squares (OLS) models is well understood and developed. 
\\
In parametric models, e.g.~, \citet{bierens1990consistent} showed that any conditional moment test of functional form of nonlinear regression models can be converted into a consistent chi-squared test that is consistent against all deviations from the null hypothesis. \citet{hardle1993comparing} suggested a wild bootstrap procedure for regression fits in order to decide whether a parametric model could be justified, while \citet{stute1997nonparametric} proposed a more general method for testing the goodness of fit of a parametric regression model. For the non-parametric case, amongst others, \citet{gozalo1993consistent} proposed a general framework for specification testing of the regression function in a non-parametric smoothing estimation context and \citet{stute1998bootstrap} suggested a goodness of fit test using a wild bootstrap procedure that checks whether a function belongs to a certain  class. 
\\
However, OLS estimates are sensitive to outliers and draw only a part of the whole picture since they only model the mean. In contrast, quantile regression  provides more robust estimates and allows a more comprehensive picture of the entire conditional distribution. Due to these advantages, quantile regression has become increasingly popular since the seminal article by \citet{koenker1978regression}. However, post-estimation inference procedures for quantile regression models essentially depend on the validity of the underlying parametric functional form for the quantiles considered \citep[][]{angrist}. For example, assuming the same fixed linear relationship between covariates for all quantiles is the connecting element of the Machado-Mata (M-M) decomposition in order to describe wage inequalities \citep[][]{machadomata} and the Khmaladze transformation \citep[][]{koenkerxiao}. Thus, testing the validity of the imposed structure remains one of the key taks associated with challenges for valid posterior inference.  
\\
In a parametric framework, one of the first specification tests for linear location shift and location-scale shift quantile models with i.i.d.~data is the test by \citet{koenkerxiao}. Shortly thereafter, \citet{cher:2002} proposes a resampling test procedure that avoids the estimation of additional objects, such as the score function, while building on the principles stated in \citet{koenkerxiao}. However, these two tests  do not test the validity of the quantile regression model itself. \citet{escanciano2010specification}  and \citet{escanciano2010specification1} both tested the validity of the null hypothesis that a conditional quantile restriction is valid over a range of quantiles. \citet{rothewied:2013} proposed a specification test for a larger class of models, including quantile regression models. This principle was extended to dynamic models by \citet{TrosterWied2021}. In case of non-parametric instrumental quantile regression, \citet{breunig2019specification} develops a methodology for testing the hypothesis whether the instrumental quantile regression model is correctly specified. However, all models have in common that they require linearity in the regressors. 
\\
Since such a linearity assumption considerably limits the number of possible models and hence the hypothesis space, there have recently been successful attempts to weaken the linearity assumption for quantile estimation and inference with independently and identically distributed (i.i.d.) data.  In this context, more general parametric quantile models have been developed that, amongst others, include works by \citet{hallin2009local} suggesting an estimator for local linear spatial quantile regression and  \citet{guerre2012uniform} investigating the Bahadur representation of a local polynomial estimator of the conditional quantile function (qf) and its derivatives. But also non-parametric approaches for estimating conditional qfs have attracted much attention. \citet{li2008nonparametric} proposed a non-parametric conditional cumulative distribution function (cdf) kernel estimator along with an associated non-parametric conditional quantile estimator.  \citet{belloni2019conditional} developed  non-parametric quantile regression  for performing inference on the entire conditional qf and its linear functionals and \citet{qu2015nonparametric} presented estimators for non-parametrically specified conditional quantile processes that are based on local linear regressions. \citet{li2020nonparametric} investigated the problem of non-parametrically estimating a conditional qf with  discrete and continuous covariates suggesting a kernel based approach. 
\\
But regardless of whether parametric or non-parametric approaches are chosen, the theory concerning the validity of the correct model choice seems to keep up with the rapid development of new estimation methods only to a limited extent. To the best of our knowledge, there does not exist a testing procedure
that allows to test for quantile-specific functional (such as nonlinear) covariate effects. To fill this gap is the aim of this paper. To do so, we develop a broad approach that tackles the aforementioned challenges relevant for a wide range of applied questions simultaneously; namely (i) possibly nonlinear functional forms of covariate effects on certain conditional quantiles. (ii) possibly quantile-specific regressor effects and (iii) a more powerful semi-parametric test in the framework of many regressors, all of which offer estimation advantages compared to quantile-independent regressors, particularly in small samples. 
\\
We first suggest a general procedure for quantile regression models, where the regressors can explicitly depend on quantiles. This allows to test for the correct specification of large number of parametric models. Second, due to our general model set-up, our proposed methodology also allows to test for finite semi-parametric models.  One of such examples are B-splines for quantile regressions, where the finite number of covariates have a functional form and thus depend on the quantile \citep[][]{cardot2005quantile}. Additionally, our second test allows to test for the order and the correct number of knots of the B-spline specifications. 
The third test is developed in the framework of quantile regression models with an increasing number of covariates, known as models with many regressors. Employing the general structure, the third test can also be applied to test (semi-)parametric quantile regression models which turns out to be a more powerful testing procedure. 
The proposed valid bootstrap procedure is a practical easy-to-implement algorithm to calculate critical values of the limiting distributions.  Overall, our three tests extend
the literature on quantile regression specification tests significantly allowing better answering relevant questions in economics and further sciences;
wherever specific regressors may have a functional, nonlinear influence and/or the respective effects may vary over quantiles.
\\
The key idea of our procedure is based on the principle characterized by \citet{rothewied:2013}: We compare an unrestricted estimate of the joint distribution function of the random variable $Y$ and the vector $X$ with a restricted estimate that imposes the structure implied by the null hypothesis model. Based on a Cram\'{e}r-von Mises type measure of distances, the restricted estimate of the joint distribution can then be compared with the unrestricted one. We derive the non-pivotal limiting distribution of our test statistic and show the validity of our suggested parametric bootstrap procedure for the approximation of the critical values. To increase the power of our test, we replace the unrestricted model estimate with a quadratic B-spline. Due to the generality of our test procedure we can subsume previous specification tests for quantile regression models with i.i.d.~data as marginal cases of our procedure. Our extensive Monte Carlo (MC) simulation study in the  Supplement shows that our testing procedures are consistent and have superior power properties than existing benchmark methods, where comparisons are possible.
\\
Finally, to illustrate the power and potential of our tests, we consider two real data applications. First, the case of income inequality is treated, with a focus on differences in the conditional income quantiles between East and West Germany in a balanced panel data set. Such disparities have received considerable attention in the economic literature \citep[e.g.~][]{biewen2000income}, and also consistently played a major role in the domestic political debate. Our empirical analysis uses the German Socio-Economic Panel (SOEP) and shows that age has a predominant linear influence on income development in Germany, but for the upper $90\%$ quantile the influence of age is solely quadratic. Importantly, and in line with other studies on this topic, we find through an initial M-M  decomposition that there are still income differences between East and West Germany, which can be confirmed by our proposed testing procedure. The second application arises from energy economics. Following recent work in \citet{smith2019bayesian}, we consider spot prices from the Australian national electricity market from 2019 and analyze in which sense its conditional quantiles can be explained by different covariates. These authors have shown that the distribution is heavily skewed and far from Gaussian with complex interactions of the three covariates day of the year, the time of day and the demand. We statistically confirm that interaction effects have a substantial impact on the electricity price, especially for the lower quantiles. 
\\
The paper is organized as follows. Sec.~\ref{sec:qrt} formulates the test problem for the finite-dimensional parametric and semi-parametric model.  From this, we discuss the  many regressors model. In Sec.~\ref{asymptotics}, we provide the theoretical properties of the testing procedures and derive their limiting distributions. Sec.~\ref{sec:bootstrap} describes a practical and easy-to-implement bootstrap procedure, which provides valid coverages. In Sec.~\ref{sec:emp} we present the two empirical applications. The last Sec.~\ref{sec:conclusion} concludes.  Supplement contains all proofs of our theoretical results, as well as an extensive MC study including comparisons to existing tests and further results on the second application. 
\section{Quantile Regression Testing}\label{sec:qrt}

In this section, we introduce three specification tests for (semi-)parametric quantile regression models comparing the empirical conditional cumulative distribution function (ecdf) with the (semi-)parametric joint cdf that is based on the estimated conditional qf. We denote these tests by $S_n^{CM}$, $S_n^{CM, S}$ and  $S_n^{CM^*}$. In Sec.~\ref{sec:test1} we derive the general test principle along the lines of parametric models. In contrast to existing approaches, the test for parametric quantile regression models $S_n^{CM}$ allows the covariates $X$ to be quantile-dependent. Sec.~\ref{sec:test2} applies the general test principle to finite-dimensional semi-parametric models with the specification test denoted by $S_n^{CM, S}$. As an illustrative example, we consider B-splines,  where the degree of the spline and the dimension of the vector of knots is known and finite. In Sec.~\ref{splinesection}, we introduce a more powerful model specification testing procedure $S_n^{CM^*}$, which is illustrated on the class of parametric quantile regression models. To do so, we replace the empirical conditional cdf in the test statistic $S_n^{CM}$ with an appropriate spline representation that approximates the true joint cdf faster. The price of the higher power is that the class of true cdfs is restricted more strongly. For the approach, we need splines whose dimension grows as a function of the number of observations, i.e., the degree of the spline is fixed while the dimension of the knot vector diverges at an appropriate rate. Such models with increasing regressors are known as models with many regressors. Finally, we note that it would in principle also be possible to do this extension for $S_n^{CM,S}$ with some additional assumptions but doing so in detail is beyond the scope of this paper.

\subsection{Quantile Regression  and the General Test Principle}\label{sec:test1}

Let $Y_i\in\R$ denote the  outcome variable and $X_i\in\R^K$ the vector of explanatory variables of i.i.d.~data points for $i=1,\ldots,n$ and $K\in\mathbb{N}$.  Our aim is to test the validity of certain model specifications for quantile regression. Specifically, we consider models of the form 
\begin{align}\label{qrmodel}
    F_{Y\mid X}^{-1}(\tau\mid x)=P(x,\tau)^\top\te(\tau),
\end{align}
where $F_{Y\mid X}^{-1}(\tau\mid x)$ denotes the  qf of $Y$ conditional on $X=x\in\R^K$ at quantile $\tau$, $P(x,\tau)\in\R^{p_\tau}$ is a transformation vector of $x$ with $p_\tau\in\mathds{N}$ and $\theta(\tau)\in\R^{p_\tau}$ is the parameter vector depending on $\tau$ for all $\tau\in\mathcal{T}\subset[0,1]$. Naturally, models in which the vector of transformations does not depend on $\tau$ are captured by our approach as a special case. As noted by \cite{belloni2019} for $P(\x,\tau)\equiv P(\x)$, the above framework incorporates a variety of models such as parametric \citep[][]{koenker:2005} and semi-parametric \citep[][]{heshi} ones. However, since we allow the transformation vector $P(\x,\tau)$ to depend on the quantile $\tau$, models of the form \eqref{qrmodel} are generalizations. In parametric quantile regression models, $P(\x,\tau)$ could for instance represent a linear covariate in the lower 50\% quantile and a highly nonlinear functional regression form in the upper 50\% quantile, e.g.~$P(\x,\tau)=\x$ if $\tau\in[0,0.5]$ and $P(\x,\tau)=\sin(\x)\x^2$ for $\tau\in(0.5]$. In semi-parametric models, $P(\x,\tau)$ could represent the knot vector for cubic B-splines that differs for distinct quantiles as in our second applicationin Sec.~\ref{sec:edata}. For ease of notation, we assume $p_\tau=:p\in\mathds{N}$ for all $\tau\in\mathcal{T}$, since transformations of $\x$ that do not appear for certain quantiles $\tau$ can be  replaced by $0$. In the remainder of this subsection we assume the qf according to \eqref{qrmodel} to be specified by a parametric model, while generalizations are treated thereafter. 
\\
Our test principle is designed for the comparison of the non-parametric with the parametric joint cdf, where the latter can be expressed by means of the parametric conditional cdf. The conditional cdf $F$ of $Y$ conditioned on $X$, denoted as $F_{Y\mid X}$, in turn is induced by its  corresponding (generalized) conditional qf $F_{Y\mid X}^{-1}$ through the following equation 
\begin{align}\label{e1}
F_{Y\mid X}(y\mid x)=\int_0^1 \one_{\left \{F^{-1}_{Y\mid X}(\tau\mid x)\leq y\right\}} d\tau\qquad \forall\, y\in\R.
\end{align}
In the following, we consider the set of all conditional distribution functions satisfying \eqref{e1} given the model specification \eqref{qrmodel}, which we denote by $\mathcal{F}$, i.e.~ \begin{align}\label{scriptf}
\!\!\!\!\mathcal{F}:=\{F_{Y\mid X}(y\mid x,\te)\mid F_{Y\mid X}^{-1}(\tau\mid x)=P(x,\tau)^\top\te(\tau) \text{ for some }\theta\in\B,\ (y,x)\in\Supp\},
\end{align}
where $\Supp$ denotes the support of $(y,x)\in\R^{K+1}$ and $\B$ the class of functions $\tau\mapsto\theta(\tau)\in\Theta\subset\R^{p}$. The  specification testing problem of whether our model  \eqref{qrmodel} is correctly specified for all $\tau\in\mathcal{T}$ transfers by means of \eqref{scriptf} to hypotheses of the form 
\begin{align}\label{M1}
H_0: F_{Y\mid X}\in\mathcal{F}\quad\text{vs.}\quad H_1: F_{Y\mid X}\notin \mathcal{F}. 
\end{align}
Thus, we want to test if the conditional cdf $F_{Y\mid X}$ coincides with an element of 
 $\mathcal{F}$ from \eqref{scriptf}.
For this testing problem, we assume a unique $\te_0\in\B$ under the null hypothesis, such that $\te(\tau)=\te_0(\tau)$ for all $\tau\in\T$. This yields  $\mathcal{F}^0:=\{F_{Y\mid X}(y\mid \x, \te_0)\,\mid\, F_{Y\mid X}^{-1}(\tau\mid \x)=P(\x,\tau)^\top\te_0(\tau) \text{ for some } \te_0\in\B \forall(y,\x)\in\Supp \}$. Hence, we can reformulate \eqref{M1} as
\begin{align}\label{null}
\begin{split}
&H_0: F_{Y\mid X}(y\mid x)=F_{Y\mid X}(y\mid x, \te_0)\text{ for some } \te_0\in\B \text{ for all }(y,\x)\in\Supp\\
\text{vs. }&H_1: F_{Y\mid X}(y\mid x)\neq F_{Y\mid X}(y\mid x, \te)\text{ for all } \te\in\B \text{ for some }(y,\x)\in\Supp.
\end{split}
\end{align} 
Additionally we assume that $\te_0$ is identified under the null hypothesis through a moment condition. Specifically, let $g: \Supp \times\Theta\times \T\to \R^p$ be a uniformly integrable function whose exact form depends on $\mathcal{F}^0$, and suppose that for every $\tau \in \T$ 
\begin{align}\label{GMM}
G(\theta, \tau):=\E[g(Y,X,\theta,\tau)]=0\in\R^p
\end{align}
has a unique solution $\theta_0(\tau)$. Furthermore, under the alternative $H_1$, $\theta_0(\tau)$ remains well defined for all $\tau\in\T$ as a solution to \eqref{GMM} and can thus be thought of as a pseudo-true value of the functional parameter in this case. 
Incorporating the moment condition, we can now rewrite the null hypothesis of \eqref{M1} as
\begin{align*}
F_{Y\mid X}(y\mid \x)=F_{Y\mid X}(y\mid \x,\theta_0) \text{ for all  } (y,\x)\in\R^{K+1},
\end{align*}
with $\theta_0(\tau)$ as the unique solution to \eqref{GMM} for all $\tau\in\T$. This holds true since $\mathcal{F}^0$ is a singleton containing $F_{\cdot\mid \cdot}(\cdot \mid \cdot, \theta_0)$. Since $F_{Y\mid X}(y\mid \X)=\E[\one_{\{Y\leq y\}}\mid \X ]$, we can write the joint cdf $F$ of $Y$ and $\X$ as
\begin{align*}
F(y,\x)&=\dint_{\R^{K}} F_{Y\mid X}(y\bt \x^*)\one_{\{\x^*\leq \x\}} dF_\X(\x^*)\\
F(y,\x,\theta_0)&=\dint_{\R^{K}}  F_{Y\mid X}(y\bt \x^*, \theta_0)\one_{\{\x^*\leq \x\}} dF_\X(\x^*),
\end{align*}
where $F_\X$ denotes the marginal cdf of $\X$. From Theorem 16.10 (iii)  of \cite{billingsley:1995} it follows that the testing problem \eqref{null} can be restated as 
\begin{align}\label{null3}
\begin{aligned}
&H_0: F(y,\x)=F(y,\x,\theta_0) \text{ for all } (y,\x)\in\R^{K+1}\\
\text{vs. }&H_1: F(y,\x)\neq F(y,\x,\theta_0) \text{ for some } (y,\x)\in\R^{K+1}.
\end{aligned}
\end{align}
Further, let $S:\R^{K+1}\times\Theta\to \R$ be a function that measures the difference of the non-parametric $F(y,x)$ and the parametrized cdf $F(y,x,\theta)$ defined as  
\begin{align}\label{Sthe}
{S(y,x,\theta):=F(y,x)-F(y,x,\theta).}
\end{align}
The null hypothesis is true if $S(y,x,\theta_0)=0$ for all $(y,x)\in\Supp $, whereas $S(y,x,\te)\neq 0$ for all $\theta\neq \theta_0\in\B$ and for some $(y,x)\in\Supp $. 
The sample analog is 
\begin{align}\label{testst_emp}
S_n(y,x,\hat{\theta}_n):=\hat{F}_n(y,x)-\hat{F}_n(y,x,\hat{\theta}_n),
\end{align}
where $\hat{F}_n(y,x)$ is the empirical cdf and $\hat{F}_n(y,x,\hat{\theta})$ a parametric estimate of $F$ based on a consistent estimate $\hat{\theta}_n(\tau)$ of $\theta_0(\tau)$ for all $\tau\in\mathcal{T}$ corresponding to the underlying model assumption \eqref{qrmodel}. Under the null hypothesis, $\hat{F}_n(y,x,\hat{\te}_n)$ is a consistent estimator for $F(y,x,\te_0)$, whereas $\hat{F}_n(y,x)$ consistently estimates $F(y,x)$. In that case, $S_n(y,x,\hat{\theta}_n)$ should be close to zero for all $(y,x)\in \Supp$. If, however, the alternative holds true, then there is a vector $(y,x)\in\mathcal{S}$ for each $\theta\in\B$ such that the absolute value of the function $S_n$ from \eqref{testst_emp} is greater than zero. 
\\
To obtain an estimate for the parametrized empirical cdf $\hat{F}_n(y,x,\hat{\te}_n)$ we follow \citeauthor{cher:2013} (\citeyear{cher:2013}) and take the function $\hat{\theta}_n$ to be an approximate $Z$-estimator satisfying 
\begin{align}\label{Zestimator}
\left\lVert\hat{G}_n(\hat{\theta}_n,\tau)\right\rVert=\inf\limits_{\theta\in\Theta}\left\lVert\hat{G}_n( {\theta} ,\tau)\right\rVert+\eta_n,
\end{align} 
where the function $\hat{G}_n(\hat{\theta}_n,\tau):=n^{-1}\sum\limits_{i=1}^n g(Y_i,X_i,\theta,\tau)$ is the sample analogue of the moment condition \eqref{GMM} for every $\tau\in\T$ and for some possibly random variable $\eta_n=o_p(n^{-1/2})$. For every $\tau\in\T$ and every $(y,x)\in\Supp$, the estimator based on the testing problem \eqref{null} is\begin{align}\label{marginaly}
\begin{aligned}
&\hat{F}_{n}(y\mid x,\hat{\theta}_n)=\dint_0^{1} \one_{\{P(x,\tau)^\top \hat\te_n(\tau)\leq y\}}d\tau,\\\
&\hat\te_n(\tau)=\argmin_{\te\in\Theta}\sum\limits_{i=1}^n\left(\tau-\one_{\{y_i\leq P(x_i,\tau)^\top \te\}}\right)\left(y_i - P(x_i,\tau)^\top \te\right).
\end{aligned}
\end{align}
The integral in \eqref{marginaly} can be computed by means of standard numerical integration techniques and corresponds to the canonical quantile regression approach, i.e.~the moment function $g$ from \eqref{GMM} is given by  $g(Y,X,\theta,\tau)=(\tau-\one\{Y\leq P(X,\tau)^\top \theta(\tau)\})P(X,\tau)$ \citep[Lemma 14 of][]{cher:2013}.  Additionally, \eqref{marginaly} and other typical estimation methods fit the estimated qf $\hat{F}^{-1}_n(\tau\mid x)$ pointwise in $\tau\in\mathcal{T}$, which might induce the problem that the estimated quantile curve $\tau\mapsto\hat{F}^{-1}_n(\tau\mid x)$ violates the monotonicity constraint. This in turn may cause crossing quantile curves.
However, a violation of the monotonicity constraint does not affect the validity of the test statistic, since it is based on transformations of $\hat{F}_n(y\mid x,\hat{\te}_n)$ which is  monotone in $y$ by construction for every $x$.
Hence, a valid test statistic can be based on the differences of the non-parametric and parametric ecdfs $\hat{F}_n(y,x)$ and $\hat{F}_n(y,x,\hat{\theta}_n)$ and thus expressed as
\begin{align}\label{S_n}
\begin{aligned}
S_n&(y,x,\hat{\te}_n)=\hat{F}_n(y,x)-\hat{F}_n(y,x,\hat{\te}_n)\\
&=\frac{1}{n}\sum\limits_{i=1}^n\left(\one_{\{Y_i\leq y\}}\one_{\{X_i\leq x\}}\right)-\dint_{\R^{K}}\one_{\{x^*\leq x\}}\left(\dint_0^{1} \one_{\{P(x^*,\tau)^\top \hat\te_n(\tau)\leq y\}}d\tau\right) \ d\hat{F}_X(x^*)\\
&=\frac{1}{n}\sum\limits_{i=1}^n\left(\one_{\{Y_i\leq y\}}\one_{\{X_i\leq x\}}-\one_{\{X_i\leq x\}}\left[\dint_0^{1} \one_{\{P(X_i,\tau)^\top \hat\te_n(\tau)\leq y\}}d\tau\right]\right),
\end{aligned}
\end{align}
where the third line exploits the definition of the integral with respect to the ecdf $\hat{F}_X$. We propose a Cram\'{e}r-von Mises type (CM) test statistic $S_n^{CM}$ defined as 
\begin{align}\label{CM-test}
S_n^{CM}:=\dint \left(\sqrt{n}S_n(y,x,\hat{\te}_n)\right)^2d\hat{F}_n(y,x),
\end{align}
which is due to the quantile dependence of the covariates a generalization of existing quantile regression tests. However, if the vector of transformations $P(x,\tau)$ in \eqref{qrmodel} is independent of $\tau$ then the test statistic coincides with test statistic proposed in \cite{rothewied:2013}. It is also possible to consider a Kolmogorov-Smirnov-type test statistic
\begin{align*}
S_n^{KS}:=\sqrt{n}\sup\limits_{(y,x)\in\Supp}\left\lvert S_n(y,x,\hat{\te}_n)\right\rvert,
\end{align*}
but the CM test yields better (power) results 
\citep[][]{rothewied:2013, cher:2002}. 

\subsection{Specification Test for Semi-Parametric Quantile Regression}\label{sec:test2}

Since we introduced the general testing principle of \eqref{qrmodel} by means of the parametric model, this subsection briefly demonstrates that the general test principle is also applicable to finite dimensional semi-parametric models. This particularly addresses the fact that parametric models are often too restrictive and implausible from an applied perspective, since, amongst others, the constantly increasing complexity of data sets also makes modeling by simple functional relationships more difficult.
\\
In the following, we identify the vector of transformations $P(x,\tau)$ as basis functions  \citep[often referred to as series terms;][]{volgushev2019distributed,belloni2019,cher:2013}.
To distinguish such basis functions from the vector of transformations in the previous subsection, we use the notation $B^\cdot$ instead. Due to their widespread use, we will derive the semi-parametric test for B-splines bases, although our general test principle also allows for other semi-parametric forms such as P-splines, Fourier series or compactly supported wavelets \citep[][]{volgushev2019distributed}. For ease of well-defined expression and readability, we assume w.l.o.g.~that the vector of covariates $X\in\R^K$ is properly scaled and centered and that $k\in\mathds{N}$ uniformly spaced knots $0=t_1<\ldots<t_{k}=1$ in the interval $[0,1]$ are given. For $x=(x_1,\ldots,x_K)^\top\in[0,1]^K$ with $K\in\mathds{N}$, we identify the B-spline quantile regression model in the spirit of \eqref{qrmodel} as
\begin{align}\label{Bmodel}
F^{-1}_{Y\mid X}(\tau\mid X=x)=\sum\limits_{j=1}^K B(x_j\mid d_\tau)^\top \theta_j(\tau)
\end{align}
with ${B}(x_j\mid d_\tau):=\left({B}_0(x_j\mid d_\tau),\ldots,{B}_{M-1}(x_j\mid d_\tau)\right)^\top$ being  $M$ basis functions  of degree $d_\tau$ that are defined recursively on the vector of knots on $[0,1]$ and evaluated at $x_j$ for $j=1,\ldots, K$ \citep[cf.~][for the recursive Definion]{deboor}. For every $\tau\in\mathcal{T}$ and $j=1,\ldots,K$,  ${\theta}_j(\tau)=(\theta_  {j,0}(\tau),\ldots,\theta_{j,M-1}(\tau))^\top$ defines the corresponding functional coefficient vectors. Although both $M$ and $d_\tau$ can be conceived to depend on $j$ for $j=1,\ldots,K$ and additionally $M$ on $\tau$, we suppress these dependencies at this point due to readability and clearness. Note that for distinct quantiles $\tau$ the degree of the B-spline might differ. If $d_\tau\equiv d\in\mathds{N}$ we refer to \eqref{Bmodel} as B-spline quantile regression model of degree $d$. Since our general quantile regression model in \eqref{qrmodel} conceptually allows for multivariate covariates,  we make  \eqref{Bmodel} more flexible by adding $q^*\in\mathds{N}$ arbitrary product interaction effects 
of the form $\pi_i(x)=\prod\limits_{j\in J_i}f_j(x_j)$, where $J_i$ is an arbitrary subset of $\{1,\ldots,K\}$ for $i=1,\ldots,q^*$ and $f_j$ an arbitrary continuous function for $j\in J_i$. Thus,  \eqref{Bmodel}  generalizes to 
\begin{align}\label{genB}
F^{-1}_{Y\mid X}(\tau\mid X=x)=\sum\limits_{j=1}^q B(\pi_j(x)\mid d_\tau)^\top \theta_j(\tau) 
\end{align}
with $q=K+q^*$. For $\pi_j(x)=x_j$ and $J_i$ singletons for $j=1,\ldots,q$ with $q=K$ we receive our initial B-spline model \eqref{Bmodel}.  
The estimator for models of the form \eqref{genB} is given by 
\begin{align}\label{Bestimate}
\begin{aligned}
\hat\theta_n(\tau)=\argmin_{\theta\in\R^{q\cdot M}}&\left\{\sum\limits_{i=1}^n \rho_\tau\left(y_i-\sum\limits_{j=1}^q B(\pi_j(x)\mid d_\tau)^\top \theta_j \right) \right\}
\end{aligned}
\end{align}
where $\rho_\tau(u)=u(\tau-I(u<0))$ is the check function \citep{koenker1978regression} for $\tau\in\mathcal{T}$, $u\in\mathbb{R}$.  
In case of no misspecification, \citet{bondellnoncrossing} have shown that the unconstrained estimator \eqref{Bestimate} has the same limiting distribution as the classical constrained quantile regression estimator. Hence, in accordance with the discussion on monotonicity in Sec.~\ref{sec:test1}, $\theta_0$ can be estimated consistently based on  unconstrained  methods, noting that possible quantile curves crossing of the qf estimator does not affect the validity of the CM test statistic. Besides the well-known estimation method \eqref{Bestimate}, there are other consistent approaches.  
A prominent and easy to implement algorithm  is the \textit{divide and conquer} algorithm at fixed $\tau$. The \textit{quantile projection} algorithm, in contrast, is used to construct an estimator for the quantile process \citep[cf.~][for further details]{volgushev2019distributed2}.
\\
To develop a CM test for null hypotheses of the form \eqref{null}, we replace the estimator of the conditional qf in  \eqref{marginaly} with our estimator \eqref{Bestimate}. This yields a new conditional distribution function $\hat{F}^S_n(y\mid x,\hat{\te}_n)$. Integrating over $x$ leads to the function $S_n^S(y,x,\hat{\te}_n):=\hat{F}_n(y,x)-\hat{F}_n^S(y,x,\hat{\te}_n)$, where $\hat{F}_n^S(y,x,\hat{\te}_n)$ is the spline based estimate of the cdf in the spirit of \eqref{S_n} for fixed and finite $q$ and $M$. We then define the CM test statistic for finite-dimensional semi-parametric quantile regression models, i.e.
\begin{align}\label{CMS-test}
S_n^{CM,S}:=\dint \left(\sqrt{n}S_n^S(y,x,\hat{\te}_n)\right)^2d\hat{F}_n(y,x).
\end{align}    
The second test $S_n^{CM, S}$ is thus able to test, for instance, whether a spline is correctly specified with respect to its predefined fixed degree $d$. Consequently, questions of the form whether quadratic splines characterize a data set similarly well as cubic splines can be addressed by means of $S_n^{CM, S}$. 

\subsection{A More Powerful Testing Procedure Using Splines}\label{splinesection}

The underlying principle of the test $S_n^{CM}$ is to compare the parametric cdf induced by \eqref{null} with the non-parametric cdf. The power of this test can be improved if a spline  is used instead, i.e.~modelling the conditional qf with an appropriate spline function  used to estimate the ecdf $\hat{F}_n$. \citet{xue2010} have shown, for instance, that the estimate of the cdf with a smooth monotone polynomial spline has better finite sample properties than the empirical distributional estimate. \citet{cardot2005quantile} have generalized limiting results for quantile regression models with quantile-dependent covariates. However, the goodness and convergence rate of the spline approximation depends, in general, in a complex fashion  on the degree of the spline, the number of knots and the position of the knots which may change for increasing $n$. For a quantile regression model with quantile-independent covariates \citet{heshi} have pointed out that if the number of knots $k_n\sim (n/\log{n})^{2/5}$ and under some mild assumptions \citep[cf.~][assumptions $C1-C3$]{heshi}, the order of approximation of a quadratic monotone B-spline is $(\log{n}/n)^{2/5}$.
\\
However, in order to approximate the theoretical cdf sufficiently well, it is necessary that $M$ grows as a function of $n$, i.e.~$M$ diverges at a proper rate. This is known as quantile regression with many regressors, i.e.~a linear model with increasing dimension. Note that in this framework the true functional parameter vector $\theta_0$ also depends on $n$ \citep{belloni2019}. Consequently, for quantile regression  with many regressors,  \eqref{qrmodel} expands to
\begin{align}\label{qrmodelwithmanyreg}
    F_{Y\mid X}^{-1}(\tau\mid x)=P_n(x_n,\tau)^\top\te_{0_n}(\tau).
\end{align}
We discuss the theoretical framework of \eqref{qrmodelwithmanyreg} in Sec.~\ref{sec:asymp:semipara}.   
Let $\hat{F}_n^{S_M}$ be the spline based estimate of the cdf via the conditional qf with many regressors according to \eqref{qrmodelwithmanyreg}. Thus in many regressor models, the test statistic that is based on the difference of the parametric and semi-parametric ecdf reads
\begin{align*}
\begin{aligned}
S_n^*(y,x,\hat{\te}_n)&=\frac{1}{a_n^*}\hat{F}_n^{S_M}(y,x,\hat{\te}_n)-\frac{1}{a_n}\hat{F}_n(y,x,\hat{\te}_n),
\end{aligned}
\end{align*}
where $a_n,a_n^*$ are  scaling factors defined  in Sec.~\ref{sec:asymp:semipara}.   
This yields the new test statistic  
\begin{align}\label{CM-test_spline}
S_n^{CM^*}:=\dint \left(\sqrt{n}S_n^*(y,x,\hat{\te}_n)\right)^2d\hat{F}_n(y,x).
\end{align}
In comparison to $S_n^{CM}$, the test statistic $S_n^{CM^*}$ replaces the estimate of the ecdf $\hat{F}_n(y,x)$ in \eqref{S_n} by an appropriate spline estimate of the conditional qf via \eqref{qrmodelwithmanyreg}, which is then transformed to estimate $\hat{F}^{S_M}_n(y,x,\hat{\theta}_n)$. Note that finite-dimensional parametric models can also be tested with $S_n^{CM^*}$. In our MC simulation study, we will therefore compare the two tests $S_n^{CM}$ and $S_n^{CM^*}$, as they address questions of similar kind. It turns out that $S_n^{CM^*}$ is a more powerful testing procedure than $S_n^{CM}$, particularly in small samples.

\section{Asymptotics}\label{asymptotics}
In this section, we first derive theoretical properties of the parametric test statistic $S_n^{CM}$ in \ref{sec:asymp:para} before generalizing the statements to the semi-parametric test statistic  $S_n^{CM,S}$ and the more powerful test statistic $S_n^{CM^*}$ in Sec.~\ref{sec:asymp:semipara}.

\subsection{Theoretical Properties for Testing (Semi-)Parametric Quantile Regression Models}\label{sec:asymp:para}
In Theorem \ref{t1} below we show that the test statistic $S_n^{CM}$ has correct asymptotic size. To be able to derive large sample properties of $S_n^{CM}$, we make and discuss the following mild assumptions. Since our proposed test statistic is a generalization of existing tests, 
these assumptions modify those previously made \citep{cher:2013,rothewied:2013}. For this purpose, we restate the assumptions on compact subsets on $\mathcal{T}$. 
Let $\Theta$ be an arbitrary subset of $\R^p$ and $\T:=[\varepsilon,1-\varepsilon]$ with $\varepsilon\in(0,0.5)$. 
\begin{assumption}\label{ass1}\text{} 
\begin{enumerate}[i.)]
\item $P(X,\tau)$ is $L_2$-bounded in $[0,1]$ and continuous in $X$.\label{A11}
\item Let  $\bigcup\limits_{l=1}^L I_l=\T$, $L\in\IN$, $I_l$ compact for $l=1,\ldots,L$ and $I_{l_1}\cap I_{l_2}$ a singleton for $l_1\neq l_2$.\label{A12}
\item For each $\tau\in I_l$ with $l=1,\ldots,L$, $G(\cdot, \tau):\Theta\to \R^p$ possesses a unique zero at $\te_0\in interior(\Theta)$ such that $G(\theta_0,\tau)=0$ for all $\tau\in\mathcal{T}$ and for some $\delta>0$, $\mathcal{B}:=\bigcup\limits_{\tau\in\I_l}B_\delta(\theta_0)$ is a compact subset of $\R^p$ contained in $\Theta$ for $l=1,\ldots,L$.\label{A13}
\item Further,  $G(\cdot,\tau)$ has an inverse $G^{-1}(x,\tau):=\{\theta\in\Theta\mid G(\theta,\tau)=x\}$ that is continuous at $x=0$ uniformly in $\tau\in I_l$ for all $l=1,\ldots,L$ with respect to the Hausdorff distance.\label{A14}
\item The mapping $(\te,\tau)\mapsto g(\cdot, \te,\tau)$ is continuous at each $(\te(\tau),\tau)\in\Theta\times I_l$ for all $l=1,\ldots,L$  with probability one and $(\te,\tau)\mapsto G(\te,\tau)$ is continuously differentiable at $(\te_0(\tau),\tau)$ with uniformly bounded derivative on $\mathcal{T}$.\label{A15}
\item The function $\dot{G}(\te,\tau):=\partial_\theta G(\theta,\tau)$ is non-singular at $\theta_0(\cdot)$ uniformly over $\tau\in I_l$ with $l=1,\ldots,L$.\label{A16}
\item The function set $\mathcal{G}_l = \{g( Y,X,\te,\tau)\mid (\te,\tau)\in \Theta\times I_l)\}$ is $F_{YX}$-Donsker for all $l=1,\ldots,L$ with a square integrable envelope $\tilde{G}$ for $\bigcup\limits_{l=1}^L \mathcal{G}_l.$ \label{A17}
\item The mapping $\theta\mapsto F(\cdot\mid \cdot,\theta)$ is Hadamard differentiable for all $\theta\in\mathcal{B}(\mathcal{T},\Theta)$ with derivative $h\mapsto\dot{F}(\cdot\mid \cdot,\theta)[h]$ \label{A18}
\end{enumerate}
\end{assumption}
\noindent Due to quantile dependence of the regressors $X$, we further require continuity of the function $P(X,\tau)$ in $X$, which is provided by
Assumption \ref{ass1}\ref{A11}. Assumption \ref{ass1}\ref{A12} ensures that there is a finite and compact decomposition of the unit interval. This is required since we consider Donsker classes in the proof of Theorem~\ref{t1}. We are using the fact that the union of Donsker classes is also Donsker \citep[see][ Section~3.8]{dudley_2014}. Assumptions \ref{ass1}\ref{A12}--\ref{A17} guarantee the regularity of our estimator $\hat{\theta}_n $ and ensure that a functional central limit theorem can be applied to $Z$-estimator processes (see Corollary \ref{coro} in  Supplement \ref{sec::A1}). Assumption \ref{ass1} \ref{A18} is a smoothness condition. Together with the functional delta method it implies that the restricted cdf estimator process 
\begin{align}\label{con1}
(y,x)\mapsto \sqrt{n}\left(\hat{F}_n(y,x,\hat{\theta}_n)-F(y,x,\theta)\right)
\end{align} 
is $F_{YX}$-Donsker. This convergence can be shown to be jointly with that of the ecdf process $(y,x)\mapsto \sqrt{n}\left(\hat{F}_n(y,x)-F(y,x)\right)$ to a Brownian bridge by some standard arguments given in Lemma \ref{l2} in  Supplement \ref{sec::A1}. Applying the continuous mapping theorem yields the following proposition. 
\begin{theorem}\label{t1}
If Assumption \ref{ass1} 
is satisfied, then the following statements hold: 
\begin{enumerate}[i.)]
\item\label{t1i}Under the null hypothesis $H_0$ in \eqref{null3},
\begin{align*}
S_n^{CM}\stackrel{d}{\rightarrow}\dint\left(\G_1(y,x)-\G_2(y,x)\right)^2dF_{YX}(y,x),
\end{align*}
where $(\G_1,\G_2)$ is a bivariate zero mean Gaussian processes with
\begin{align*}
&\G_2(y,x):=\int\G_2^+(y,x^*)\one_{\{x^*\leq x\}}dF_X(x^*)+\int F(y\,\mid \, x^*)\one_{\{x^*\leq x\}}d\G_1(\infty,x^*),
\end{align*}
where $\G_2^+(y,x)$ is the limiting Gaussian process of $\sqrt{n}\left(\hat{F}_n(y\,\mid \, x, \hat{\te}_n)-F(y\,\mid \,x,\te)\right)\in\ell^\infty(\mathcal{\Supp})$ defined in Lemma \ref{l2}. Moreover,
\begin{align*}
Cov(\G_1(y_1,x_1),\G_2(y_2,x_2))=\lim_{n \rightarrow \infty} n\ Cov\left(\hat{F}_n(y_1,x_1)-F(y_1,x_1),\hat{F}_n(y_2,x_2,\hat{\theta}_n)-F(y_2,x_2,\theta)\right).
\end{align*}
\item Under any fixed alternative, i.e., when the data are distributed according to some $F$ that satisfies the alternative hypothesis ${H}_1$ in \eqref{null3},
\begin{align*}
\lim\limits_{n\to\infty}P(S_n^{CM}>\varepsilon)=1 \text{ for all constants }\varepsilon>0.
\end{align*}
\end{enumerate}
\end{theorem}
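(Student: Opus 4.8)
The plan is to establish the two parts of Theorem~\ref{t1} separately, relying on the Donsker-type convergence results announced just before the statement (Lemma~\ref{l2} and Corollary~\ref{coro} in the Supplement) together with the continuous mapping theorem.

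\textbf{Part (i): limit distribution under $H_0$.} First I would decompose the centered sample statistic as
\begin{align*}
\sqrt{n}\,S_n(y,x,\hat\te_n)
&=\sqrt{n}\bigl(\hat F_n(y,x)-F(y,x)\bigr)
-\sqrt{n}\bigl(\hat F_n(y,x,\hat\te_n)-F(y,x,\te_0)\bigr),
\end{align*}
which is legitimate under $H_0$ because then $F(y,x)=F(y,x,\te_0)$, so the two subtracted deterministic terms cancel. The first summand converges weakly in $\ell^\infty(\Supp)$ to the $F_{YX}$-Brownian bridge $\G_1$ by the multivariate Donsker theorem for empirical cdfs. For the second summand I would invoke Assumption~\ref{ass1}\ref{A12}--\ref{A17}: these guarantee that the $Z$-estimator process $\tau\mapsto\sqrt n(\hat\te_n(\tau)-\te_0(\tau))$ satisfies a functional central limit theorem over each compact piece $I_l$ of $\T$ (Corollary~\ref{coro}), and since a finite union of Donsker classes is Donsker, over all of $\T$. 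Then Assumption~\ref{ass1}\ref{A18} (Hadamard differentiability of $\te\mapsto F(\cdot\mid\cdot,\te)$) plus the functional delta method turns this into weak convergence of the conditional-cdf estimator process $\sqrt n(\hat F_n(y\mid x,\hat\te_n)-F(y\mid x,\te_0))$ to the Gaussian process $\G_2^+$ of Lemma~\ref{l2}. Integrating over $x$ against $\hat F_X$ — writing $\hat F_X=F_X+(\hat F_X-F_X)$ and expanding — produces the two-term structure of $\G_2$: the first term $\int\G_2^+(y,x^*)\one_{\{x^*\le x\}}dF_X(x^*)$ comes from the conditional-cdf fluctuation integrated against the true $F_X$, and the second term $\int F(y\mid x^*)\one_{\{x^*\le x\}}d\G_1(\infty,x^*)$ from integrating the fixed limit $F(y\mid\cdot)$ against the empirical-measure fluctuation $\sqrt n(\hat F_X-F_X)$, which is the $x$-marginal $\G_1(\infty,\cdot)$ of the bridge. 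The joint convergence of $(\sqrt n(\hat F_n-F),\ \sqrt n(\hat F_n(\cdot,\cdot,\hat\te_n)-F))$ to $(\G_1,\G_2)$ — with the stated covariance as the limit of $n\,\Cov(\cdot,\cdot)$ — is exactly what Lemma~\ref{l2} delivers; both processes are built from the same empirical process indexed by a single Donsker class, so joint tightness and Gaussianity of the pair are automatic. Finally, the map $\phi\mapsto\int\phi(y,x)^2\,d\hat F_n(y,x)$ is continuous (uniformly over the relevant bounded functions) and $\hat F_n\Rightarrow F_{YX}$, so the continuous mapping theorem (in the form handling a converging integrator, as used in \citet{rothewied:2013}) gives $S_n^{CM}\stackrel d\to\int(\G_1-\G_2)^2\,dF_{YX}$.

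\textbf{Part (ii): consistency under $H_1$.} Here the argument is a law-of-large-numbers one. Under a fixed alternative $F$, the empirical cdf still satisfies $\hat F_n(y,x)\to F(y,x)$ uniformly a.s.\ (Glivenko--Cantelli). The $Z$-estimator $\hat\te_n(\tau)$ now converges to the pseudo-true value $\te_0(\tau)$ — the unique solution of the moment condition \eqref{GMM}, which remains well defined under $H_1$ by the standing assumption — uniformly in $\tau$, by the same identification/continuity conditions (Assumption~\ref{ass1}\ref{A13},\ref{A14},\ref{A16}) now applied to consistency rather than to a CLT. Consequently $\hat F_n(y,x,\hat\te_n)\to F(y,x,\te_0)$ uniformly, and $S_n(y,x,\hat\te_n)\to S(y,x,\te_0)=F(y,x)-F(y,x,\te_0)$ uniformly on $\Supp$. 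Since $H_1$ asserts $F\notin\mathcal F^0$, i.e.\ $S(\cdot,\cdot,\te_0)\not\equiv0$, and $S(\cdot,\cdot,\te_0)$ is right-continuous, there is a set of positive $F_{YX}$-measure on which $S(y,x,\te_0)^2\ge c>0$; hence $\int S_n(y,x,\hat\te_n)^2\,d\hat F_n(y,x)\to\int S(y,x,\te_0)^2\,dF_{YX}(y,x)=:\kappa>0$. Multiplying by $n$, $S_n^{CM}=n\int S_n^2\,d\hat F_n\to\infty$ in probability, so $P(S_n^{CM}>\varepsilon)\to1$ for every fixed $\varepsilon>0$.

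\textbf{Main obstacle.} The delicate step is Part~(i): establishing the \emph{joint} weak convergence of the unrestricted and restricted cdf processes with the correct covariance, because the restricted process passes through the $Z$-estimator $\hat\te_n(\cdot)$ viewed as a random element of $\ell^\infty(\T)$ and then through the (only Hadamard-, not Fréchet-) differentiable map $\te\mapsto F(\cdot\mid\cdot,\te)$, with a further integration against the random measure $\hat F_X$. Controlling the potential non-monotonicity of the pointwise-in-$\tau$ estimated quantile curve is not actually an issue — as noted in the text, $\hat F_n(y\mid x,\hat\te_n)$ is monotone in $y$ by construction — but carefully chaining the FCLT for the $Z$-process (Corollary~\ref{coro}), the functional delta method under Assumption~\ref{ass1}\ref{A18}, and the continuous mapping theorem with a converging integrator, while keeping track of the two additive contributions to $\G_2$, is where essentially all the work lies; this is exactly the content that Lemma~\ref{l2} is set up to absorb.
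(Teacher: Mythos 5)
Your proposal is correct and follows essentially the same route as the paper's proof: the identical decomposition $\sqrt{n}S_n=\nu_n-\nu_n^0$ under $H_0$, joint weak convergence of the two cdf processes via the $Z$-estimator FCLT (Corollary \ref{coro}), Hadamard differentiability and the functional delta method as packaged in Lemma \ref{l2}, and the continuous mapping theorem with the integrator $\hat F_n$ replaced by $F_{YX}$ up to an $o_p(1)$ term. Your Part (ii) is phrased as a uniform-consistency/LLN argument giving $S_n^{CM}=n\int S_n^2\,d\hat F_n\to\infty$, which is the same idea as the paper's observation that the $\sqrt{n}(F-F(\cdot,\cdot,\theta_0))$ drift term makes the statistic of order $n$ under any fixed alternative.
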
 
\noindent  Theorem \ref{t1} ensures distributional convergence of the test statistic $S_n^{CM}$ and further shows that the non-parametric ecdf $\hat{F}_n(y,x)$ and the parametric ecdf $\hat{F}_n(y,x,\hat{\theta}_n)$ differ with probability one under the alternative hypothesis. Hence in case of misspecification, the power of the test statistic $S_n^{CM}$ converges to one as $n$ approaches infinity. 
Based on the generality of Assumption \ref{ass1} and the proof structure in  Supplement \ref{pro1}, the statements from Theorem \ref{t1} can also be extended to semi-parametric quantile regressions models with fixed $q$, $M$ as discussed in Sec.~\ref{sec:test2}. Thus, we have
\begin{corollary}\label{c2}
If Assumption \ref{ass1} is satisfied, then the following statements hold:
\begin{enumerate}[i.)]
\item Under the null hypothesis $H_0$ in \eqref{null3},
\begin{align*}
S_n^{CM,S}\stackrel{d}{\to}\int \left(\G_1(y,x)-\G^S_2(y,x)\right)^2dF_{YX}(y,x),
\end{align*}
where $(\G_1,\G_2^S)$ is a bivariate zero mean Gaussian processes with
\begin{align*}
&\G_2^S(y,x):=\int\G_2^{S^+}(y,x^*)\one_{\{x^*\leq x\}}dF_X(x^*)+\int F(y\,\mid \, x^*)\one_{\{x^*\leq x\}}d\G_1(\infty,x^*),
\end{align*}
where $\G_2^{S^+}(y,x)$ is the limiting Gaussian process of $\sqrt{n}\left(\hat{F}_n^S(y\,\mid \, x, \hat{\te}_n)-F(y\,\mid \,x,\te)\right)\in\ell^\infty(\mathcal{\Supp})$.  Moreover,
\begin{align*}
Cov(\G_1(y_1,x_1),&\G_2^{S}(y_2,x_2))=\\
&\lim_{n \rightarrow \infty} n\ Cov\left(\hat{F}_n(y_1,x_1)-F(y_1,x_1),\hat{F}_n^S(y_2,x_2,\hat{\theta}_n)-F(y_2,x_2,\theta)\right).
\end{align*} 
\item\label{c2ii} Under any fixed alternative, i.e., when the data are distributed according to some cdf $F$ that satisfies the alternative hypothesis ${H}_1$ in \eqref{null3},
\begin{align*}
\lim\limits_{n\to\infty}P(S_n^{CM,S}>\varepsilon)= 1 \text{ for all constants }\varepsilon>0.
\end{align*}
\end{enumerate}
\end{corollary}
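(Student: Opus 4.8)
\noindent The plan is to derive Corollary~\ref{c2} as a direct reading of Theorem~\ref{t1}, the point being that for \emph{fixed} and \emph{finite} $q$ and $M$ the semi-parametric model \eqref{genB} is already a parametric model of the form \eqref{qrmodel}. Concretely, I would set
\[
P(x,\tau):=\bigl(B(\pi_1(x)\mid d_\tau)^\top,\ldots,B(\pi_q(x)\mid d_\tau)^\top\bigr)^\top\in\R^{p},\qquad p:=qM,
\]
and stack the blocks $\theta_j(\tau)$ of \eqref{genB} into $\theta(\tau)\in\R^{p}$; then \eqref{genB} reads $F^{-1}_{Y\mid X}(\tau\mid x)=P(x,\tau)^\top\theta(\tau)$, the estimator \eqref{Bestimate} coincides with the $Z$-estimator \eqref{Zestimator}--\eqref{marginaly} for the moment function $g(Y,X,\theta,\tau)=(\tau-\one\{Y\le P(X,\tau)^\top\theta\})P(X,\tau)$, and $\hat{F}_n^S(y,x,\hat{\theta}_n)$ is precisely the object $\hat{F}_n(y,x,\hat{\theta}_n)$ of Sec.~\ref{sec:test1}. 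Note that this $P$ genuinely depends on $\tau$ through the degree $d_\tau$, which is exactly the generalization Sec.~\ref{sec:test1} was built for. So the proof re-runs that of Theorem~\ref{t1} verbatim under this identification, reading $S_n^{CM,S}$ for $S_n^{CM}$, $\hat{F}_n^S$ for $\hat{F}_n(\cdot,\cdot,\hat{\theta}_n)$, and $(\G_2^{S^+},\G_2^S)$ for $(\G_2^{+},\G_2)$.

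First I would record that Assumption~\ref{ass1} is compatible with $P=B^\cdot$, so the hypothesis of the corollary is not vacuous: B-splines of the (finitely many integer) degrees $d_\tau$ on a fixed uniform knot vector are uniformly bounded, continuous and piecewise polynomial on $[0,1]$ and the $\pi_j$ are continuous, giving Assumption~\ref{ass1}\ref{A11}; the indicator classes $\{(y,x)\mapsto\one\{y\le B(\pi(x),\tau)^\top\theta\}:(\theta,\tau)\in\Theta\times I_l\}$ remain VC-subgraph with a square-integrable envelope, matching Assumption~\ref{ass1}\ref{A17}; and identification together with the non-singular Jacobian (Assumption~\ref{ass1}\ref{A13} and~\ref{A16}) is just the familiar full-rank/uniqueness condition on the spline design, which is what is assumed. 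With Assumption~\ref{ass1} in force, the functional central limit theorem for $Z$-estimator processes (Corollary~\ref{coro}) applies to $\tau\mapsto\hat{\theta}_n(\tau)$; Assumption~\ref{ass1}\ref{A18} and the functional delta method make the restricted spline cdf process $(y,x)\mapsto\sqrt{n}(\hat{F}_n^S(y,x,\hat{\theta}_n)-F(y,x,\theta))$ $F_{YX}$-Donsker; and, by the argument of Lemma~\ref{l2}, this holds jointly with the ecdf process $(y,x)\mapsto\sqrt{n}(\hat{F}_n(y,x)-F(y,x))$ converging to a Brownian bridge, with joint limit $(\G_1,\G_2^{S^+})$. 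Marginalizing over $x$ is a bounded linear operation, so it delivers $\G_2^S$ with the representation in the statement, and the continuous mapping theorem applied to $\psi\mapsto\int\psi^2\,dF_{YX}(y,x)$ yields part (i); the covariance formula is inherited directly from the joint limit.

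For part (ii) I would copy the fixed-alternative argument of Theorem~\ref{t1}(ii): under $H_1$ the pseudo-true value $\theta_0$ solving \eqref{GMM} for the spline moment function remains well defined on all of $\T$, $\hat{F}_n^S(y,x,\hat{\theta}_n)$ converges uniformly to the spline-based cdf $F(y,x,\theta_0)$, and since $F_{Y\mid X}$ lies outside the spline-induced model class, the limit $F(y,x)-F(y,x,\theta_0)$ is nonzero on a set of positive $F_{YX}$-measure; hence $\int(\sqrt{n}S_n^S(y,x,\hat{\theta}_n))^2\,d\hat{F}_n(y,x)\to\infty$ in probability and $P(S_n^{CM,S}>\varepsilon)\to1$ for every $\varepsilon>0$.

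The only step needing care — rather than a genuine obstacle — is the bookkeeping just sketched: confirming that the Donsker property of $\bigcup_{l=1}^L\mathcal{G}_l$ and the Hadamard differentiability of $\theta\mapsto F(\cdot\mid\cdot,\theta)$ survive the substitution $P=B^\cdot$. They do, for an unspectacular reason: with $q$ and $M$ held fixed, the B-spline basis is a finite family of uniformly bounded, continuous, piecewise-polynomial functions, so no new empirical-process phenomenon arises. The genuinely delicate points — envelopes and bracketing numbers for a basis whose dimension diverges with $n$, together with the matching scaling factors $a_n,a_n^*$ — are exactly what is postponed to $S_n^{CM^*}$ and the many-regressors analysis of Sec.~\ref{sec:asymp:semipara}.
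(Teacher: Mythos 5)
Your proposal is correct and matches the paper's own reasoning: the paper gives no separate proof of Corollary~\ref{c2}, but justifies it exactly as you do, by noting that for fixed, finite $q$ and $M$ the spline model is a parametric model of the form \eqref{qrmodel}, so the proof of Theorem~\ref{t1} (via Lemma~\ref{l1}, Corollary~\ref{coro}, Lemma~\ref{l2}, the functional delta method and the continuous mapping theorem) carries over with $\hat{F}_n^S$ and $(\G_2^{S^+},\G_2^S)$ in place of their parametric counterparts. Your additional bookkeeping that Assumption~\ref{ass1} is compatible with the B-spline basis is a sensible elaboration of what the paper leaves implicit.
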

\noindent  In empirical applications, however, it is often common to estimate series terms with smoothing penalty parameters $\lambda_j$ for $j=1,\ldots,p$, since this avoids overfitting the data. Imposing the assumption that the penalty parameters are $\lambda_j=o(n^{1/2})$ for $j=1,\ldots,p$ the penalties can be asymptotically ignored. This indicates that Corollary \ref{c2} is also valid in case of penalized quantile regression \citep{lian2015simultaneous}.

\subsection{Theoretical Properties for the More Powerful Test}\label{sec:asymp:semipara}

In the context of quantile regression with many regressors, i.e.~the number of regressors diverges at a proper rate, we need to introduce some additional notation: Since the dimension $K$ and  the true distribution $F$ of  i.i.d.~samples $(X_i,Y_i)\in\R^{K+1}$ for $i=1,\ldots,n$ can depend on $n$, we consider triangular arrays. For brevity of notation, we omit the index $n$ in the following and we write $P_i=P(X_i,\tau)$ and $P=P(X,\tau)$. Let $\lambda_{min}(A)$ and $\lambda_{max}(A)$ be the smallest and largest eigenvalue of a matrix $A$. By $\rVert b\lVert$ we denote the $L^2$-norm of a vector $b$. Moreover, we set $a_n:={\sqrt{n}}/{\rVert P(\xo)\lVert}$ and $a_n^*:={\sqrt{n}}/{\rVert B(\xo)\lVert}$, respectively.
\\
Imposing the assumptions from \citet{volgushev2019distributed} adapted to quantile regression with quantile-dependent series terms enables us to replace the qf by an appropriate (spline) estimator and thus to derive large sample properties for our third test statistic $S_n^{CM^*}$: 
\begin{assumption}\label{ass2}\text{} 
\begin{enumerate}[i.)]
\item For $p:=qM$, assume that $\left\lVert P_i\right\rVert\leq \xi_p = O(n^a)$ almost surely with $a>0$, and that $\frac{1}{M^*}\leq \lambda_{min}(\E[PP^\top])\leq \lambda_{max}(\E[PP^\top])\leq M^* $ holds uniformly in $n$ and $\tau\in\mathcal{T}$ for some fixed $M^*>0$.\label{A21}
\item The conditional distribution $F_{Y\mid X}(y\mid x)$ is twice differentiable w.r.t.~$y$.
We denote the corresponding derivatives by $f_{Y\mid X}(y\mid x)$ and $f^\top_{Y\mid X}(y\mid x)$. Assume
that $\bar{f}:= \left\lvert\sup_{y,x}f_{Y\mid X}(y\mid x)\right\rvert<\infty$ and $\bar{f^\top}:= \sup_{y,x}\left\lvert f^\top_{Y\mid X}(y\mid x)\right\rvert<\infty$ uniformly in $n$.\label{A22}
\item Assume there exists a constant $f_{min} >$ 0 such that \label{A23} $\inf\limits_{\tau\in\mathcal{T}}\inf\limits_x f_{Y\mid X}(F^{-1}_{Y\mid X}(\tau\mid x)\mid x)\geq f_{min}$.
\item For each $x$, the basis vector $P$ has zeroes in all but at most $r$ consecutive entries, where $r$ is fixed. Moreover,
$\sup_{\tau,x}\E[\mid P^\top \tilde{J}_m(\tau)^{-1}P\mid ]=O(1)$, where $\tilde{J}_m(\tau):=\E[PP^\top f_{Y\mid X}(F^{-1}_{Y\mid X}(\tau\mid x)\mid X) ]$.\label{A24}
\item Assume that $\xi_p^4(\log n)^6=o(n)$ and letting $c_n:=\sup_{\tau,x}\left\lvert F^{-1}_{Y\mid X}(\tau\mid X)-P^\top \hat{\te}_n(\tau)\right\rvert$ with $c_n^2=o(n^{-1/2})$\label{A25}.
\end{enumerate}
\end{assumption} 
\noindent  As mentioned in \citet{volgushev2019distributed}, Assumption \ref{ass2} \ref{A21} claims rescaling in case of B-splines and for linear models with increasing dimension $P(X,\tau)$ to be bounded for all $\tau$. Assumptions \ref{ass2} \ref{A22}.)--\ref{A23}.) are fairly standard. Assumptions \ref{ass2} \ref{A24}.) and \ref{A25}.) imply that for any sequence satisfying $c_n=o(1)$ and that the smallest eigenvalues of the matrix $J_m(\tau)$ are bounded away from zero uniformly in $\tau$ for all $n$. 
Using Theorem 2.4 of \citet{volgushev2019distributed} showing that a standardized version of the quantile series terms process converges to a centered Gaussian process we have
\begin{theorem}\label{t2}
If Assumptions \ref{ass1} and \ref{ass2} are satisfied, then the following statements hold:
\begin{enumerate}[i.)]
\item Under the null hypothesis $H_0$ in \eqref{null3},
\begin{align*}
S_n^{CM^*}\stackrel{d}{\to}\int \left(\G_2(y,x)-\G^{S_M}_2(y,x)\right)^2dF_{YX}(y,x),
\end{align*}
where $(\G_2,\G_2^{S_M})$ are Gaussian processes with zero mean given in  Supplement \ref{sec::A2}. 
\item\label{t2ii} Under any fixed alternative, i.e., when the data are distributed according to some cdf $F$ that satisfies the alternative hypothesis ${H}_1$ in \eqref{null3},
\begin{align*}
\lim\limits_{n\to\infty}P(S_n^{CM^*}>\varepsilon)= 1 \text{ for all constants }\varepsilon>0.
\end{align*}
\end{enumerate}
\end{theorem}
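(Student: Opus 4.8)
The plan is to mirror the architecture of the proof of Theorem~\ref{t1}, with the empirical joint cdf process replaced by the many-regressors spline process and with the limit of the parametric restricted estimator re-used, and I would organize the argument into three steps. \emph{Step 1, a functional central limit theorem for the spline-based conditional cdf.} In the triangular-array regime \eqref{qrmodelwithmanyreg} I would invoke Theorem~2.4 of \citet{volgushev2019distributed}, which under Assumption~\ref{ass2} gives that the standardized quantile series process $\tau\mapsto a_n^*\bigl(P_n(x,\tau)^\top\hat\te_{0_n}(\tau)-P_n(x,\tau)^\top\te_{0_n}(\tau)\bigr)$ converges weakly to a centered Gaussian process in $\ell^\infty(\T)$, uniformly in $x$, while the sieve bias $P_n(x,\tau)^\top\te_{0_n}(\tau)-F^{-1}_{Y\mid X}(\tau\mid x)$ is controlled by $c_n$ with $c_n^2=o(n^{-1/2})$ (Assumption~\ref{ass2}\ref{A25}) and is hence asymptotically negligible at the relevant rate. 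The identity \eqref{e1} expresses the conditional cdf as the functional $F^{-1}_{Y\mid X}(\cdot\mid x)\mapsto\int_0^1\one_{\{F^{-1}_{Y\mid X}(\tau\mid x)\le y\}}\,d\tau$ of the conditional qf; Assumption~\ref{ass2}\ref{A22}--\ref{A23} (twice differentiability of $F_{Y\mid X}$ in $y$, conditional density bounded away from zero along the quantile curve) render this map Hadamard differentiable at the truth, so the functional delta method transports the Gaussian limit to the normalized conditional cdf process $(y,x)\mapsto a_n^*\bigl(\hat F_n^{S_M}(y\mid x,\hat\te_n)-F(y\mid x,\te_0)\bigr)$; integrating over $x$ against $\hat F_X$ and adding the $F_X$-estimation term, exactly as in the construction of $\G_2$ in Theorem~\ref{t1}, then yields the zero-mean Gaussian limit $\G_2^{S_M}$ of the normalized $\hat F_n^{S_M}(y,x,\hat\te_n)$. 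The banded-basis and eigenvalue conditions of Assumption~\ref{ass2}\ref{A24} together with $\xi_p^4(\log n)^6=o(n)$ are what deliver uniformity in $(y,x)$ and tightness in this growing-dimension regime.

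\emph{Step 2, joint convergence, the limit of $\sqrt n\,S_n^*$, and part (i).} A (normalized) limit $\G_2$ of the parametric restricted-estimator process is already available from the proof of Theorem~\ref{t1}; since this process and the spline process of Step~1 admit asymptotically linear (Bahadur-type) representations in terms of the same underlying empirical process, I would obtain their joint weak convergence, with joint tightness inherited from Step~1 and Theorem~\ref{t1} and joint covariance the stated limit of the finite-sample covariance. The scaling factors $a_n,a_n^*$ are precisely those under which each normalized estimator process has a nondegenerate zero-mean Gaussian limit, and $S_n^*=\tfrac1{a_n^*}\hat F_n^{S_M}(y,x,\hat\te_n)-\tfrac1{a_n}\hat F_n(y,x,\hat\te_n)$ is constructed so that the leading deterministic terms cancel under $H_0$ (both terms targeting $F(y,x,\te_0)$ up to the matching normalization), whence $\sqrt n\,S_n^*(\cdot,\hat\te_n)$ converges weakly in $\ell^\infty(\Supp)$ to $\G_2^{S_M}-\G_2$. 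Part~(i) then follows from the continuous mapping theorem: $H\mapsto\int H^2\,d\mu$ is continuous on $\ell^\infty(\Supp)$ for fixed $\mu$, $\hat F_n\to F_{YX}$ uniformly by Glivenko--Cantelli, and an extended continuous-mapping / Slutsky argument gives $S_n^{CM^*}=\int\bigl(\sqrt n\,S_n^*\bigr)^2\,d\hat F_n\stackrel{d}{\to}\int\bigl(\G_2-\G_2^{S_M}\bigr)^2\,dF_{YX}$.

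\emph{Step 3, consistency under fixed alternatives.} Under $H_1$ the many-regressors sieve estimator remains consistent for the true conditional qf (the bound on $c_n$ is a statement about the approximation power of the sieve relative to the true $F^{-1}_{Y\mid X}$, which continues to hold), so $\tfrac1{a_n^*}\hat F_n^{S_M}(y,x,\hat\te_n)$ converges, after normalization and uniformly, to the true joint cdf $F(y,x)$, whereas $\tfrac1{a_n}\hat F_n(y,x,\hat\te_n)$ converges to the pseudo-true parametric cdf $F(y,x,\te_0)$; by the definition of $H_1$ in \eqref{null3} these differ on a set of positive $F_{YX}$-measure, so $\sqrt n\,\lvert S_n^*(\cdot,\hat\te_n)\rvert\to\infty$ there, the integral in $S_n^{CM^*}$ diverges, and $\lim_{n\to\infty}P(S_n^{CM^*}>\varepsilon)=1$ for every $\varepsilon>0$, exactly as for Theorem~\ref{t1}(ii). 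I expect Step~1 to be the main obstacle: pushing the many-regressors quantile series CLT through the non-smooth $\tau$-integral that defines the conditional cdf while keeping the sieve bias negligible at the relevant rate and retaining uniformity over $(y,x)$ with approximating classes that change with $n$; reconciling the two normalizations $a_n$ and $a_n^*$ so that the deterministic parts of $\sqrt n\,S_n^*$ cancel under $H_0$ is the secondary, bookkeeping-heavy difficulty.
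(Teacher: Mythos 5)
Your proposal follows essentially the same route as the paper's own proof: it mirrors the architecture of Theorem \ref{t1}, obtains the spline-based conditional cdf limit from the many-regressors quantile series CLT of Volgushev et al., transports it through the conditional-cdf and joint-cdf maps via Hadamard differentiability and the functional delta method, concludes part (i) by joint weak convergence of the two normalized processes plus the continuous mapping theorem, and proves part (ii) by the same divergence argument as in Theorem \ref{t1}(ii). The only notable difference is that the step you single out as the main obstacle --- pushing the quantile process through the $\tau$-integral defining the conditional cdf --- is dispatched in the paper by directly invoking Volgushev et al.'s Theorem 2.1 and Corollary 4.1, which already give $a_n^*\bigl(\hat{F}_{n}(\cdot\mid x)-F_{Y\mid X}(\cdot\mid x)\bigr)\Rightarrow -f_{Y\mid X}(\cdot\mid x)\tilde{\G}_2^{S_M}$, while the parametric side with increasing dimension is handled by citing Belloni et al. (2019) rather than by re-using the fixed-dimension limit from Theorem \ref{t1} as you propose.
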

\noindent Theorem \ref{t2} ensures that the test $S_n^{CM^*}$ is asymptotically normal and has power in case of misspecification. The convergence statements from Corollary \ref{c2} and Theorem \ref{t2} hold for additive univariate series terms, including, for instance, univariate B-splines with product interacting covariates. In line with \citet{volgushev2019distributed}, we further conjecture that such arguments as those given in the proofs (cf.~Supplement \ref{sec::A1}) can also be applied to multivariate splines and thus in particular to tensor product B-splines considered later in Sec.~\ref{sec:edata}.  Therefore, convergence statements from  Corollary \ref{c2} and Theorem \ref{t2} can be extended to a more general class of (multivariate) splines. Inspired by this observation and our second application, we show empirically that the test statistic $S_n^{CM^*}$ based on tensor product B-splines also yields a reasonable sized testing procedure with large power (cf.~Tables \ref{3d21} and \ref{3d3}, Supplement \ref{secMC}). However, a detailed theoretical investigation of this interesting topic is beyond the scope of this paper and  left for future research.

\section{Bootstrap}\label{sec:bootstrap}
To obtain critical values for our test $S_n^{CM}$, we therefore propose a semi-parametric bootstrap procedure. This procedure is reasonable from a practical point of view, since it avoids to estimate the null distribution directly, including a complex covariance structure. 

\subsection{Semi-Parametric Bootstrap Procedure}\label{sec:bootstrap_procedure}
The idea of our semi-parametric bootstrap is to generate synthetic data that is consistent with the assumptions under the null hypothesis. Since the qf is already known according to our null hypothesis, our bootstrap procedure is based on the principle of inverse sampling transformation, which provides a method to generate samples from arbitrary distributions.  Thus, the bootstrap mimics the distribution of the data under the null hypothesis, even though the data might be generated by an alternative distribution. The procedure works as follows. Let $B$ be the number of bootstrap samples. Then 
\begin{enumerate}[i.)]
\item Draw $B$ independent bootstrap samples of covariates $\{X_{b,i}\mid 1\leq i\leq n \}_{b=1,\ldots, B}$ of size $n$ with replacement from $\{X_i\mid  1\leq i\leq n  \}$.
\item For every $i=1,\ldots,n$ put ${Y}_{b,i}=\hat{F}_n^{-1}(U_{b,i}\mid  {X_{b,i}}, \hat{\theta}_n)$, where $\{U_{b,i}\mid  1\leq i\leq n \}$ is a simulated i.i.d. sequence of standard uniformly distributed random variables.
\item Use the bootstrap data $\{(Y_{b,i},X_{b,i})\mid  1\leq i \leq n\}_{b=1,\ldots,B}$ to calculate $B$ bootstrap versions of the test statistic $S_n^{CM}$ from \eqref{CM-test}, i.e.~for $b=1,\ldots,B$ compute 
\begin{align*}
S_{n,b}^{CM}:=\dint \left(\sqrt{n}S_{n,b}(y_b,x_b,\hat{\te}_n)\right)^2d\hat{F}_n(y_b,x_b).
\end{align*}
\item For $q\in(0,1)$, determine the critical value $\hat{c}_n(q)$ such that
\begin{align*}
\frac{1}{B}\sum\limits_{b=1}^B\one_{\{S_{n,b}^{CM}>\hat{c}_n(q)\}}\stackrel{}{=}q.
\end{align*}
\end{enumerate}
With the bootstrap procedure described above, we can calculate critical values $\hat{c}_n(q)$ for \eqref{CM-test}. Critical values for \eqref{CMS-test} and \eqref{CM-test_spline} can be obtained in the same manner if the test statistic $S_{n,B}^{CM}$  is replaced by its counterparts, i.e.~$S_{n,B}^{CM,S}$ or  $S_{n,B}^{CM^*}$.

\subsection{Validity of the Bootstrap Procedure}
Finally, according to \citet{rothewied:2013}, we show that the proposed bootstrap procedure computes the correct critical value for our test statistic \eqref{CM-test}. 
This does not require any further assumptions. Assumption \ref{ass1} ensures that the bootstrap consistently estimates the limiting distribution for \eqref{CM-test}. Under the null hypothesis and  any fixed alternative \eqref{null}, the bootstrap critical values can be shown to be bounded in probability. Thus,
\begin{theorem}\label{T3}
Under Assumption \ref{ass1}, the following statements hold true for every $\alpha\in(0,1)$:
\begin{enumerate}[i.)]
\item Under the null hypothesis $H_0$ in \eqref{null3}, we have that 
\begin{align*}
\lim\limits_{n\to\infty}P(S_n^{CM}>\hat{c}_n(\alpha))=\alpha
\end{align*}
\item Under any fixed alternative $H_1$ in \eqref{null3}, we have that 
\begin{align*}
\lim\limits_{n\to\infty}P(S_n^{CM}>\hat{c}_n(\alpha))=1.
\end{align*}
\end{enumerate}
\end{theorem}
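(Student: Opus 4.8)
The plan is to mimic the bootstrap-validity argument of \citet{rothewied:2013}, adapted to the quantile-dependent transformation vector $P(x,\tau)$. The central object is the conditional law, given the data, of the bootstrap statistic $S_{n,b}^{CM}$; I want to show that this conditional law converges in probability (in the relevant metric on laws on $\R$, e.g.~the bounded-Lipschitz metric) to the limiting law $\int(\G_1-\G_2)^2\,dF_{YX}$ identified in Theorem \ref{t1}\ref{t1i}. The key observation making this possible is that the bootstrap data $\{(Y_{b,i},X_{b,i})\}$ are, conditionally on the original sample, genuinely i.i.d.\ from the distribution with marginal $\hat F_X$ and conditional cdf $\hat F_n(\cdot\mid\cdot,\hat\theta_n)$; so the bootstrap world is an ``exact null'' world whose generating cdf is $\hat F_n(y,x,\hat\theta_n)$, a consistent estimator of $F(y,x,\theta_0)$ under $H_0$ and of the pseudo-true $F(y,x,\theta_0)$ under $H_1$. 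First I would verify that Assumption \ref{ass1} continues to hold (in probability, uniformly) for this estimated data-generating process: the moment function $g$, the map $G$, its derivative $\dot G$, the Donsker property of $\mathcal G_l$ and the Hadamard differentiability of $\theta\mapsto F(\cdot\mid\cdot,\theta)$ are all preserved because $\hat F_n(\cdot\mid\cdot,\hat\theta_n)$ lies in the model class $\mathcal F$ by construction and $\hat\theta_n\to\theta_0$ uniformly on $\T$.

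Second, conditionally on the sample I would re-run the entire argument behind Theorem \ref{t1}: the bootstrap empirical process $(y,x)\mapsto\sqrt n(\hat F_{n,b}(y,x)-\hat F_n(y,x,\hat\theta_n))$ converges to a $\hat F_{YX}$-Brownian bridge (bootstrap CLT for i.i.d.\ resampling from a fixed, here estimated, law — e.g.\ via the conditional multiplier/functional CLT of \citet{dudley_2014}), the bootstrap $Z$-estimator $\hat\theta_{n,b}$ obeys the same functional CLT as in Corollary \ref{coro}, and hence the bootstrap analogue of the restricted cdf process in \eqref{con1} converges jointly with the unrestricted one to the same bivariate Gaussian limit $(\G_1,\G_2)$, because the covariance kernel in Theorem \ref{t1}\ref{t1i} depends only on $F_{YX}$ through quantities that $\hat F_n(\cdot,\cdot,\hat\theta_n)$ estimates consistently. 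An application of the continuous mapping theorem then gives $S_{n,b}^{CM}\mid\text{data}\stackrel{d}{\to}\int(\G_1-\G_2)^2\,dF_{YX}$ in probability, and since the limiting distribution is continuous, the bootstrap quantiles converge: $\hat c_n(\alpha)\stackrel{p}{\to} c(\alpha)$, the $(1-\alpha)$-quantile of the limit law. Part i.) follows from Theorem \ref{t1}\ref{t1i}, Slutsky, and the continuity of the limiting cdf at $c(\alpha)$.

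For part ii.), under a fixed alternative the pair $(\hat c_n(\alpha),S_n^{CM})$ must be controlled separately. The bootstrap still resamples from $\hat F_n(\cdot,\cdot,\hat\theta_n)$, which now converges to $F(\cdot,\cdot,\theta_0)$ with $\theta_0$ the pseudo-true value; the argument above still applies verbatim inside the bootstrap world (the bootstrap data satisfy $H_0$ for the model $\mathcal F$ by construction), so $\hat c_n(\alpha)=O_p(1)$. Meanwhile Theorem \ref{t1}, part ii.) gives $S_n^{CM}\to\infty$ in probability, so $P(S_n^{CM}>\hat c_n(\alpha))\to1$; formally, fix $M$ with $\limsup_n P(\hat c_n(\alpha)>M)$ small and note $P(S_n^{CM}>M)\to1$.

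The main obstacle I anticipate is establishing the \emph{conditional} functional CLT for the bootstrap $Z$-estimator process $\tau\mapsto\hat\theta_{n,b}(\tau)$ uniformly over $\tau\in\T$ while the transformation vector $P(x,\tau)$ varies with $\tau$: one must check that the estimated data-generating law inherits, with probability tending to one, the uniform Donsker and uniform-in-$\tau$ non-singularity conditions of Assumption \ref{ass1}\ref{A12}--\ref{A17}, and then invoke the delta-method-for-$Z$-estimators argument of \citet{cher:2013} \emph{in the bootstrap probability space}. The bookkeeping — making precise the mode of convergence ``in probability'' for conditional laws and transferring the continuous-mapping/Hadamard-differentiability steps of Theorem \ref{t1} to the bootstrap — is the part that needs care; everything else parallels \citet{rothewied:2013} closely, and indeed no new assumptions beyond Assumption \ref{ass1} are required, as claimed.
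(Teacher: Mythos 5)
Your proposal is correct and follows essentially the same route as the paper: establish that the conditional law of the bootstrap statistic converges to the limit distribution of Theorem \ref{t1} (the paper packages this as Lemma \ref{l3}, a bootstrap analogue of Lemma \ref{l1} justified via the functional delta method for the bootstrap as in \citet{rothewied:2013}), deduce $\hat{c}_n(\alpha)\to c(\alpha)$ in probability for part i.), and combine boundedness in probability of $\hat{c}_n(\alpha)$ with the divergence of $S_n^{CM}$ from Theorem \ref{t1} ii.) for part ii.). Your write-up is in fact somewhat more explicit than the paper's about the conditional weak-convergence bookkeeping and the continuity of the limit law at $c(\alpha)$, but the underlying argument is the same.
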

\noindent In order to study the behavior of the Cram\'{e}r-von Mises type test statistics $S_n^{CM}, S_n^{CM,S}$ and $S_n^{CM^*}$ in finite samples, we conducted an extensive MC study, whose results are reported in  Supplement \ref{secMC}. Overall, the MC study has shown that our proposed testing procedures are also consistent based on critical values  obtained via the bootstrap procedure described in Sec.~\ref{sec:bootstrap_procedure} and have superior power properties compared with three benchmark tests (cf.~Supplement \ref{sub41}), even in small samples. The testing procedures works for both, univariate and multivariate DGPs (including product interacting or more complex tensor product covariates) and can also test models with quantile-dependent regressors. Even weakly misspecified models are detected in sufficiently large sample sizes.

\section{Empirical Illustrations}\label{sec:emp}
\subsection{Income Disparities Between East and West Germany}\label{subsec51}
In this section, we apply the bootstrap version of the specification test $S_n^{CM^*}$ to conditional income distributions in Germany. We utilize information from the German Socio-Economic Panel \citep[SOEP,][]{wagnerfrickschupp}. More specifically, we consider real gross annual
personal labor income in Germany as defined in \citet{bachsteiner} from 2001 to 2010 as our response $Y$. We deflate the incomes by the consumer price index \citep{StatistischesBundesamt2012}, setting 2010 as our base year. Thus, all
incomes are expressed in real-valued 2010 Euros from here on.
Following the standard literature, we focus on  incomes of males in full-time
employment \citep[see, among others,][]{dustmann2009,card2013} in the age range 20--60. This yielded $7220$ individuals and is the data set that was also used in \citet{klein2015bayesian}. The variables $age$, $origin$ (dummy for East/West Germany) and $year$ are available as covariates, see Table \ref{descr} for a full description of the data. 
\\
To obtain an estimate of the qf, 
we first regress  income on the dummy coded variable $year$ and then performed a linear quantile regression using the variables $age$ or $age^2$ on the residuals. We consider this approach justified since four out of six tests did not reject the null hypothesis that there is no correlation between $age$ and $year$ dummies and $age^2$ and $year$ dummies, respectively. This approach takes into account that income increases at the beginning of employment, peaks in middle age and finally decreases \citep{creedy,luong, klein2015bayesian} . We next conduct a M-M decomposition \citep{machadomata, landmesser}, of the $year$-adjusted dataset conditional on  $origin$. 
\begin{table}[htbp]
\spacingset{1.6}\footnotesize
\centering
\caption{\footnotesize{Description of the German labor income data from 2001 to 2010}}
\def\arraystretch{0.85}
\begin{tabular}{p{1.25cm}p{4.75cm}p{4cm} p{2cm}}
\hline\hline\\[-4.5ex]
\text{ } & \multicolumn{3}{l}{\text{Description}}\\ 
 \cmidrule(lr){1-1}\cmidrule(lr){2-3}\\[-4.5ex]
\text{$Y$} &\text{gross market labor income (in \euro)}, (\text{continuous $1257$ $\ \leq Y\leq 280092$, $\ \text{average}=46641$})&\\[+.5ex]
  \text{$origin$} &\text{indicator for East or West (binary, -1=West (73.8\%), 1=East (26.2\%))}&\\[+.5ex]
    \text{$age$} &\text{age of the male in year (continuous, $20\leq age \leq 60$, average = 38)}&\\[+.5ex]
      \text{$years$} &\text{time in years (categorical, $2001\leq year \leq 2010$, 10 years)}&\\[+.5ex]
\hline
\\[-4.5ex]
\text{Sample} & \text{Description}&average (std.) income  & observations\\ 
 \cmidrule(lr){1-1}\cmidrule(lr){2-2}  \cmidrule(lr){3-3} \cmidrule(lr){4-4} \\[-4.5ex] 
$Ger$ &complete sample& $51026$\euro\ \ ($30569$\euro) &$n=7220$ \\[+.5ex]
$West$ &sub-sample ($origin= -1$)& $55141$\euro\ \ ($31494$\euro) &$n=5325$ \\[+.5ex]
$ East$&sub-sample ($origin= 1$)&$39463$\euro\ \ ($24336$\euro) &$n=1895$ \\[+.5ex]
\hline\hline
\end{tabular}
\caption*{Incomes. The table summarizes the descriptive statistics of the German labor income data.}
\label{descr}\vspace{-.75cm}
\end{table}
For the decomposition we assume that the qf of the income $Y$ can be represented as a function of the form $F_{Y\mid X}^{-1}(\tau\mid X)=P(X,\tau)^\top\theta (\tau)$ with $X$ consisting of the variables $age$ or $age^2$. Specifically, we consider here three different linear quantile regression models: The first model describes an entirely linear effect of the regressor $age$ on income for all quantiles $\tau\in(0,1)$, i.e.~$P(X,\tau)=age$ for all $\tau\in (0,1)$. The second models a quadratic influence of age on income for all quantiles $\tau\in(0,1)$, i.e.~$P(X,\tau)=age^2$ for all $\tau\in (0,1)$.  And finally, the third model considers the sum of the regressors $age$ and $age^2$ that are constant for all quantiles $\tau\in(0,1)$, i.e.~$P(X,\tau)=age+age^2$ for all $\tau\in (0,1)$.  
Due to the probability integral transform theorem the sequence 
$P(X,\tau_i)^\top \hat{\theta}_n(\tau_i)$ for $\tau_i\stackrel{i.i.d.}{\sim}U(0,1)$, $i=1,\ldots,n$ constitutes a random sample from the estimated conditional distribution of  $Y$ given the covariates $X$ \citep{machadomata}. In order to obtain the difference between East and West, first, the coefficients for East  ($\hat{\theta}_E(\tau)$) and  West ($\hat{\theta}_W(\tau)$) for $\tau\in\{0.1,0.2,\ldots,0.9\}$ are estimated on the basis of the disjoint subsets of the covariates for East ($X_E$) and West ($X_W$) and the corresponding income in the East ($Y_E$) and West ($Y_W$). Second, we draw  $B\in\mathds{N}$ random samples  $X_E^i$ and $X_W^i$ for $i=1,\ldots,B$ with replacement from the corresponding covariate subsets $X_E$ and $X_W$, respectively to obtain a random sample via the probability integral transform for the  distribution of the income $Y^i_l$, $i=1,\ldots,B$, $l=E,W$. Thus, the estimated income difference $\hat{\Delta}_y$ for incomes in the East $Y_E$/West $Y_W$ can  be decomposed according to M-M into 
\begin{align}\nonumber
\hat{\Delta}_Y&={F}^{-1}_{Y_E\mid X_E}(\tau\mid X_E)-{F}^{-1}_{Y_W\mid X_W}(\tau\mid X_W)\\
&=\frac{1}{B}\sum\limits_{b=1}^B\left(\left(P(X_E^b,\tau)-P(X_W^b,\tau)\right)\hat{\theta}_E(\tau)+\left(\hat{\theta}_E(\tau)-\hat{\theta}_W(\tau)\right)P(X_W^b,\tau)\right),\label{summands_MM}
\end{align}
where the first summand of \eqref{summands_MM} is the explained, while the second summand depicts the unexplained difference. 

\begin{table}[htb]
\spacingset{1.6}\footnotesize
\begin{center}
\caption{\footnotesize{Decomposition of the West/East income differential}}
\bgroup
\def\arraystretch{0.725}
\begin{tabular}{ p{2.2cm}rrrrrrrrr}
 \hline
\hline\\[-3.ex]
\textit{quantile} $\tau$&  $ 0.1$ &$ 0.2$ &$ 0.3$ &$ 0.4$ &$ 0.5$ &$ 0.6$ &$ 0.7$ &$ 0.8$&$ 0.9$   \\ 
\hline\\[-3ex]
$\textit{raw gap}$& 
-35.49 &-32.4& -33.28& -29.06& -28.38& -26.44& -26.21& -26.93& -27.38\\
& \multicolumn{9} {c}{$\textit{age}$ }\\
\cmidrule{2-10}
$\textit{M-M gap}$		&-39.87	&-36.64	& -33.52& -31.62& -31.45&-29.37&-29.68& -28.83& -25.89\\
$\textit{Explained}$		&-1.84	&-3.65	&  -2.06& -2.99 & -2.35 &-2.19 &-0.85 &-0.31& -0.1\\
$\textit{Unexplained }$	&-38.02 &-32.98	& -31.47& -28.63& -29.1 &-27.18&-28.83& -28.52& -25.8\\
$\textit{\%Explained}$			&4.63 	&9.97	&   6.13& 9.45  &7.46   &7.46  &2.87  &1.07& 0.38\\
$\textit{\%Unexplained}$		&95.37 	&90.03 	&  93.87& 90.55 &92.54  &92.54 & 97.13& 98.93& 99.62\\
$\textit{Residuals}$				&4.37 	&4.23 	&   $\mathbf{0.24}$& 2.56  &3.07   &2.93  &3.46  &1.9& $\mathbf{-1.49}$\\
& \multicolumn{9} {c}{$\textit{age}^2$ }\\
\cmidrule{2-10}
$\textit{M-M gap}$		 &-36.41& -38.13& -37.52& -35.49& -31.21& -31.96& -32.26& -31.55& -34.72\\
$\textit{Explained}$	 &-3.07 &  -6.08&  -4.49&  -6.43&  -2.81&   -2.6&  -3.92&  -4.35&  -8.82\\
$\textit{Unexplained }$	 &-33.35& -32.05& -33.04& -29.06&  -28.4& -29.36& -28.34&  -27.2& -25.89\\
$\textit{\%Explained}$	 &8.42  &  15.94&  11.96&  18.12&   8.99&   8.13&  12.15&   13.8&  25.41\\
$\textit{\%Unexplained}$&91.58 &  84.06&  88.04&  81.88&  91.01&  91.87&  87.85&   86.2&  74.59\\
$\textit{Residuals}$	 &$\mathbf{0.92}$  &   5.73&   $\mathbf{4.24}$&   6.44&   2.83&   5.51&   6.05&   4.62&   7.33\\
& \multicolumn{9} {c}{$\textit{age}$+$\textit{age}^2$  }\\
\cmidrule{2-10}
$\textit{M-M gap}$		 &-33.39& -31.80& -33.16& -30.28& -28.49& -28.90& -27.61& -28.25& -25.69\\
$\textit{Explained }$    &  2.03&   1.55&  -1.44&   -1.5&   0.13&   0.31&   0.33&  -3.09&   1.67\\
$\textit{Unexplained}$   &-35.42& -33.35& -31.72& -28.78& -28.62& -29.21& -27.94& -25.16& -27.36\\
$\textit{\%Explained}$  &  6.09&   4.89&   4.34&   4.94&   0.45&   1.09&   1.19&  10.95&   6.49\\
$\textit{\%Unexplained}$& 93.91&  95.11&  95.66&  95.06&  99.55&  98.91&  98.81&  89.05&  93.51\\
$\textit{Residuals}$     &$\mathbf{ -2.10}$ &$\mathbf{  -0.61}$ &$\mathbf{  -0.12}$ &$\mathbf{   1.22}$ &$\mathbf{   0.11}$ &$\mathbf{   2.45}$ &$\mathbf{   1.40}$ &$\mathbf{   1.32}$ &$\mathbf{  -1.69}$ \\[.5ex]
\hline
 \hline
\end{tabular}
\egroup
\caption*{\spacingset{1.0}\footnotesize Incomes. The covariates used for the quantile
regressions are $age$ (rows 4-9), $age^2$ (rows 11-16) and the sum of these two variables (rows
18-23). The second row $raw\ gap$ depicts the observed income gap between East and West. Remaining rows show three different M-M decompositions using $age$, $age^2$ and $age+age^2$ as covariates for the quantile regression models. The rows $\textit{M-M gap}$ are the estimated gap of the income difference. The quantiles $\tau$ range from $0.1$ to $0.9$. The number of bootstrap replications is equal to $2500$. All numbers are in percent. Totals may not sum exactly to 100\% due to rounding.}
\label{highloweduc}\vspace{-.8cm}
\end{center}
\end{table}
\noindent Table \ref{highloweduc} summarizes results from the counterfactual analysis described above.  The covariates used for the quantile regressions are $age$ (rows 4--9), $age^2$ (rows 11--16) and the sum of these two variables (rows 18--23). The results suggest that there is a significant income gap between East and West Germany over the period considered, which is particularly striking in the first line, where the observed income differences ranges from $26.21\%$ to $35.49\%$. However, the income difference between the smallest quantile $\tau=0.1$ and the largest $\tau=0.9$ decreases by about eight percent. It cannot be assumed that the model is sufficiently well specified by a single covariate $age$ or $age^2$ for all quantiles due to high residuals ($4.37$ for $\tau=0.1$ and $7.33$ for $\tau=0.9$), indicating misspecification. However, the covariate $age^2$ seems to be appropriate for the smallest quantile $0.1$ (residual of $0.92$ in Table \ref{highloweduc}), while a linear effect of age to income seems to prevail in higher quantiles ($-1.49$ in Table \ref{highloweduc}).  In contrast, the additive model $age+age^2$ seems to capture the income effect for all quantiles quite well due to moderate residuals (cf.~last row \textit{Residuals} in Table \ref{highloweduc}). For all decompositions it holds, that $age$ and $age^2$ contribute a maximum of $16\%$ to the explanation of the income difference between East and West Germany (except highest quantile in $age^2$, i.e.~$25.41$). Due to the different residuals and the different explanatory power of the income gap between East and West for the quantile regressions based on $age$ or $age^2$, it seems reasonable to assume that $age$ and $age^2$ have different effects for different quantiles. For example, the residual of the $30\%$ quantile of $age$ ($0.24$ in Table \ref{highloweduc}) is about $18$ times smaller than the residual of the corresponding quantile regression using $age^2$ as explanatory variable ($4.24$ in Table \ref{highloweduc}) . It is therefore reasonable that the a linear effect of age dominates in the $\tau=0.3$ quantile. The emerging, more general question is at which quantiles $age$ has a linear or quadratic effect on incomes. This can be answered with the help of our proposed test $S_n^{CM^*}$.

\begin{minipage}[]{0.5\textwidth}
\begin{align*}
\begin{aligned}
\text{S1}:& \ F_{Y\mid X}^{-1}(\tau\mid x)=\begin{cases}
x^\top\te_0\ , \text{ if } 0.1\leq\tau\leq 0.9\\
(x^2)^\top\te_0, \text{ otherwise}
\end{cases}\\
\text{S2}:& \ F_{Y\mid X}^{-1}(\tau\mid x)=\begin{cases}
(x^2)^\top\te_0, \text{ if } 0\leq\tau\leq 0.1\\
x^{\top}\te_0\ , \text{ otherwise}
\end{cases}\\
\end{aligned}
\end{align*}
\end{minipage}
\hfill
\begin{minipage}[]{0.5\textwidth}
\begin{align*}
\begin{aligned}
\text{S3}:& \ F_{Y\mid X}^{-1}(\tau\mid x)=\begin{cases}
x^{\top}\te_0\ , \text{ if } 0\leq\tau\leq 0.9\\
(x^2)^\top\te_0, \text{ otherwise}
\end{cases}\\
\text{S4}:& \ F_{Y\mid X}^{-1}(\tau\mid x)=x^\top\te_0\\
\text{S5}:& \ F_{Y\mid X}^{-1}(\tau\mid x)=(x^2)^\top\te_0
\end{aligned}
\end{align*}
\end{minipage}
\\
For this purpose, we have defined five different model specifications  S1--S5, which should take into account the observations made in Table \ref{highloweduc}.  Specifications S1--S3 describe quadratic dependencies in the upper or lower quantiles. Specification S4 and S5 model a completely linear and quadratic dependence structure in the covariate, respectively.   
\\
The testing procedure is applied to the sub-samples East and West as well as to the complete data set.  We estimate the function $\hat{F}_n^S(y,x,\theta)$ in \eqref{CM-test_spline} by a cubic spline with second order difference penalty, setting the basis dimension to $20$ using the R package \texttt{qgam}.  The smoothing parameter $\lambda$  is estimated using the restricted maximum likelihood (REML) procedure within the package. We then re-estimate the models with  optimized smoothing parameter  and compute our test statistic $S_n^{CM^*}$  for $\tau\in\{0.1, 0.2,\ldots,0.9\}$. Since the sample sizes for East, West and All differ and in order to make the results comparable, we computed the rejection rates of sub-samples of East, West and All of size $n=500, 1000, 1500$. The reason for considering different samples is, similarly to \citet{rothewied:2013}, that consistent specification tests detect also small deviations from the null hypothesis in large samples, so that smaller samples are more appropriate for model comparisons. We repeated this procedure for every sub-sample a total of $501$ times and refer to it as \textit{sub-samplings} in the following.   Table \ref{metatest} summarizes the resulting rejection rates of the test statistic $S_n^{CM^*}$. 
\begin{table}[htbp]
\spacingset{1.6}\footnotesize
\centering
\caption{\footnotesize{Rejection frequencies of the test statistic $S_n^{CM^*}$}}
\footnotesize
\def\arraystretch{0.85}
\begin{tabular}{ p{0.5cm}ccc ccc ccc }
 \hline
\hline\\[-4.5ex]
& \multicolumn{3} {c}{$n=500$ }&\multicolumn{3} {c}{$n=1000$ }&\multicolumn{3} {c}{$n=1500$ }\\
 \cmidrule(lr){2-4}\cmidrule(lr){5-7}  \cmidrule(lr){8-10}\\[-4.5ex] 
&\text{East}& \text{West} & \text{Ger} &\text{East}& \text{West} & \text{Ger} &\text{East}& \text{West} & \text{Ger}   \\ 
 \cmidrule(lr){2-2}\cmidrule(lr){3-3}  \cmidrule(lr){4-4} \cmidrule(lr){5-5}
 \cmidrule(lr){6-6}\cmidrule(lr){7-7}  \cmidrule(lr){8-8} \cmidrule(lr){9-9}\cmidrule(lr){10-10} \\[-4.5ex]  \\[-4.5ex] 
\text{S 1}&0.034& 0.134& 0.132&0.038& 0.329& 0.303& 0.026& 0.553& 0.535\\[.5ex] 
\text{S 2}&0.063& 0.204& 0.164&0.090& 0.517& 0.479& 0.099& 0.755& 0.673\\[.5ex] 
\text{S 3}&0.050& 0.136& 0.094&0.026& 0.353& 0.339& 0.030& 0.551& 0.529\\[.5ex] 
\text{S 4}&0.092& 0.198& 0.158&0.086& 0.449& 0.461& 0.104& 0.745& 0.661\\[.5ex] 
\text{S 5}&0.089& 0.429& 0.387&0.276& 0.880& 0.775& 0.507& 0.966& 0.948\\[.5ex] 
\hline
 \hline
\end{tabular}
\caption*{\spacingset{1.0}\footnotesize Incomes. Shown are  the rejection rates of size $n$ of the specification S1--S5. The number of sub-samplings is $501$ and the critical values are calculated at a significance level of $5\%$  and  for $\tau\in\{0.1,0.2,\ldots,0.9\}$.}
\label{metatest}\vspace{-.75cm}
\end{table} 
\\
From this table we make two observations: First, it can be observed that the model in which age has  a completely quadratic influence on income (S5) provides the worst fit. Also the models with either a completely linear influence or a linear influence in the upper quantiles (S2 and S4) are worse than the models in which the influence is quadratic in the upper quantiles and linear in the lower ones (S1 and S3). Second, the model fits are in general much better for East Germany than for West Germany and for the whole country, whose rejection rates can be interpreted as the weighted average of the two rejection rates. For example, for $n=1500$, the rejection rate of S1 and S3 are even lower than $5\%$ for East Germany, whereas they are larger than $50\%$ for West Germany. This indicates that the conditional income distributions differ significantly between East and West Germany.
\\ 
Finally, Figure \ref{fig1_quant} visualizes the estimated quantiles at $\tau=0.1, 0.5, 0.9$ (from left to the right) and provides further indications of when age might have a quadratic or linear effect. Shown are the results for West (red), East (green) and entire Germany (blue). 
\begin{figure}[htbp]
	\centering
	\caption{\footnotesize{Income quantiles for East/West and entire Germany as functions of $age$}}
	\includegraphics[width=.95\textwidth]{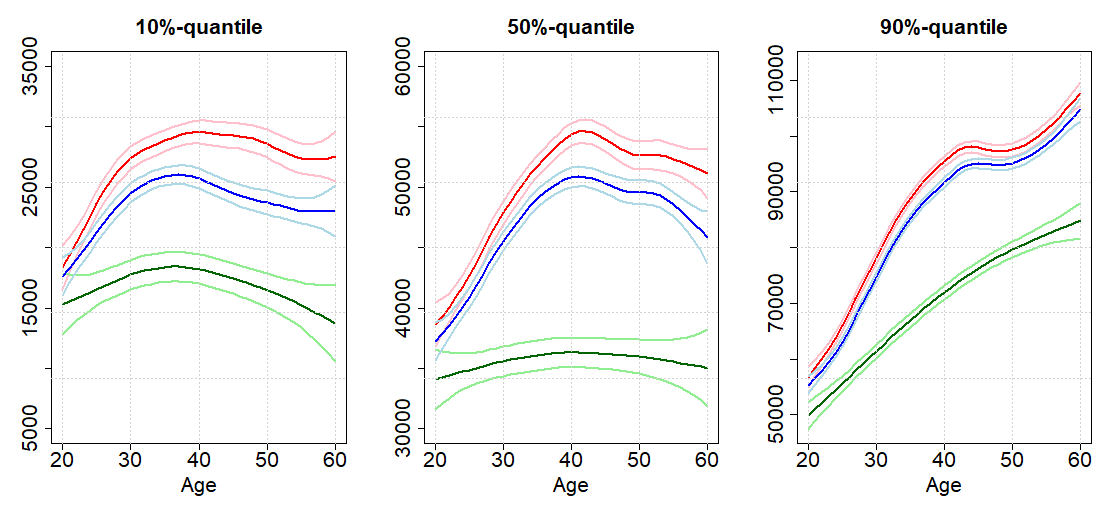}
	\caption*{\spacingset{1.0}\footnotesize Incomes. Figures show the penalized conditional quantile estimates for West (red), East (green) and entire Germany (blue) at $\tau=0.1, 0.5, 0.9$.  The lines shown in lighter colors represent the $95\%$ confidence intervals.}
	\label{fig1_quant}\vspace{-.5cm}
\end{figure}
Overall, our results are in line with the findings of other studies. Based on the different structure of the conditional qfs and  rejections rates for different specifications significant structural differences between East and West Germany can still be assumed \citep{kluge}.

\subsection{Interaction Effects in Modelling Australian Electricity Prices}\label{sec:edata}
In this section, we apply the specification test $S_n^{CM,S}$  to  electricity data from the Australian national electricity market (NEM) in 2019. The NEM is a wholesale market, where generators, distributors and third party participants bid for sale and purchase of electricity one day ahead of transmission \citep[][]{ignatieva2016modeling,smith2018econometric}. We consider  hourly market-wide price $P_i$ from January 1, 2019 to December, 31, 2019, which yields $n=8760$ observations. The market-wide price $P_i$ is the demand-weighted average price across the five regions (\url{www.aemo.com.au}). We correct for the three main drivers of the electricity spot price distribution, namely day of the year $x_{1}$, time of day $x_{2}$ and total market demand $x_{3}$, which is the sum of demand across the five regions in the NEM. Following \citet{smith2019bayesian} we thus choose a regression approach for the electricity data from the Australian NEM even if the problem could be addressed by a time series approach. For convenience, we scale each covariate to the unit interval. 
\\
Our main purposes are to identify i) potential interactions between the covariates on different quantiles of the electricity spot price distributions ii) to statistically investigate if the impact of the covariates $x_1, x_2$ and $x_3$ varies for distinct quantiles and iii) to test which (interaction) effects are statistically significant. In contrast to the previous application in Sec.~\ref{subsec51}, it is not clear a priori how to optimally determine a functional relationship between the three covariates $x_{1}$, $x_{2}$ and $x_{3}$ for distinct quantiles $\tau\in(0,1)$. Therefore, the functional relationship for different quantiles is modeled very flexibly by a spline approach. We employ trivariate P-splines (tensor product B-splines) as proposed by \citet{eilers1996flexible} which combine a multivariate B-spline basis, with a discrete penalty on the basis coefficients. 
\\
In order to investigate our main purposes i)--iii), we assume that the data generating process can be represented by one of the eight different specifications S6--S13. To increase the readability, the notation is geared to the implementation in R, i.e.~$s(\cdot,\tau)$ models the marginal P-spline and $ti(\cdot,\tau)$ solely the interaction effect at the quantile $\tau$. For example, S6 describes a P-spline for the three covariates $x_1, x_2, x_3$ represented by the marginal main effects $s(x_1,\cdot),$ $s(x_2,\cdot),$ $s(x_3,\cdot)$, their mutual bivariate interactions $ti(x_1,x_2,\cdot),\ ti(x_1,x_3,\cdot), \ ti(x_2,x_3,\cdot)$ and their mutual trivariate interaction $ti(x_1,x_2,x_3,\cdot)$. In contrast, specification S7 does not incorporate any interactions between the covariates and thus models the marginals effects only. Specification S12 describes a P-spline that models the marginals and bivariate interaction effects within the $0.25$ and $0.75$-quantile. Specifically, we define

\vspace{-.25cm}
\spacingset{1.4}
\begin{align*}
\begin{split}
\text{S6:} \qquad &F_{Y\mid X}^{-1}(\tau\mid x_1,x_2,x_3):=s(x_1,\tau)+s(x_2,\tau)+s(x_3,\tau)\\&\quad\qquad\qquad\qquad\qquad+ti(x_1,x_2,\tau)+ti(x_1,x_3,\tau)+ti(x_2,x_3,\tau)+ti(x_1,x_2,x_3,\tau)\\
\text{S7:} \qquad &F_{Y\mid X}^{-1}(\tau\mid x_1,x_2,x_3):=s(x_1,\tau)+s(x_2,\tau)+s(x_3,\tau)\\
\text{S8:} \qquad &F_{Y\mid X}^{-1}(\tau\mid x_1,x_2,x_3):=s(x_1,\tau)+s(x_2,\tau)+s(x_3,\tau)+ti(x_1,x_3,\tau)+ti(x_2,x_3,\tau)\\
\text{S9:} \qquad &F_{Y\mid X}^{-1}(\tau\mid x_1,x_2,x_3):=\begin{cases}
S6\ , &\text{ if  } 0.25<\tau< 0.75\\
S7\ , &\text{ otherwise}
\end{cases}
\end{split}
\end{align*}
\begin{align*}
\begin{split}
\text{S10:} \qquad &F_{Y\mid X}^{-1}(\tau\mid x_1,x_2,x_3):=\begin{cases}
S6\ , &\text{ if  } \tau\leq 0.25\\
S7\ , &\text{ otherwise}
\end{cases}\\
\text{S11:} \qquad &F_{Y\mid X}^{-1}(\tau\mid x_1,x_2,x_3):=T4\\
\text{S12:} \qquad &F_{Y\mid X}^{-1}(\tau\mid x_1,x_2,x_3):=\begin{cases}
S6-ti(x_1,x_2,x_3,\tau)\ , &\text{ if  } 0.25<\tau< 0.75\\
S7\ , &\text{ otherwise}
\end{cases}\\
\text{S13:} \qquad &F_{Y\mid X}^{-1}(\tau\mid x_1,x_2,x_3):=\begin{cases}
S6-ti(x_1,x_2,\tau)-ti(x_1,x_2,x_3,\tau),&\text{ if  } 0.25<\tau< 0.75\\
S7\ , &\text{ otherwise}
\end{cases}\\
\end{split}
\end{align*}
\spacingset{1.8}
Similar to the previous section, all estimations were carried out using the  \texttt{qgam} package in R. As before we used REML to optimize the smoothing parameters. Due to the extreme skew in electricity prices, we follow previous authors and set $Y_i=log(P_i+17)$. This avoids a negative dependent variable $Y_i$ since the minimum observed price in our data is $-\$15.78$. For the application of our test, we set $n\in\{500,1000,2000\}$ and $\tau\in\{0.02,0.04,\ldots,0.98\}$. 
 The number of sub-samples is $501$ and the critical values are calculated at a significance level of $5\%$. To ensure comparability of rejection rates for different $n$ and since we include multivariate interaction effects (cf.~S6 and S8--S13), we set the number of knots to $5$.  The rejection rates of the specification test $S_n^{CM*}$ are listed in Table \ref{tabstrom}. 
 \vspace{-.35cm}
\begin{table}[htbp]
\spacingset{1.5}\footnotesize\centering
\caption{\footnotesize Rejection rates of the test statistic $S_n^{CM^*}$}
\begin{tabular}{ p{1.0cm}cc ccc cc c}
 \hline
\hline\\[-4.5ex]
&\text{S6}& \text{S7}&\text{S8}&\text{S9}  & \text{S10}  & \text{S11} & \text{S12}  & \text{S13}    \\ 
\hline\\[-4.5ex]
\text{n=500}  &0.050 & 0.115& 0.065& 0.086& 0.043& 0.086& 0.058& 0.079 \\[.ex] 
\text{n=1000} &0.089 & 0.338& 0.300& 0.178& 0.185& 0.135& 0.218& 0.224 \\[.ex]
\text{n=2000} &0.228 & 0.811& 0.748& 0.256& 0.237& 0.445& 0.713& 0.764\\[.ex]
\hline
 \hline
\end{tabular}
\caption*{\footnotesize Electricity prices. The table shows the sub-sample rejection rates of size $n$ of the specification S6--S13.}
\label{tabstrom}\vspace{-.6cm}
\end{table} 
From this table we make four observations. First, it can be observed that the rejection rates increase as $n$ increases, which is plausible as our specification test is consistent and also small deviations from the null hypothesis are detected for large sample sizes. In addition, an increase in the rejection rates as $n$ increases could be due to possible structural breaks.  Second, based on the rejection rates for S8--S13 at $n=500$, interaction effects seem to have a significant impact, especially in the lower quantile, i.e.~at $\tau \leq 0.25$. This is particularly reflected in the comparison of the specifications S9 and S10, which differ in the modeling of the upper quantile ($\tau\geq 0.75$) but show similar rejection rates. 
Third, 
we can conclude from the specifications S12 and S13 that the interaction between the day of the year ($x_1$) and the daytime ($x_2)$ has no significant impact to the log electricity prices.
Fourth, specification S7, however, which does not incorporate interaction effects, is rejected at all sample sizes.
\\ 
Figure \ref{fig_S6_Visuals} shows the decomposition of the main and interaction effects at the 90\% quantile at 6:00 p.m.~using  S6. Since the contour lines in the second and third panel  show the presence of interactions between demand and day, we conclude that the relation between the three covariates cannot be fully captured by product interactions based on univariate splines. In addition, different day-demand combinations have a different impact on the market wide price $P_i$. A similar graphical analysis additionally reveals this behavior for the 10\% quantile (see Supplement \ref{app:electricity}). This observation is also confirmed by the higher rejection rates of our specification test when using univariate splines rather than bivariate tensor product B-splines (cf.~Table \ref{tabstrom} and Table \ref{tabstrom_prod} in  Supplement \ref{app:electricity}). 
Overall, we conclude that for a thorough specification of the Australien NEM mutual interaction effects are important. Particularly, there seems to be a complex dependence structure in the lower quantile ($\tau\leq 0.25$)  of log electricity prices, which can be captured by means of the mutual interaction effects. However, the interaction between day of the year and daytime is negligible here.  This might be important for risk management purposes.
\begin{figure}[htb]
	\centering
	\caption{\footnotesize{Estimated main and interaction effects at the 90\% quantile at 6:00 p.m.}}
	\includegraphics[width=.98\textwidth]{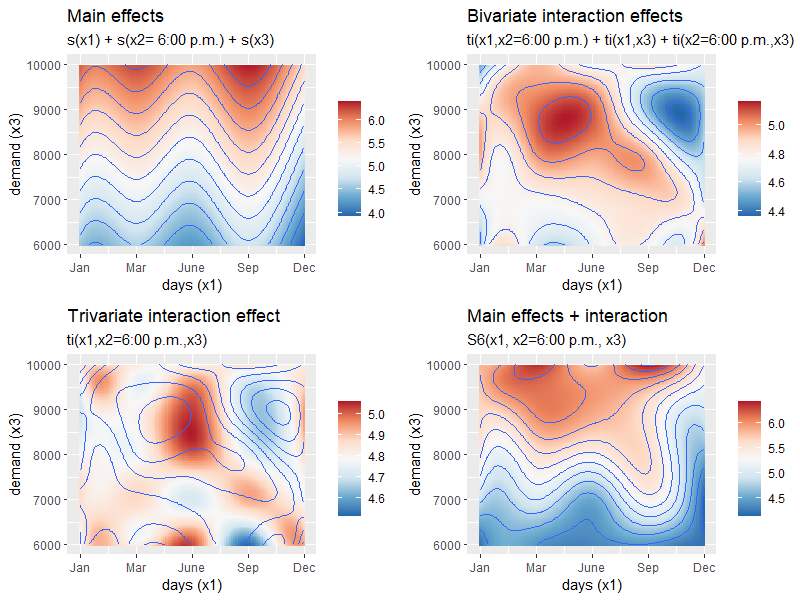}
	\caption*{\spacingset{1.0}\footnotesize Electricity prices. Figures depict the estimated effects of the three covariates on the 90\% quantile of the Australian NEM hourly electricity price distribution for 2019, where the time of the day (x2) is set to 06:00 p.m. The estimation is based on the model specification S6. The first panel (upper left) shows the sum of the univariate main effects of days (x1), x2 and total market demand (x3). The second and third panel illustrate the bivariate and trivariate interaction effects. The overall effect is depicted in the last panel.}
	\label{fig_S6_Visuals}\vspace{-.5cm}
\end{figure}

\section{Conclusion}\label{sec:conclusion}

In this paper, we derived and tested new specification tests for parametric and semi-parametric quantile regression models, which allow the covariates to vary over quantiles in a flexible non-linear way. To improve finite sample properties in the parametric model framework, we replace the non-parametric ecdf by an estimator that is based on an estimate of the quantile regression function using penalized splines. Our MC study illustrates that the proposed method has superior test properties compared with several existing benchmarks from the literature. 
We have illustrated this in two famous examples on income inequality and electricity spot prices: The (nonlinear) effect of age on the income distribution is a well-known example. A detailed investigation of the conditional income distributions between East and West Germany using the M-M decomposition reveals that still income differences between the regions in Germany are present, even more than two decades after the reunification. Similarly, modelling and predicting electricity spot prices is a common issue in economics. We treat the problem in a semi-parametric framework and reveal the importance of interaction effects between demand and time variables, particularly for lower quantiles of the price distributions.
\\
We believe our test statistics make an important contribution in the specification testing literature  since nonlinear or even more complex functional forms of covariates are omnipresent in many applications.

\spacingset{1.3}
\footnotesize
\setlength{\bibsep}{0pt plus 0.1ex}
\bibliography{Bibliography-MM-MC}

\newpage
\spacingset{1.8}
\setcounter{page}{1}
\bigskip
\begin{center}
{\large\bf SUPPLEMENTARY MATERIAL}\label{sec::suppApp}
\end{center}
\begin{description}
\item[Part \ref{sec::proof}:] Proofs and derivations.
\item[Part \ref{secMC}:] Monte Carlo simulation study for $S_n^{CM}$, $S_n^{CM^*}$ and $S_n^{CM,S}$.
\item[Part \ref{app:electricity}:] Additional figures and results to the empirical application on electricity prices of the Australian NEM of the manuscript.
\end{description}

\renewcommand\thesection{\Roman{section}}
\setcounter{table}{0}
\setcounter{figure}{0}
\renewcommand\thetable{\Roman{table}}
\renewcommand\thefigure{\Roman{figure}}
\setcounter{section}{0}
\section{Proofs}\label{sec::proof}

\subsection{Proof of Theorem \ref{t1}}\label{sec::A1}
In order to maintain readability we omit the index $Y\mid X$ for the conditional cdf $F$. To prove Theorem \ref{t1}, we first derive and prove three auxiliary results. Therefore, we define the following three processes for $(y,x)\in\R^{K+1}$ and $(\theta,\tau)\in\Theta\times\T$:
\begin{align}
\nu_n(y,x)&:=\sqrt{n}\left(\hat{F}_n(y,x)-F(y,x)\right)\label{pro1}\\
\gamma_n(\te,\tau)&:=\sqrt{n}\left(\hat{G}_n(\te,\tau)-G(\theta,\tau)\right)\label{pro2}\\
\nu_n^0(y,x)&:=\sqrt{n}\left(\hat{F}_n(y,x,\hat{\te}_n)-F(y,x,\theta_0)\right)\label{pro3}.
\end{align}
Let $\ell^\infty$  denote the set of all uniformly bounded real functions.
\begin{lemma}\label{l1}
Assume Assumption \ref{ass1} holds. For the processes \eqref{pro1} and \eqref{pro2} it holds under the null, that 
\begin{align*}
(\nu_n,\gamma_n)\Rightarrow \tilde{\G}:=(\G_1,\tilde{\G}_2) \text{ in } \ell^\infty(\Supp\times \Theta\times \T),
\end{align*}
where $\tilde{\G}$ is a tight bivariate mean zero Gaussian process.
\end{lemma}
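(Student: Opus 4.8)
\noindent The plan is to read off that both $\nu_n$ and $\gamma_n$ are empirical processes driven by the \emph{same} i.i.d.\ sample $(Y_i,X_i)\sim F_{YX}=:P$, indexed by two function classes, and then to show that the \emph{union} of these two classes is $P$-Donsker; the asserted joint weak convergence, the Gaussianity of the limit and the tightness of $\tilde\G$ then follow at once from the abstract Donsker theorem. Writing $P_n:=n^{-1}\sum_{i=1}^n\delta_{(Y_i,X_i)}$ for the empirical measure, set $\mathcal{C}:=\{c_{y,x}(u,v):=\one_{\{u\le y\}}\one_{\{v\le x\}}\,:\,(y,x)\in\R^{K+1}\}$, so that $\nu_n(y,x)=\sqrt n\,(P_n-P)c_{y,x}$ with $Pc_{y,x}=F(y,x)$, and set $\mathcal{G}:=\{g(\cdot,\theta,\tau)\,:\,(\theta,\tau)\in\Theta\times\T\}$, so that $\gamma_n(\theta,\tau)=\sqrt n\,(P_n-P)g(\cdot,\theta,\tau)$ with $Pg(\cdot,\theta,\tau)=G(\theta,\tau)$ (which vanishes at $\theta_0$ under $H_0$). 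It therefore suffices to prove that $\mathcal{H}:=\mathcal{C}\cup\mathcal{G}$ is $P$-Donsker with a square-integrable envelope.

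First, $\mathcal{C}$ is the class of indicators of lower-left orthants in $\R^{K+1}$; these sets form a VC class (finite intersections of half-spaces) and the class is uniformly bounded by the constant envelope $1$, so $\mathcal{C}$ is $P$-Donsker for every $P$ by the classical multivariate Donsker theorem (see e.g.\ \citet{dudley_2014}). Second, by Assumption \ref{ass1}\ref{A12} the index set decomposes as $\T=\bigcup_{l=1}^L I_l$ with $L<\infty$, each $I_l$ compact and pairwise overlaps singletons, whence $\mathcal{G}=\bigcup_{l=1}^L\mathcal{G}_l$ with $\mathcal{G}_l=\{g(Y,X,\theta,\tau):(\theta,\tau)\in\Theta\times I_l\}$; by Assumption \ref{ass1}\ref{A17} each $\mathcal{G}_l$ is $P$-Donsker and the whole family shares the square-integrable envelope $\tilde{G}$. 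Since a \emph{finite} union of $P$-Donsker classes is again $P$-Donsker (\citep[Section~3.8]{dudley_2014}), $\mathcal{G}$ is $P$-Donsker.

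Combining the two steps, $\mathcal{H}=\mathcal{C}\cup\mathcal{G}$ is a union of two $P$-Donsker classes, hence $P$-Donsker, and $1\vee\tilde{G}\in L^2(P)$ is an envelope for it. Consequently $\{\sqrt n\,(P_n-P)h:h\in\mathcal{H}\}$ converges weakly in $\ell^\infty(\mathcal{H})$ to a tight, mean-zero Gaussian process $\G^{\mathcal H}$ with $\Cov(\G^{\mathcal H}h_1,\G^{\mathcal H}h_2)=P(h_1 h_2)-(Ph_1)(Ph_2)$. Restricting the coordinates to $\mathcal{C}$ (re-indexed by $(y,x)\in\Supp$) and to $\mathcal{G}$ (re-indexed by $(\theta,\tau)\in\Theta\times\T$) and calling these restrictions $\G_1$ and $\tilde\G_2$ gives exactly $(\nu_n,\gamma_n)\Rightarrow\tilde\G=(\G_1,\tilde\G_2)$ in $\ell^\infty(\Supp\times\Theta\times\T)$, with $\tilde\G$ tight and Gaussian; sample continuity of $\tilde\G$ with respect to the intrinsic $L^2(P)$-semimetric is part of the Donsker conclusion.

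There is no deep obstacle here: the argument is essentially an assembly of standard empirical-process facts. The points that require care are (i) working with the union class $\mathcal{H}$ rather than with $\mathcal{C}$ and $\mathcal{G}$ separately, since only then does one obtain the \emph{joint} limit $(\G_1,\tilde\G_2)$ with the correct cross-covariance — this is precisely what later feeds the covariance formula in Theorem \ref{t1}; (ii) checking that the finitely many Donsker pieces $\mathcal{G}_l$ are compatible with a single square-integrable envelope, which is exactly what Assumption \ref{ass1}\ref{A17} supplies together with the finite decomposition of $\T$; and (iii) the usual measurability caveats attached to weak convergence in the non-separable space $\ell^\infty$, handled in the standard way via outer expectations. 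Note also that the convergence is of the \emph{centered} processes and so does not actually rely on $H_0$ beyond the fact that $\theta_0$ is well defined, the null merely identifying $G(\theta_0,\tau)=0$.
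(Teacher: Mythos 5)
Your proposal is correct, and it rests on the same basic ingredients as the paper's proof (each coordinate process is an empirical process of the same i.i.d.\ sample, Assumption \ref{ass1}\ref{A12} and \ref{A17} give the Donsker property of $\bigcup_l\mathcal{G}_l$ with square-integrable envelope, and the indicator class is Donsker by classical theory), but it handles the one genuinely substantive step --- \emph{joint} convergence --- by a different mechanism. You index a single empirical process by the union class $\mathcal{H}=\mathcal{C}\cup\mathcal{G}$, invoke the Donsker theorem for $\mathcal{H}$, and then restrict coordinates; joint tightness and the cross-covariance $P(h_1h_2)-(Ph_1)(Ph_2)$ come out automatically, which is convenient since that cross-covariance is exactly what reappears in Theorem \ref{t1}. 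The paper instead establishes the two marginal Donsker properties and then obtains the joint limit via the Cram\'er--Wold device combined with the fact that Lipschitz (here linear) combinations of Donsker classes are Donsker (van der Vaart 1998, Example 29.20), which supplies the asymptotic tightness that fidi convergence alone would not. Both routes are standard and valid; yours is slightly more self-contained in that it avoids the separate tightness bookkeeping and makes the limiting covariance structure explicit, while the paper's argument keeps the two processes conceptually separate and leans on the Lipschitz-transformation lemma. Your closing remarks --- that the centered convergence does not really use $H_0$ beyond well-definedness of $\theta_0$, and the outer-probability measurability caveat --- are also consistent with how the paper later uses the lemma (Lemma \ref{l2} is stated under the null or a fixed alternative).
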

\begin{proof}
First, we notice that the Donsker property is conserved under the union of Donsker classes. Hence, $\nu_n$ and $\gamma_n(\theta,\tau)$ are $F_{YX}$- Donsker for all $\theta\in\B$ and $\tau\in\T$ with limiting processes $\G_1$ and $\tilde{\G}_2$, respectively. Since arbitrary linear combinations of $\nu_n$ and $\gamma_n$ are Lipschitz and thus Donsker \citep[see][Example 29.20]{vaart_1998}, we conclude by the Cram\'{e}r-Wold theorem that $(\nu_n,\gamma_n)$ converge in distribution to $\tilde{\G}$.
\end{proof}
Before we prove the next lemma we slightly generalize Lemma E.3 from \citet{cher:2013} for our purposes. This modification summarized in the following corollary states conditions under which a $Z$-estimation process satisfies the functional delta method for Gaussian processes. 
\begin{corollary}\label{coro}
Let Assumption \ref{ass1} \ref{A11}.)--\ref{A14}.) be satisfied and $\sqrt{n}\left(\hat{G}_n-G\right)\Rightarrow \tilde{\G}_2$ in $\ell^\infty(\Theta\times I_l)$ for all $l=1,\ldots,L$, where $\tilde{\G}_2$ is a Gaussian process with a.s. uniformly continuous paths on $\Theta\times I_l$, $l=1,\ldots,L$. Further, we assume that the estimator $\hat{\theta}_n(\tau)$ is an approximate $Z$-estimator \eqref{Zestimator} for all $\tau\in I_l$ with $l=1,\ldots,L$. Then 
\begin{align*}
\sqrt{n}\left(\hat{\theta}_n(\cdot)-\theta_0(\cdot)\right)&=-\dot{G}^{-1}_{\theta_0(\cdot),\cdot}\left[ \sqrt{n}(\hat{G}_n-G)(\theta_0(\cdot),\cdot)\right]+o_P(1) \\
&\Rightarrow  -\dot{G}^{-1}_{\theta_0(\cdot),\cdot}\left[ \tilde{\G}_2(\theta_0(\cdot),\cdot)\right] \in \ell^{\infty}(\T). 
\end{align*}
If Assumption \ref{ass1} \ref{A15}.) also holds true, then the paths $\tau\mapsto -\dot{G}^{-1}_{\te_0,\tau}\left[\tilde{\G}_2(\te_0,\tau)\right]$ are a.s.~uniformly continuous on $\T$.
\end{corollary}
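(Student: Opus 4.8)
The plan is to mimic, with two small extensions, the proof of Lemma~E.3 in \citet{cher:2013} on the functional delta method for $Z$-estimation processes: the extensions are (a) that $\mathcal{T}$ is here only covered by finitely many compact pieces $I_1,\dots,I_L$, and (b) that $\hat{\theta}_n(\tau)$ is merely an \emph{approximate} zero, solving \eqref{Zestimator} up to a slack $\eta_n=o_P(n^{-1/2})$. I would work on one fixed piece $I_l$ throughout and glue at the end. The first step is uniform consistency, $\sup_{\tau\in I_l}\|\hat{\theta}_n(\tau)-\theta_0(\tau)\|=o_P(1)$. This follows from the chain $\|\hat{G}_n(\hat{\theta}_n(\tau),\tau)\|\le\inf_{\theta\in\Theta}\|\hat{G}_n(\theta,\tau)\|+\eta_n\le\|\hat{G}_n(\theta_0(\tau),\tau)\|+\eta_n$, the uniform law of large numbers for $\hat{G}_n$ (the Glivenko--Cantelli property that accompanies the Donsker hypothesis on $\mathcal{G}_l$), and the uniform-in-$\tau$ continuity at the origin of the set-valued inverse $G^{-1}(\cdot,\tau)$ in Assumption~\ref{ass1}\ref{A14} (with respect to the Hausdorff distance), which, together with the unique-zero condition of Assumption~\ref{ass1}\ref{A13}, converts smallness of $\hat{G}_n(\hat{\theta}_n(\tau),\tau)$ into smallness of $\hat{\theta}_n(\tau)-\theta_0(\tau)$; Assumption~\ref{ass1}\ref{A11}--\ref{A12} supply the ambient regularity and the finite compact cover.

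The second step is the linearization. Writing $\hat{G}_n(\hat{\theta}_n(\tau),\tau)=o_P(n^{-1/2})$ uniformly in $\tau\in I_l$ (this absorbs $\eta_n$) and adding and subtracting $G$, one has $\sqrt{n}\,\hat{G}_n(\hat{\theta}_n,\tau)=\sqrt{n}(\hat{G}_n-G)(\hat{\theta}_n,\tau)+\sqrt{n}\,G(\hat{\theta}_n,\tau)$. By asymptotic equicontinuity of the $F_{YX}$-Donsker class $\mathcal{G}_l$ together with the uniform consistency just established, the first summand equals $\sqrt{n}(\hat{G}_n-G)(\theta_0(\tau),\tau)+o_P(1)$ uniformly in $\tau$; differentiability of $\theta\mapsto G(\theta,\tau)$ at $\theta_0(\tau)$ and uniform nonsingularity of $\dot{G}_{\theta_0(\tau),\tau}$ (Assumption~\ref{ass1}\ref{A15}--\ref{A16}) turn the second summand into $\dot{G}_{\theta_0(\tau),\tau}[\sqrt{n}(\hat{\theta}_n(\tau)-\theta_0(\tau))]+o_P(1+\sqrt{n}\|\hat{\theta}_n(\tau)-\theta_0(\tau)\|)$. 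Solving for $\sqrt{n}(\hat{\theta}_n-\theta_0)$ and removing the $o_P(\sqrt{n}\|\cdot\|)$ term by the standard rate-bootstrapping argument (first get $\|\hat{\theta}_n-\theta_0\|=O_P(n^{-1/2})$ from a crude form of this same identity, then re-insert) yields the stated expansion $\sqrt{n}(\hat{\theta}_n(\cdot)-\theta_0(\cdot))=-\dot{G}^{-1}_{\theta_0(\cdot),\cdot}[\sqrt{n}(\hat{G}_n-G)(\theta_0(\cdot),\cdot)]+o_P(1)$ in $\ell^\infty(I_l)$.

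The third step passes to the limit and patches. Since $\sqrt{n}(\hat{G}_n-G)\Rightarrow\tilde{\G}_2$ in $\ell^\infty(\Theta\times I_l)$ by hypothesis, composing with the map $H\mapsto -\dot{G}^{-1}_{\theta_0(\cdot),\cdot}[H(\theta_0(\cdot),\cdot)]$, which is bounded and linear on $\ell^\infty$ because $\dot{G}$ is uniformly invertible over $I_l$, and invoking the continuous mapping theorem gives $\sqrt{n}(\hat{\theta}_n-\theta_0)\Rightarrow -\dot{G}^{-1}_{\theta_0(\cdot),\cdot}[\tilde{\G}_2(\theta_0(\cdot),\cdot)]$ in $\ell^\infty(I_l)$; patching across the finitely many $I_l$, which overlap only in single points by Assumption~\ref{ass1}\ref{A12}, upgrades this to $\ell^\infty(\mathcal{T})$. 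For the path-continuity claim, under Assumption~\ref{ass1}\ref{A15} the map $\tau\mapsto\dot{G}_{\theta_0(\tau),\tau}$ is continuous (composition of the continuous $\tau\mapsto(\theta_0(\tau),\tau)$, the continuous derivative, and matrix inversion, the last being continuous on the nonsingular matrices), so the limit process is the image of the a.s.\ uniformly continuous process $\tilde{\G}_2$ under a continuous operator with continuously varying coefficients, hence has a.s.\ uniformly continuous paths on the compact set $\mathcal{T}$.

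I expect the main obstacle to be the equicontinuity replacement in the second step, namely establishing $\sqrt{n}(\hat{G}_n-G)(\hat{\theta}_n(\tau),\tau)=\sqrt{n}(\hat{G}_n-G)(\theta_0(\tau),\tau)+o_P(1)$ \emph{uniformly} in $\tau\in I_l$: this is where the Donsker property of $\mathcal{G}_l$, a uniform consistency rate for $\hat{\theta}_n$, and the approximate-zero slack $\eta_n=o_P(n^{-1/2})$ must be coordinated, and it is precisely the point at which the argument departs from the exact-$Z$-estimator setting treated in \citet{cher:2013}. The remaining steps are bookkeeping with the finite cover and a routine application of the functional delta method.
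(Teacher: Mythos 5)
Your outline is correct in substance, but it takes a genuinely different route from the paper. The paper's proof of Corollary \ref{coro} is a short verification-and-citation argument: it observes that, because the $I_l$ are finitely many compacts overlapping only in singletons (Assumption \ref{ass1} \ref{A12}.)), the limit process $\tilde{\G}_2$ is a.s.\ continuous on $\Theta\times\T$ and all uniformity requirements extend from the pieces to their finite union, so the conditions of Lemma E.3 in \citet{cher:2013} hold verbatim, and the conclusion (including the treatment of the approximate-$Z$ slack $\eta_n=o_P(n^{-1/2})$, which is already built into that lemma) is imported wholesale. You instead reconstruct the content of that lemma from first principles: uniform consistency via the Hausdorff continuity of $G^{-1}(\cdot,\tau)$ at zero, a uniform linearization, and the continuous mapping/delta-method step, followed by gluing over the finite cover. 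This buys self-containedness and makes explicit where \ref{A13}.)--\ref{A16}.) enter (note that your use of \ref{A15}.)--\ref{A16}.) in the expansion exceeds the corollary's nominal list \ref{A11}.)--\ref{A14}.), but matches what the cited lemma's conditions require anyway, since $\dot G^{-1}$ appears in the conclusion), at the cost of length and of having to supply details the citation spares you. In particular, your assertion that $\hat{G}_n(\hat{\theta}_n(\tau),\tau)=o_P(n^{-1/2})$ uniformly in $\tau$ does not follow from the definition \eqref{Zestimator} alone: you must first show $\inf_{\theta\in\Theta}\lVert\hat{G}_n(\theta,\tau)\rVert=o_P(n^{-1/2})$ uniformly, which requires the standard local-surjectivity argument (choose $\theta_n^*(\tau)$ near $\theta_0(\tau)$ with $G(\theta_n^*(\tau),\tau)=-(\hat{G}_n-G)(\theta_0(\tau),\tau)$, available by differentiability and uniform nonsingularity of $\dot G$, and then apply stochastic equicontinuity); also, your "extension (b)" for the approximate zero is not actually an extension, as Lemma E.3 of \citet{cher:2013} already covers it, so the only genuinely new ingredient relative to the citation—the one the paper's proof addresses—is the piecewise decomposition of $\T$, which you handle correctly by patching over the finitely many $I_l$.
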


\begin{proof}
The intersection of $I_{l_1}$ and $I_{l_2}$ is a singleton by assumption for $l_1\neq l_2$.  Thus, the set of possible discontinuities is a null set with respect to the Lebesgue measure. Hence, the limiting process $\tilde{\G}_2$ is a.s. continuous on $\Theta\times \T$ with respect to the Euclidean metric. Further we notice, that by assumption the decomposition of the unit interval is finite. Consequently, the property of uniformity is also applicable to the finite union of compact sets. Hence, the conditions of Lemma E.3 in \citet{cher:2013} are fulfilled.
\end{proof}

\begin{lemma}\label{l2}
Let either the null hypothesis or a fixed alternative and Assumptions \ref{ass1} 
 be true. Then 
\begin{align*}
(\nu_n,\nu_n^0)\Rightarrow\G:=(\G_1,\G_2)\text{ in } \ell^\infty(\Supp\times \Supp),
\end{align*}
where $\G_1$ is the limiting tight bivariate mean zero Gaussian process of $\nu_n$ and 
\begin{align*}
\G_2:=\int F(y\mid x^*)\one_{\{x^*\leq x\}}d\G_1(\infty, x^*)+\int \G_2^+(y,x^*)\one_{\{x^*\leq x\}}dF_X(x^*)
\end{align*}
with $\G_2^+:=-\dot{F}(y\mid x,\te_0)\left[\left[\dot{G}(\te_0(\cdot),(\cdot)\right]^{-1}\tilde{\G}_2(\cdot)\right]$.
\end{lemma}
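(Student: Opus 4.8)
The plan is to write the parametric cdf estimate $\hat{F}_n(y,x,\hat{\te}_n)$ as a Hadamard-differentiable functional of the empirical marginal $\hat{F}_X$ of $X$ and of the $Z$-estimator $\hat{\te}_n$, and then to chase the joint weak convergence of the underlying empirical processes through the functional delta method. The starting point is that, by construction, $\hat{F}_n(y,x,\hat{\te}_n)=\dint_{\R^{K}}F(y\mid x^*,\hat{\te}_n)\one_{\{x^*\leq x\}}d\hat{F}_X(x^*)$ with $F(y\mid x^*,\te)=\int_0^1\one_{\{P(x^*,\tau)^\top\te(\tau)\leq y\}}d\tau$, while $F(y,x,\te_0)=\dint_{\R^{K}}F(y\mid x^*,\te_0)\one_{\{x^*\leq x\}}dF_X(x^*)$. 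Adding and subtracting $\dint_{\R^{K}}F(y\mid x^*,\te_0)\one_{\{x^*\leq x\}}d\hat{F}_X(x^*)$, I would split $\nu_n^0=A_n+B_n$ with
\begin{align*}
A_n(y,x)&:=\dint_{\R^{K}}\sqrt{n}\bigl(F(y\mid x^*,\hat{\te}_n)-F(y\mid x^*,\te_0)\bigr)\one_{\{x^*\leq x\}}d\hat{F}_X(x^*),\\
B_n(y,x)&:=\sqrt{n}\dint_{\R^{K}}F(y\mid x^*,\te_0)\one_{\{x^*\leq x\}}d(\hat{F}_X-F_X)(x^*),
\end{align*}
so that $\lim B_n$ will produce the $\G_1(\infty,\cdot)$-term and $\lim A_n$ the $\G_2^+$-term of $\G_2$.

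First, I would assemble the joint convergence of the building blocks. Lemma \ref{l1} gives $(\nu_n,\gamma_n)\Rightarrow(\G_1,\tilde{\G}_2)$, where $\nu_n$ is understood to be extended to the closure of $\Supp$ — at $y=\infty$ its section equals $\sqrt{n}(\hat{F}_X-F_X)$ and the enlarged indexing class remains Donsker. Corollary \ref{coro} then turns this into $\sqrt{n}(\hat{\te}_n-\te_0)=-\dot{G}^{-1}_{\te_0(\cdot),\cdot}\bigl[\gamma_n(\te_0(\cdot),\cdot)\bigr]+o_P(1)\Rightarrow-\dot{G}^{-1}_{\te_0(\cdot),\cdot}\bigl[\tilde{\G}_2(\te_0(\cdot),\cdot)\bigr]$ jointly with $\nu_n\Rightarrow\G_1$, because evaluation at $\te=\te_0(\tau)$ and the bounded linear operator $-\dot{G}^{-1}$ are continuous. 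For $B_n$ I would observe that $\{x^*\mapsto F(y\mid x^*,\te_0)\one_{\{x^*\leq x\}}:(y,x)\in\Supp\}$ is a uniformly bounded family of functions of bounded variation, hence $F_X$-Donsker; integration by parts together with the continuous mapping theorem then gives $B_n\Rightarrow\int F(y\mid x^*,\te_0)\one_{\{x^*\leq x\}}d\G_1(\infty,x^*)$ jointly with $\nu_n$, which under $H_0$ is $\int F(y\mid x^*)\one_{\{x^*\leq x\}}d\G_1(\infty,x^*)$.

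For $A_n$ I would invoke Assumption \ref{ass1}\ref{A18}: the map $\te\mapsto F(\cdot\mid\cdot,\te)$ is Hadamard differentiable with derivative $\dot{F}(\cdot\mid\cdot,\te_0)[\,\cdot\,]$, so the functional delta method applied to the weak limit of $\sqrt{n}(\hat{\te}_n-\te_0)$ yields $R_n:=\sqrt{n}\bigl(F(\cdot\mid\cdot,\hat{\te}_n)-F(\cdot\mid\cdot,\te_0)\bigr)\Rightarrow-\dot{F}(\cdot\mid\cdot,\te_0)\bigl[\dot{G}^{-1}_{\te_0(\cdot),\cdot}[\tilde{\G}_2(\te_0(\cdot),\cdot)]\bigr]=\G_2^+$ in $\ell^\infty(\Supp)$, jointly with $\nu_n$. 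Splitting $A_n=\int R_n\one_{\{x^*\leq x\}}dF_X+\int R_n\one_{\{x^*\leq x\}}d(\hat{F}_X-F_X)$, the first summand converges to $\int\G_2^+(y,x^*)\one_{\{x^*\leq x\}}dF_X(x^*)$ by the continuous mapping theorem applied to the bounded linear map $R\mapsto\int R(y,x^*)\one_{\{x^*\leq x\}}dF_X(x^*)$, while the second is $o_P(1)$ uniformly in $(y,x)$ by the asymptotic tightness of $R_n$ together with the uniform consistency of $\hat{F}_X$ (for instance via the almost sure representation theorem). Adding the limits of $A_n$ and $B_n$, and observing that both are, up to $o_P(1)$, continuous images of the jointly convergent pair $(\nu_n,\gamma_n)$, the continuous mapping theorem delivers $(\nu_n,\nu_n^0)\Rightarrow(\G_1,\G_2)$ in $\ell^\infty(\Supp\times\Supp)$. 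Under a fixed alternative the same chain of arguments applies verbatim, with $\te_0(\cdot)$ now the pseudo-true value solving \eqref{GMM} and $F(y\mid x^*,\te_0)$ in place of $F(y\mid x^*)$ throughout.

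The step I expect to be the main obstacle is the treatment of $A_n$: one must control the interaction of the weakly convergent random integrand $R_n$ with the random integrating measure $\hat{F}_X$ uniformly over $(y,x)$. The composition $(G,\te)\mapsto\int F(y\mid x^*,\te)\one_{\{x^*\leq x\}}dG(x^*)$ is not continuous on all of $\ell^\infty$, so a naive continuous-mapping argument fails; instead one has to lean on the joint weak convergence from Lemma \ref{l1} and Corollary \ref{coro} together with local continuity — equivalently, on Hadamard differentiability of this composition at $(F_X,\te_0)$ tangentially to the appropriate subspace. A secondary technical point is verifying that the indexing classes obtained by multiplying the (monotone, $L_2$-bounded) maps $F(\cdot\mid x^*,\cdot)$ with the indicators $\one_{\{x^*\leq x\}}$ are Donsker, which is precisely what Assumption \ref{ass1} (in particular \ref{A11} and \ref{A17}--\ref{A18}) is set up to ensure.
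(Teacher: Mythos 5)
Your proposal is correct and follows essentially the same route as the paper: joint convergence of $(\nu_n,\sqrt{n}(\hat{\te}_n-\te_0))$ from Lemma \ref{l1} and Corollary \ref{coro}, the delta method with the Hadamard-differentiable map $\te\mapsto F(\cdot\mid\cdot,\te)$ to obtain $\G_2^+$, and then the delta method for the integration functional $\phi(A,B)=\int A(\cdot,x^*)\one_{\{x^*\leq\cdot\}}dB(x^*)$. Your explicit $A_n+B_n$ decomposition (with the cross term shown to be $o_P(1)$) is just the paper's derivative $\dot{\phi}_{\alpha,\beta}(A,B)$ written out by hand, so the two arguments coincide in substance, with yours spelling out the step the paper delegates to the functional delta method.
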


\begin{proof}
Under either the null hypothesis or a fixed alternative, it follows by standard arguments from Lemma \ref{l1} and Corollary \ref{coro} that 
\begin{align*}
\sqrt{n}\left(\hat{F}_n(\cdot,\cdot)-F(\cdot,\cdot),\hat{\te}_n(\cdot)-\te_0(\cdot)\right)\Rightarrow \left(\G_1(\cdot,\cdot),- \dot{G}^{-1}_{\te_0(\cdot),\cdot}(\tilde{\G}_2(\te_0(\cdot),\cdot)\right)\text{ in }\ell^{\infty}(\Supp)\times\ell^{\infty}(\T).
\end{align*}
Next, it follows from the Hadamard differentiability (cf. Assumption \ref{ass1} \ref{A17}.)) that 
\begin{align*}
\sqrt{n}\left(\hat{F}_n(y\mid x,\hat{\theta}_n)-F(y\mid x,\theta_0)\right)\Rightarrow -\dot{F}(y\mid x,\theta_0)\left[\dot{G}^{-1}_{\te_0(\cdot),\cdot}(\tilde{\G}_2(\te_0(\cdot),\cdot)\right]=:\G_2^+(y,x).
\end{align*}
The statement of the lemma then follows directly from the Hadamard derivative $\dot{\phi}$ of the mapping  
\begin{align*}
\phi(A,B)[x^*]:= \int A(\cdot, x^*)\one_{\{x^*\leq \cdot \}}d B(x^*)
\end{align*}  
given by 
\begin{align*}
\dot{\phi}_{\alpha,\beta}(A,B)[x^*]=\int A(\cdot, x^*)\one_{\{x^*\leq \cdot \}}d\beta(x^*)+\int\alpha(\cdot,x^*)\one_{\{x^*\leq \cdot \}} dB(\cdot, x^*)
\end{align*} and the functional delta method. In particular, for the second component $\G_2$ of the joint limiting process, we have
\begin{align*}
\G_2(y,x)=\int \G_2^+(y,x^*)\one_{\{x^*\leq x\}}dF_X(x^*)+\int F(y\mid x^*)\one_{\{x^*\leq x\}}d\G_1(\infty, x^*).
\end{align*}
\end{proof}

\begin{proof}[of Theorem \ref{t1}] 
We start with the first statement of Theorem \ref{t1}. Under the null hypothesis it holds that $\hat{F}_n(y,x)=F(y,x,\theta_0)+o_p(1)$ for all $(y,x)\in\Supp$. By linearity, we have 
\begin{align*}
S_n^{CM}&=\sqrt{n}\int\left(\hat{F}_n(y,x)-\hat{F}_n(y,x,\hat{\theta})\right)d\hat{F}_n(y,x)\\
&=\int\left(\nu_n(y,x)-\nu_n^0(y,x)\right)^2dF(y,x)+\int\left(\nu_n(y,x)-\nu_n^0(y,x)\right)^2d\left(\hat{F}_n(y,x)-F(y,x)\right).
\end{align*}
From Lemma \ref{l2} we know that $(\nu,\nu_0)\Rightarrow (\G_1,\G_2)=\G$, where $\G$ is a tight bivariate mean zero Gaussian process. Applying the continuous mapping theorem and the Donsker class property yields
\begin{align*}
S_n^{CM}=\int \left(\G_1(y,x)-\G_2(y,x)\right)^2dF(y,x)+o_p(1)
\end{align*}
which claims the statement. 
\\
To show part $ii.)$, we use the fact that under any fixed alternative  $P(F(y,x)\neq F(y,x,\te_0)>0$ due to construction of the alternative hypothesis in $\eqref{null3}$. Thus, 
\begin{align*}
S_n^{CM}=\int \left(\nu_n(y,x)-\nu_n^0(y,x) + \sqrt{n}(F(y,x)-F(y,x,\theta_0)\right)^2d F(y,x)+ {o}_P(1)=\mathcal{O}_P(n),
\end{align*}
which implies that $S_n^{CM}$ is greater than any fixed constant $\varepsilon>0$ and hence, the probability that $S_n^{CM}$ is greater than any $\varepsilon>0$ tends to $1$.
\end{proof}
\vspace{-1cm}
\subsection{Proof of Theorem \ref{t2}}\label{sec::A2}
The proof is shown for $P(X,\tau)=P(X)$. In case of quantile dependent regressors, standard arguments those as given in the proof of Theorem \ref{t1} apply. To prove Theorem $\ref{t2}.i)$, we consider the parametric and the semi-parametric model with increasing dimension. 
The steps from the proof of Theorem \ref{t1} are applied analogously replacing $v_n(y,x)$ and $v_n^0(y,x)$ from \eqref{pro1} and \eqref{pro3} by $v_n^{0}:=a_n\left(\hat{F}_n(y,x,\hat{\te}_n)-F(y,x,\theta_0)\right)$ and  $v_n^{0,S}:=a_n^*\left(\hat{F}^S_n(y,x,\hat{\te}_n)-F(y,x,\theta_0)\right)$ with $a_n={\sqrt{n}}/{\rVert P(\xo)\lVert}$ and $a_n^*={\sqrt{n}}/{\rVert B(\xo)\lVert}$, respectively. By Theorem 1, Corollary 1 in \cite{belloni2019} and Theorem \ref{t1} we have that 
\begin{align*}
   (v_n^0, a_n(\hat{\theta}_n(\cdot)-\theta_0(\cdot)))\Rightarrow \left(\G_2(\cdot,\cdot),- \G_2^{S_M}\right), 
\end{align*}
where $\G_2^{S_M}:=-\dot{G}^{-1}_{\te_0(\cdot),\cdot}\left[\tilde{\G}^{S_M}_2(\te_0(\cdot),\cdot)\right]$. Together with the Hadamard differentiablity in Assumption \ref{ass1} and if 
\begin{align*}
    H(\tau_1,\tau_2,P(\xo)):=\lim\limits_{n\to\infty}\rVert P(\xo)\lVert^{-2}P(\xo)^\top J_m^{-1}(\tau_1)\E[P(\xo)P(\xo)^\top]J_m^{-1}(\tau_2)P(\xo)(\tau_1\wedge\tau_2-\tau_1\tau_2)
\end{align*}
exists for any $\tau_1,\tau_2\in\mathcal{T}$, we know from \cite{volgushev2019distributed} Theorem 2.1 and Corollary 4.1 that for any fixed $\xo$ and initial estimator $\hat{F}^{-1}_{n}(\cdot\mid \xo)$ the expression $a_n^*(\hat{F}_{n}(\cdot\mid \xo)-F_{Y\mid X}(\cdot\mid \xo))\Rightarrow -f_{Y\mid X}(\cdot\mid \xo)\tilde{\G}_2^{S_M}$, where $\tilde{\G}_2^{S_M}$ is a centered Gaussian process with covariance function $H(\tau_1,\tau_2,P(\xo))$. 
\\
In order to show that $(v_n^0,v_n^{0,S})\Rightarrow \G_S:=(\G_2,\G_2^{S_M})$ we use the Hadamard differentiability of the mapping $\phi(A,B)[x^*]:= \int A(\cdot, x^*)\one_{\{x^*\leq \cdot \}}d B(x^*)$  the functional delta method as stated. The continuous mapping theorem completes the proof. 
Part $ii.)$ can be proved analogously to part $ii.)$ of Theorem \ref{t2}.  Moreover,
\begin{align*}
Cov(\G_2(y_1,x_1),&\G_2^{S_M}(y_2,x_2))=\\
&\lim_{n \rightarrow \infty} n\ Cov\left(\hat{F}_n(y_1,x_1,\hat  {\te}_n)-F(y_1,x_1,  {\te}_n),\hat{F}_n^{S_M}(y_2,x_2,\hat{\theta}_n)-F(y_2,x_2,\theta)\right),
\end{align*}
where the true functional vector $\te(\tau)$ depends on $n$ for all $\tau\in\mathcal{T}$.

\subsection{Proof of Theorem \ref{T3}}\label{sec::A3}

In order to prove Theorem \ref{T3} we present the bootstrap version of Lemma \ref{l1} as an auxiliary result. 
\begin{lemma}\label{l3}
Let Assumption \ref{ass1} be true. We define the bootstrap version of the empirical processes \eqref{pro1} and \eqref{pro3} 
\begin{align}
\begin{split}
\nu_{n,B}(y,x)&:=\sqrt{n}\left(\hat{F}_{n,B}(y,x)-\hat{F}_n(y,x,\hat{\te}_n)\right)\\
\nu_{n,B}^0(y,x)&:=\sqrt{n}\left(\hat{F}_{n,B}(y,x,\hat{\te}_n)-\hat{F}_n(y,x,\hat{\te}_n)\right).
\end{split}\label{pro1b}
\end{align}
Then it holds under either the null or a fixed alternative hypothesis that 
\begin{align*}
\left(\nu_{n,B},\nu_{n,B}^0\right)\Rightarrow \G_b,
\end{align*}
where $\G_b:=(\G_{b1},\G_{b2})$ is a tight bivariate mean zero Gaussian process whose distribution function coincides with that of the process $\tilde{\G}$ in Lemma \ref{l1}.
\end{lemma}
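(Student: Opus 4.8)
The plan is to mirror the structure of the proof of Lemma \ref{l1}, transferring the argument from the original empirical measure to the bootstrap empirical measure, and to exploit the fact that the bootstrap data are generated \emph{exactly} from the fitted null model $\hat F_n(\cdot\mid\cdot,\hat\te_n)$ via inverse sampling. First I would record the key consequence of the bootstrap construction described in Sec.~\ref{sec:bootstrap_procedure}: conditionally on the original sample, the pairs $(Y_{b,i},X_{b,i})$ are i.i.d.\ draws whose conditional law of $Y$ given $X$ is precisely $\hat F_n(\cdot\mid X_{b,i},\hat\te_n)$, and whose marginal covariate law is the empirical measure $\hat F_X$. Hence $\hat F_n(y,x,\hat\te_n)$ plays, in the bootstrap world, exactly the role that $F(y,x)$ plays in the real world: it is the population cdf from which the bootstrap sample is drawn. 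This is why the centering terms in \eqref{pro1b} are $\hat F_n(y,x,\hat\te_n)$ rather than $F(y,x,\theta_0)$, and it is what makes the bootstrap processes asymptotically pivotal.

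The core of the argument is a conditional multiplier/empirical-process central limit theorem. For $\nu_{n,B}$ I would invoke the bootstrap functional CLT for empirical processes indexed by a Donsker class (e.g.\ the standard multinomial/exchangeable-weights bootstrap CLT, van der Vaart \&\ Wellner): since the indicator class $\{(u,v)\mapsto \one_{\{u\le y\}}\one_{\{v\le x\}}\}$ is $F_{YX}$-Donsker (it is VC), the bootstrap empirical process $\sqrt n(\hat F_{n,B}-\hat F_n(\cdot,\cdot,\hat\te_n))$ converges, conditionally on the data and in probability, to the same Gaussian limit $\G_{b1}$ as the un-bootstrapped process, whose law is that of $\G_1$ from Lemma~\ref{l1}. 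Here I must use that under $H_0$ or a fixed alternative $\hat F_n(\cdot,\cdot,\hat\te_n)\to F(\cdot,\cdot,\theta_0)$ uniformly, so the covariance kernel of the limit is the stated one and does not depend on whether the true law lies in $\mathcal F$. For the companion component $\nu_{n,B}^0$, I would argue exactly as in Corollary~\ref{coro} and Lemma~\ref{l2}: the bootstrap $Z$-estimator $\hat\te_{n,B}$ satisfies the same approximate-zero condition \eqref{Zestimator} relative to the bootstrap moment empirical average, Assumption~\ref{ass1}\ref{A12}--\ref{A17} give the bootstrap version of the $Z$-process delta method, and Hadamard differentiability of $\theta\mapsto F(\cdot\mid\cdot,\theta)$ (Assumption~\ref{ass1}\ref{A18}) together with the delta method for the integration map $\phi$ propagates this to $\nu_{n,B}^0$. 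Joint convergence of the pair $(\nu_{n,B},\nu_{n,B}^0)$ then follows from the Cram\'er--Wold device, since arbitrary linear combinations are again bootstrap empirical processes over a Donsker class, precisely as in the proof of Lemma~\ref{l1}.

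The last point to nail down is that the limit law \emph{coincides} with that of $\tilde\G$ in Lemma~\ref{l1}. This is where the exact-generation property of the bootstrap pays off: in the bootstrap experiment the ``conditional cdf'' and the ``fitted conditional cdf'' are the \emph{same} object $\hat F_n(\cdot\mid\cdot,\hat\te_n)$, so the bootstrap analogue of the decomposition used in Lemma~\ref{l2} produces a limiting covariance structure built from $\hat F_n(\cdot\mid\cdot,\hat\te_n)$, $\hat F_X$ and the bootstrap score, each of which converges to its population counterpart ($F(\cdot\mid\cdot,\theta_0)$, $F_X$, the score under the law with parameter $\theta_0$) by the uniform consistency of $\hat\te_n$ and the Glivenko--Cantelli property. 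Matching these kernels term by term with those in Lemma~\ref{l1} gives equality in distribution of $\G_b$ and $\tilde\G$. I expect the main obstacle to be the \emph{conditional} (given the data) nature of all these convergences — i.e.\ verifying that the Donsker/delta-method steps hold in outer probability conditionally on the sample, uniformly over the randomness of $\hat\te_n$ and $\hat F_X$, rather than merely unconditionally — and in particular checking that the estimated ingredients entering the bootstrap DGP converge at a rate fast enough that they may be replaced by their limits without disturbing the $\sqrt n$-scale weak limit; this is handled by the standard ``bootstrap continuity'' lemmas under Assumption~\ref{ass1}, but it is the step that requires the most care.
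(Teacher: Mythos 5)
Your proposal is correct and follows essentially the same route as the paper, whose entire proof of Lemma \ref{l3} is the single sentence that the result follows from Lemma \ref{l1} together with the functional delta method for the bootstrap (citing Rothe and Wied, 2013); your write-up simply makes explicit the conditional Donsker step, the bootstrap $Z$-estimator/delta-method step, and the covariance-kernel matching that this citation compresses. The only small caution is that, because the bootstrap responses are generated from the fitted model rather than resampled as pairs, the relevant tool for $\nu_{n,B}$ is a conditional CLT for i.i.d.\ draws from the estimated law $\hat F_n(\cdot,\cdot,\hat\te_n)$ converging to $F(\cdot,\cdot,\theta_0)$, not the multinomial/exchangeable-weights bootstrap CLT you name --- though the rest of your argument already treats the bootstrap world in exactly this way, so this is a matter of citing the right theorem rather than a gap.
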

\begin{proof}
This follows from Lemma \ref{l1} and the functional delta method for the bootstrap \citep{rothewied:2013}.
\end{proof}
\begin{proof}[Proof of Theorem \ref{T3}]
To prove part $i.)$ of Theorem \ref{T3}, let $c(\alpha)$ be the true critical value satisfying $P(S_n^{CM}>c(\alpha))=\alpha+o_P(1)$. Then it follows from Lemma \ref{l3} that $\hat{c}_n(\alpha)=c(\alpha)+o_P(1)$. This implies that $S_n^{CM}$ and $\tilde{S}_n:=S_n^{CM}-(\hat{c}_n(\alpha)-c(\alpha))$ converge to the same limiting distribution as $n$ tends to infinity. Hence, $P(S_n^{CM}>\hat{c}_n(\alpha))=\alpha+o_P(1)$ as claimed.
To prove part $ii.)$, we deduce from Lemma \ref{l3} that the bootstrap critical values are bounded in probability under fixed alternatives. Thus, for any $\varepsilon>0$, there is an $N(\epsilon)$ such that $P(\hat{c}_n(\alpha)>N(\varepsilon))<\varepsilon+o_P(1)$. By Kolmogorv axioms we obtain
\begin{align*}
P(S_n^{CM}\leq \hat{c}_n(\alpha))&=P(S_n^{CM}\leq \hat{c}_n(\alpha),S_n^{CM}\leq N(\epsilon))+P(S_n^{CM}\leq \hat{c}_n(\alpha),S_n^{CM}> N(\epsilon))\\
&\leq P( S_n^{CM}\leq N(\epsilon))+P( S_n^{CM}> N(\epsilon))\\
&\leq \varepsilon + o_P(1),
\end{align*}
where the last inequality can be deduced from Theorem \ref{t1} $ii.)$. 
\end{proof}

\newpage

\section{Monte Carlo Simulation Study}\label{secMC}
Section~\ref{sub41} contains a comprehensive MC simulation study for the test statistics $S_n^{CM}$ and $S_n^{CM^*}$, where the spline part in the latter test statistic is modelled by a penalized B-spline. Wherever it is possible, we also compare our results with existing benchmark tests, for instance, those given in \cite{koenkerxiao} ($KX$), \cite{cher:2002} ($CH$) and \cite{rothewied:2013} ($RW$). Note, in case of quantile-independent covariates, $RW$ is a special case of our proposed test $S_n^{CM}$. In Section~\ref{sec:MCsemi}, we examine power and size properties for the semi-parametric model specification test $S_n^{CM,S}$ where possible interacting covariates are modelled by a tensor product. In Section~\ref{app:mcstudy}, we examine power and size properties for the semi-parametric model specification test $S_n^{CM,S}$ with univariate product interacting covariates. 

\subsection{MC Simulation Study for \texorpdfstring{$S_n^{CM^*}$}{Lg}}\label{sub41}
In this subsection, we show that our test $S_n^{CM^*}$ holds the size level and has superior power properties compared with $S_n^{CM}$ by means of the twelve different data generating processes (DGPs) based on i.i.d.~data $\{(y_i,x_i)\mid 1\leq i\leq n\}$ for $n\in\lbrace 30,50,100,300,500,1000,2000\rbrace$. The different DGPs cover location shift models (LS) and location-scale shift models (LSS) including heteroscedastic errors, both, in a univariate and multivariate setting. In order to assess the quality and validity of our proposed test against existing procedures, we benchmark against the tests of \citet{koenkerxiao}, \citet{cher:2002} and \citet{rothewied:2013} where comparisons are possible (DGPs 1--9). Finally, we also consider  linear models and show that our test detects even weakly misspecified models well. 

For the definition of the twelve DGPs we introduce the following variables: Let $x_0\in U(0,2\pi) $, $x_1\sim Bin(1,0.5)$, $x_2\sim N(0,1)$,  $x_3\in U(0,1)$ , $x_4\in\chi^2(1)$, $u\sim N(0,1)$, $w\sim N(0,0.1)$, $v=(1-2x_1)\cdot v_2^*\cdot 8^{-0.5}$ with $v_2^*\sim \chi^2(2)$, where $Bin(\cdot, \cdot)$, $N(\cdot,\cdot)$, $U(\cdot, \cdot)$ and $\chi^2(\cdot)$ are  Binomial,   Gaussian,   uniform and   chi-square distributions, respectively. 
\paragraph{Data Generating Processes}
DGPs 1--3 
represent the univariate case with one covariate and additive noise. Hereby, DGP $1$ describes a simple LS model, DGP $2$ a more complex LSS model with a linear regressor and, finally, DPG $3$ generates a quadratic LSS model. The multivariate case 
is specified by the DGPs 4--8 that are from \citet{rothewied:2013} and DGP $9$ from \citet{cher:2002}. Here, DGP $4$ is a simple multivariate LS model with normally distributed errors. DGP $2$ is again a simple LS model, but now the errors follow a mixture of a ``positive'' and ``negative'' $\chi^2$ distribution with two degrees of freedom (normalized to have unit variance). DGPs 6--8 are multivariate LSS models where the level of heteroscedasticity increases. DGP $9$ is considered in order to compare our proposed testing procedure with those provided in \citet{cher:2002} and \citet{koenkerxiao}. When $\gamma_1=0$ DGP $9$ is a LS model, otherwise it is a LSS model. DGPs 10--12 
are processes in which the functional form appears predominantly linear. DGP $10$ is implemented by modeling the lower $50\%$-quantile linearly, while the upper $50\%$-quantile is modeled quadratically. Due to the quantile dependence of the regressors,  DGP $10$ cannot be correctly tested with previous tests but with our test $S_n^{CM^\ast}$. DGP 11--12 are appearing mainly linear in the interval $[0,1]$ and exhibit nonlinear growth only at values close to $1$. Assuming a linear model, DGPs of the form 10--12 often impede the detection of misspecification.

\paragraph{Estimation and Further Settings}
Computations have been carried out using the R package \texttt{cobs} \citep{cobs,cobs2}. In what follows, $\hat{F}_n^S$ is modeled by a  B-spline of second order with penalty term $\lambda=1$ and $\sqrt{n}$ knots evaluated for $\tau\in\{0.1,0.2,\ldots,0.9\}$, meeting monotonicity assumptions. The number of MC repetitions is equal to $701$ with $500$ bootstrap replications. The significance level is $0.05$. 

\begin{align}\label{dgp_quant1}
\text{DGP 1:}\ &f_1(x_0):=\frac{x_0}{4}+ 1+ u,\qquad\qquad\!\!\qquad
\text{DGP 2:}\ f_2(x_0):=\frac{x_0}{4} + 1+ u\cdot x_0\nonumber\\
\text{DGP 3:}\ &f_3(x_0):=\frac{x_0^2}{4} + 1+ u\cdot x_0^2,\qquad\qquad
\text{DGP 4:}\ f_4(x_1,x_2):=x_1+x_2+u\nonumber\\
\text{DGP 5:}\ &f_5(x_1,x_2):=x_1+x_2+v,\qquad\qquad
\text{DGP 6:}\ f_{6}(x_1,x_2):=x_1+x_2+(\frac{1}{2}+x_1)u\nonumber\\
\text{DGP 7:}\ &f_{7}(x_1,x_2):=x_1+x_2+(\frac{1}{2}+x_1+x_2^2)^{0.5}u \\
\text{DGP 8:}\ &f_{8}(x_1,x_2):=x_1+x_2+\frac{1}{5}(\frac{1}{2}+x_1+x_2^2)^{1.5}u\nonumber\\
\text{DGP 9:}\ &f_{9}(x_3):=x_3+(1+\gamma_1\cdot x_2)u\nonumber\\
\text{DGP 10:}\ &f_{10}(x_3):=\begin{cases}
\frac{x_3^2}{4}+1+\frac{\epsilon\cdot x_3^2}{2}, &\text{ if } \tau\geq 0.5\nonumber\\
\frac{-x_3^2}{4}+1+u\cdot x_3, &\text{ otherwise}\nonumber
\end{cases}&&\nonumber\\
\text{DGP 11:}\ &f_{11}(x_3):=\sin\left(-\frac{\pi}{2}+x^3_3\right)+w,\qquad\!\!
\text{DGP 12:}\ f_{12}(x_3):=e^{f_5(x_3)}\nonumber
\end{align}

\paragraph{Benchmark Tests}
In order to illustrate the performance of our test, we draw comparisons to common test procedures in the scope of quantile regression. The test proposed in \citet{koenkerxiao} ($KX$), which is based on the Khmaladze transformation, which in turn refers to the Doob-Mayer decomposition of martingales, provides the starting point for quantile regression specification tests. We also consider the enhancement proposed in \citet{cher:2002} ($CH$) and compare our test with $RW$. The aforementioned tests are characterized as follows: 
\begin{itemize}
\item[$\bullet$] The $KX$-test models the conditional qf parametrically by assuming a LS or a LSS model. The regressors are fixed for all quantiles considered and the estimation of non-parameteric sparsity and score functions are required \citep{cher:2002}.
\item[$\bullet$] In order to avoid the latter, $CH$ employs a resampling testing procedure based on $KX$ that results in better power and accurate size. However, this tests still assumes a fully parametrized model under the null hypothesis with quantile-independent regressors. 
\item[$\bullet$] $RW$ propose a testing procedure for a wide range of parametric models that is based on a Cram\'{e}r-von Mises distance between an unrestricted estimate of the joint cdf and the estimate of the joint cdf under the null hypothesis. However, the regressors are assumed to be constant for all quantiles. Thus, the $RW$ test approach equals $S_n^{CM}$ in case that the vector of transformations $P(X,\tau)$ is constant for all $\tau$.
\end{itemize} 
\begin{table}[htbp] \centering 
\captionof{table}{\normalsize{Size and power for RW/$S_n^{CM}$ and $S_n^{CM^*}$}}
\def\arraystretch{0.75}
\begin{tabular}{ p{2.cm}p{1.25cm}p{1.25cm} p{1.25cm} p{1.25cm} p{1.25cm} p{1.25cm} }
 \hline
\hline\\[-3.5ex]
   & \multicolumn{2} {c}{ DGP 1}&\multicolumn{2} {c}{DGP 2}&\multicolumn{2} {c}{DGP 3}\\
  \cmidrule(lr){2-3}\cmidrule(lr){4-5}\cmidrule(lr){6-7}
\rule{0pt}{1.5ex}     $RW / S_n^{CM}$& $10\%$ &$5\%$ &$10\%$ &$5\%$ &$5\%$ &$Power$ \\
\hline\\[-3.5ex]
 $n=30$ 	&0.077  &0.019  &0.093  &0.039  & 0.005& $\textbf{0.032}$ \\
 $n=50$ 	&0.061  &0.016  &0.095  &0.038  & 0.016& $\textbf{0.045}$  \\
 $n=100$ 	&0.056  &0.024  &0.087  &0.033  & 0.024& $\textbf{0.075}$  \\
 $n=300$ 	&0.055  &0.028  &0.078  &0.032  & 0.026& $\textbf{0.312}$  \\
 $n=500$ 	&0.056  &0.016  &0.069  &0.029  & $\textbf{0.010}$& 0.486  \\
 $n=1000$ 	&0.043  &0.016  &0.069  &0.030  & 0.014& 0.883  \\
 $n=2000$ 	&0.064  &0.020  &0.066  &0.030  & 0.014&${1.000}$ \\
\rule{0pt}{3ex}  $S_n^{CM^*}$& $10\%$ &$5\%$ &$10\%$ &$5\%$ &$5\%$ &$Power$\\
\hline\\[-3.5ex]
 $n=30$ 	&0.101  &0.035  &0.089  &0.037  & 0.028 &$\mathbf{0.095}$\\
 $n=50$ 	&0.103  & 0.046 &0.074  &0.027  & 0.037&$\mathbf{0.147}$\\
 $n=100$ 	&0.094  &0.043  &0.112  &0.061  & 0.064&$\mathbf{0.407}$\\
 $n=300$ 	&0.090  &0.043  &0.159  &0.084  & 0.047&$\mathbf{0.988}$\\
 $n=500$ 	&0.086  &0.043  &0.111  &0.058  & $\textbf{0.050}$&${1.000}$\\
 $n=1000$ 	&0.095  &0.048  &0.095  &0.038  & 0.056&${1.000}$\\
 $n=2000$ 	&0.098  &0.049  &0.092  &0.042  & 0.044&${1.000}$\\
 \hline  \hline 
\end{tabular}
\caption*{MC Study. The table compares the test statistics $S_n^{CM^*}$ and $S_n^{CM}$ in terms of size (significance levels $10\%$ and $5\%$) and power (at a $5\%$ evel), where the latter coincides with test statistic of \cite{rothewied:2013} ($RW$) in case of quantile-independent covariates.  The last column named \textit{Power} shows the power analysis while the qf is assumed to follow a linear  LSS model under the null hypothesis.}
\label{tab1_qaunt}\vspace{-.5cm}
\end{table} 
\paragraph{Results}
Table \ref{tab1_qaunt} shows the comparison with $RW$ for all $n$ in the univariate DGPs 1--3 in terms of size and power of the statistics at $10\%,5\%$ levels and a $5\%$ level, respectively. We make three observations.
First, compared with $RW/S_n^{CM}$ our proposed testing procedure $S_n^{CM^*}$ consistently has better size properties. 
Second, the test $S_n^{CM^*}$  manages to maintain the size level when the structure of the error terms is highly heteroscedastic (cf.~$5\%$ column of DGP $3$ in Table \ref{tab1_qaunt}).
Last, the rejection rate for misspecified models (for DGP $3$ we are assuming a linear LSS model in the last column of Table \ref{tab1_qaunt})  in small samples ($n\leq 300$) is approximately three times higher than for the $RW$ test. 
\\
Table \ref{wiedkoenker}  illustrates the comparison with $KX$ for the DGPs 4--8 for $n=100,300$, whereby a LS model is assumed under the null hypothesis. Thus, the results of DGPs $4$ and $5$ reflect size properties, while DGPs 6--8 illustrate the power of $S_n^{CM^*}$ compared with the benchmark tests $RW/S_n^{CM}$ and $KX$ at significance levels $10\%$ and $5\%$ each. We again make three observations.  
First,  our test $S_n^{CM^*}$ holds the size for multivariate models (cf.~DGPs $4$ and $5$ in Table \ref{wiedkoenker}). 
Second, $KX$ has difficulties to detect misspecification when heteroscedasticity is present (cf.~DGP $6-8$ in Table \ref{wiedkoenker}). 
Third, $RW/S_n^{CM}$ usually detects misspecification. However, the rejection rates of the test $S_n^{CM^*}$ are clearly higher compared with those from $RW$ even in small samples (cf.~$n=100$ DGP $7$ of Table \ref{wiedkoenker}).
\begin{table}[htbp]
\centering
\caption{\normalsize{Size and power for DGPs 4--8}}
\def\arraystretch{0.75}
\begin{tabular}{ p{1.95cm}p{1.5cm}p{1.5cm}p{1.5cm}p{1.5cm}p{1.5cm}p{1.5cm} }
 \hline
\hline\\[-3.5ex]
& \multicolumn{2} {c}{ $RW / S_n^{CM}$}&\multicolumn{2} {c}{$KX$}&\multicolumn{2} {c}{$S_n^{CM^*}$}\\ 
\hline\\[-3.5ex]
$n=100$& $10\%$ &$5\%$ &$10\%$ &$5\%$ &$10\%$ &$5\%$ \\
\hline\\[-3.5ex]
 DGP 4 	& 0.093 & 0.048 & 0.067 & 0.035 & {0.122}& \textbf{0.068 }\\
 DGP 5 	& 0.085 & 0.033 & 0.069 & 0.037 & {0.114}& \textbf{0.065 } \\
 DGP 6 	& 0.829 & \textbf{0.669} & 0.082 & \textbf{0.047} & {0.870}& \textbf{0.838 } \\
 DGP 7 	& 0.404 & \textbf{0.239} & 0.097 & \textbf{0.049} & {0.669}& \textbf{0.565 } \\
 DGP 8 	& 0.874 & \textbf{0.746} & 0.055 & \textbf{0.027} & {0.970}& \textbf{0.944 } \\
  \hline\\[-3.5ex]
 $n=300$& $10\%$ &$5\%$ &$10\%$ &$5\%$ &$10\%$ &$5\%$ \\
\hline\\[-3.5ex]
 DGP 4 	& 0.109 & 0.056 & 0.107 & 0.039 & {0.125}&\textbf{0.068 }\\
 DGP 5 	& 0.096 & 0.043 & 0.066 & 0.024 & {0.120}&\textbf{0.056}\\
 DGP 6 	& 1.000 & 0.997 & 0.336 & 0.231 & {1.000}&{1.000}\\
 DGP 7 	& 0.847 & 0.679 & 0.147 & 0.076 & {0.950}&{0.908 }\\
 DGP 8 	& 1.000 & 0.997 & 0.099 & 0.050 & {1.000}&{1.000}\\
 \hline 
 \hline
\end{tabular}
\caption*{MC Study. The table compares size and power (at significance level $5\%$) of the test statistics $RW/S_n^{CM}$, $KX$ and $S_n^{CM^*}$.  All results are one-to-one transferred from \citet{rothewied:2013}. The results of DGPs $4$ and $5$ reflect size properties, while DGPs 6--8 illustrate the power of $S_n^{CM^*}$ compared with the benchmark tests $RW/S_n^{CM}$ and $KX$ at significance levels $10\%$ and $5\%$ each.}
\label{wiedkoenker}\vspace{-.5cm}
\end{table}
Table \ref{kxcher} provides a comparison with the standard testing procedure proposed in \citet{koenkerxiao} and the enhancement from \citet{cher:2002} using $n=100,200,300$ and DGP 9. Results of Table $\ref{kxcher}$ of the benchmark tests $KX$ and $CH$ are taken from \citet{cher:2002}. From this table we conclude: The test $S_n^{CM^*}$ has consistently better finite sample properties compared with the benchmarks $KX$ and $CH$.
\begin{table}[htbp]
\centering
\caption{\normalsize{Size and power for DGP 9}}
\def\arraystretch{0.75}
\begin{tabular}{ p{1.5cm} p{1.1cm} p{1.1cm} p{1.1cm} p{1.1cm} p{1.1cm} p{1.1cm} p{1.1cm} p{1.1cm} p{1.1cm} }
 \hline\hline  \\[-3.5ex]
  & \multicolumn{3} {c}{ $KX$ }&\multicolumn{3} {c}{$CH$}&\multicolumn{3}{c}{$S_n^{CM^*}$}\\ 
 \cmidrule(lr){2-4}\cmidrule(lr){5-7}\cmidrule(lr){8-10}
\rule{0pt}{1.5ex} & \text{Size} &\multicolumn{2} {c}{ \text{Power}}&\text{Size} &\multicolumn{2} {c}{ \text{Power}}&\text{Size} &\multicolumn{2} {c}{ \text{Power}} \\
 \cmidrule(lr){2-2}\cmidrule(lr){3-4} \cmidrule(lr){5-5}\cmidrule(lr){6-7} \cmidrule(lr){8-8}\cmidrule(lr){9-10}
 $\gamma_1=$  & $0$ &$0.2$& $0.5$   & $0$ &$0.2$& $0.5$   & $0$ &$0.2$& $0.5$\\  
\hline\\[-3.5ex]
 $n=100$ &0.101 &0.264 & 0.898 & 0.014 & 0.348 & 0.980 &\textbf{0.050} &\textbf{0.396} &\textbf{0.99} 	\\
 $n=200$ &0.070 &0.480 & 0.988 & 0.052 & 0.752 & 1.000 &0.063&{0.772} &{1.000} 		\\
 $n=300$ &0.062 &0.622 & 0.998 & 0.058 & 0.910 & 1.000 &0.068& {0.930} &{1.000}		\\
  \hline
    \hline
\end{tabular}
\caption*{MC Study. The table compares size and power (at significance level 5\%) of the test statistics $KX$, $CH$ and $S_n^{CM^*}$. $KX$ refers to the specification test suggested by \cite{koenkerxiao}. The more powerful test of \cite{cher:2002} is abbreviated by $CH$.  All results are one-to-one transferred from \citet{cher:2002}. The null hypothesis assumes a LS quantile regression model, i.e.~$\gamma_1=0$.  }
\label{kxcher}\vspace{-.5cm}
\end{table}
Finally, Table \ref{tab0_quant} examines size and power properties for the DGPSs 10--12. Here, in each of the DGPs 10--12, the test $S_n^{CM^*}$ holds the significance level.  Assuming a linear model, misspecification is detected even in small sample sizes. DGP $10$ cannot be tested with previous approaches due to the quantile-dependent regressors. The slightly lower power for DGP 10 is due to the fact that half of the observations actually follow a linear relationship and are thus in line with the null hypothesis.

\begin{table}[htbp] \centering 
\captionof{table}{\normalsize{Size and power for DGPs 10--12}}
\def\arraystretch{0.75}
\begin{tabular}{ p{2.cm}p{1.5cm}p{1.5cm} p{1.5cm} p{1.5cm}p{1.5cm} p{1.5cm} }
 \hline
\hline\\[-3.5ex]
   & \multicolumn{2} {c}{ DGP 10}&\multicolumn{2} {c}{DGP 11} &\multicolumn{2} {c}{DGP 12} \\ 
  \cmidrule(lr){2-3}\cmidrule(lr){4-5} \cmidrule(lr){6-7} 
\rule{0pt}{1.5ex}     $S_n^{CM^*}$& $5\%$ &$Power$&$5\%$ &$Power$&$5\%$ &$Power$\\
\hline\\[-3.5ex]
 $n=30$ 	& 0.068 & 0.177 & 0.014 & 0.055 & 0.009 & 0.069\\
 $n=50$ 	& 0.057 & 0.189 & 0.018 & 0.285 & 0.013 & 0.318\\
 $n=100$ 	& \textbf{0.051} & \textbf{0.192} & \textbf{0.033} & \textbf{0.979} & \textbf{0.023} & \textbf{0.989}\\
 $n=300$ 	& 0.039 & 0.469 & 0.040 & 1.000 & 0.031 & 1.000\\
 $n=500$ 	& 0.042 & 0.519 & 0.039 & 1.000 & 0.029 & 1.000\\
 $n=1000$ 	& 0.046 & 0.658 & 0.034 & 1.000 & 0.035 & 1.000\\
 $n=2000$ 	& 0.042 & 0.743 & 0.041 & 1.000 & 0.049 & 1.000\\
 \hline  \hline 
\end{tabular}
\caption* {MC Study. The table reports size and power  of the test statistic $S_n^{CM^*}$ at a significance level $5\%$. }
\label{tab0_quant}\vspace{-.5cm}
\end{table} 

\subsection{MC Simulation Study for \texorpdfstring{$S_n^{CM, S}$}{SCMS}}\label{sec:MCsemi}
In this section, we show that our test $S_n^{CM, S}$ holds the size level and has considerable power properties using five different functional forms of complex interacting covariate effects, which we denote by DGP 13--17. Motivated by our second application, we use more flexible product tensor B-splines instead of product interactions  using univariate B-splines to model the interacting covariate effects. We consider this approach reasonable for two reasons. First, our second application reveals that the covariates indeed interact in a very complex way. Second, we show empirically in this section that multivariate tensor product B-splines yield satisfactorily testing results. 
For this, we couple DGPs 13--17 with various model specifications, which we refer to as B1--B6 (for two covariates) and T1--T5 (for three covariates). Overall, these settings attempt to mimic the situation of interaction effects as seen in our second real data illustration on electricity prices in Section~\ref{sec:edata} of our manuscript and are based on tensor product B-splines. For a detailed MC simulation study investigating univariate product interacting covariate effects, we refer to \ref{app:mcstudy}. Here, we demonstrate for two-dimensional functions that our test $S_n^CM^*$ holds the size level and has good power properties in case of product interacting covariate effects.

\paragraph{Data Generating Processes} DGPs 13--17 contain two or three interacting covariates and are defined as follows:
\begin{align}\label{Bdgp23}
\text{DGP 13:}\quad  &f_{13}(x_1,x_2):=7\cdot\sin(x_1\cdot x_2)\nonumber+x_1\\
\text{DGP 14:}\quad &f_{14}(x_1,x_2):=\sin(x_1\cdot x_2)+x_1\cdot x_2^2+z_1(x_1,x_2)\nonumber\\
\text{DGP 15:}\quad &f_{15}(x_3,x_4):=1+2x_3+4x_4+70\cos(x_3\cdot x_4)+u\\
\text{DGP 16:}\quad &f_{16}(x_2,x_5,u):=x_2^2\cdot x_5+x_2\cdot u+\cos(x_2 u)+z_1(x_2,x_5)\nonumber\\
\text{DGP 17:}\quad &f_{17}(x_1,x_2,x_3):=x_1+\sin(x_2)\cdot x_3+z_2(x_1,x_2,x_3)\nonumber
\end{align}
\\
Above, let $x_1\sim U(-4,-4)$, $x_2\sim N(5,1)$,  $x_3,x_4\sim U(0,1)$, $x_5\sim U(-10,10)$,  $z_1(x_1,x_2)\sim SN(x_1+x_2^2, 2+\sin(2x_1), x_1/4)$, $z_2(x_1,x_2,x_3)\sim SN(x_2+x_3^2, 5+\sin(x_1)x_3, x_3)$, $u\sim N(0,1)$, where $U(\cdot,\dot)$, $N(\cdot,\cdot)$ and $SN(\cdot,\cdot,\cdot)$ denote uniform, Gaussian and  skew normal distributions, respectively.  To estimate the functional forms in DGPs 13--17, we use cubic $P$-splines.

\paragraph{Model Specifications}
To increase the readability, the notation is geared to the implementation in R, i.e.~$s(\cdot, \tau)$ models the marginal $P$-spline  and $ti(\cdot,\cdot,\tau)$ for the interaction effect at the quantile $\tau$ excluding the basis functions associated with the lower dimensional marginal effects of the marginal smooths. For the case of two covariates, we define the following specifications:
\begin{align*}
\allowdisplaybreaks[4]
\begin{split}
\text{B1:}\qquad  &F_{Y\mid X}^{-1}(\tau\mid x_1,x_2):=s(x_1,\tau)\\
\text{B2:}\qquad &F_{Y\mid X}^{-1}(\tau\mid x_1,x_2):=s(x_1,\tau)+s(x_2,\tau)\\
\text{B3:}\qquad  &F_{Y\mid X}^{-1}(\tau\mid x_1,x_2):=s(x_1,\tau)+s(x_2,\tau)+ti(x_1,x_2,\tau)\\
\text{B4:}\qquad  &F_{Y\mid X}^{-1}(\tau\mid x_1,x_2):=\begin{cases}
s(x_1,\tau)+s(x_2,\tau),& \text{ if  } 0.25<\tau< 0.75\\
s(x_1,\tau)+s(x_2,\tau)+ti(x_1,x_2,\tau),& \text{ otherwise}
\end{cases}\\
\text{B5:}\qquad  &F_{Y\mid X}^{-1}(\tau\mid x_1,x_2):=\begin{cases}
s(x_1,\tau)+s(x_2,\tau),& \text{ if  } 0.25<\tau\\
s(x_1,\tau)+s(x_2,\tau)+ti(x_1,x_2,\tau),& \text{ otherwise}
\end{cases}\\
\text{B6:}\qquad &F_{Y\mid X}^{-1}(\tau\mid x_1,x_2):=\begin{cases}
s(x_1,\tau)+s(x_2,\tau),& \text{ if  } \tau< 0.75\\
s(x_1,\tau)+s(x_2,\tau)+ti(x_1,x_2,\tau),& \text{ otherwise}
\end{cases}\\
\end{split}
\end{align*}
\\
For the case of three covariates, let $s(x,\tau):=s(x_1,\tau)+s(x_2,\tau)+s(x_3,\tau)$. We define the following specifications:
\begin{align*} 
\allowdisplaybreaks[4]
\begin{aligned}
\text{T1:} &\quad F_{Y\mid X}^{-1}(\tau\mid x_1,x_2,x_3):=s(x,\tau)\\
\text{T2}:& \quad F_{Y\mid X}^{-1}(\tau\mid x_1,x_2,x_3):=s(x,\tau)+ti(x_1,x_2,\tau)+ti(x_2,x_3,\tau)\\
\text{T3}:& \quad F_{Y\mid X}^{-1}(\tau\mid x_1,x_2,x_3):=\begin{cases}
s(x,\tau)\ , &\hspace{-1.25cm}\text{if  } 0.25<\tau< 0.75\\
s(x,\tau)+ti(x_1,x_2,\tau)+ti(x_2,x_3,\tau)\ , &\text{otherwise}
\end{cases}\\
\text{T4}:& \quad F_{Y\mid X}^{-1}(\tau\mid x_1,x_2,x_3):=\begin{cases}
s(x,\tau)\ , &\text{if  } 0.25<\tau\\
s(x,\tau)+ti(x_1,x_2,\tau)+ti(x_2,x_3,\tau)\ , &\text{otherwise}
\end{cases}\\
\text{T5}:& \quad F_{Y\mid X}^{-1}(\tau\mid x_1,x_2,x_3):=\begin{cases}
s(x,\tau)\ , &\text{if  } \tau< 0.75\\
s(x,\tau)+ti(x_1,x_2,\tau)+ti(x_2,x_3,\tau)\ , &\text{otherwise}
\end{cases}\\
\end{aligned}
\end{align*}

\paragraph{Estimation and Further Settings} The estimation is carried out in the R-package \texttt{qgam} by \citet{fasiolo2020fast}. To keep the computational costs for our MC simulation study in reasonable limits and to make the results comparable, we use cubic $P$-splines with second order difference penalty and set the number of knots to $5$. We evaluate the test statistic for the quantiles $u\in\{0.02,0.04,\ldots,0.96,0.98\}$.
The number of overall replications is equal to $301$ and the significance level is set to $0.05$.

\begin{table}[htbp]
\centering
\caption{\normalsize{$S_n^{CM,S}$: Size and power for DGPs 13--15 and B1--B6}}
\footnotesize
\def\arraystretch{0.75}
\begin{tabular}{ p{1.cm}p{.8cm}p{.8cm}p{.8cm}p{.8cm}p{.8cm}p{.8cm}p{.8cm}p{.8cm}p{.8cm}p{.8cm}p{.8cm}p{.8cm}}
 \hline
\hline\\[-3.5ex]
& \multicolumn{6} {c}{ DGP 13}&\multicolumn{6} {c}{ DGP 14}  \\
  \cmidrule(lr){2-7}\cmidrule(lr){8-13} \\[-3.5ex]
&\text{B1}& \text{B2}& \textbf{B3} & \textbf{B4}  & \text{B5}  & \text{B6}  &\text{B1}& \text{B2}& \textbf{B3} & \textbf{B4}  & \text{B5}  & \text{B6}  \\ 
 \cmidrule(lr){2-2} \cmidrule(lr){3-3}\cmidrule(lr){4-4}\cmidrule(lr){5-5}\cmidrule(lr){6-6}\cmidrule(lr){7-7}\cmidrule(lr){8-8}\cmidrule(lr){9-9}\cmidrule(lr){10-10}\cmidrule(lr){11-11}\cmidrule(lr){12-12}\cmidrule(lr){13-13}
\text{n=500} & 1.000& 0.050& \textbf{0.033}& \textbf{0.033}& 0.058& 0.033& 1.000& 0.106& \textbf{0.057}& \textbf{0.076}& 0.0764& 0.089\\[1.5ex]
\text{n=1000}& 1.000& 0.088& \textbf{0.058}& \textbf{0.050}& 0.045& 0.067& 1.000& 0.103& \textbf{0.043}& \textbf{0.040}& 0.0565& 0.057\\[1.5ex]
\text{n=2000}& 1.000& 0.150& \textbf{0.079}& \textbf{0.046}& 0.070& 0.121& 1.000& 0.237& \textbf{0.050}& \textbf{0.057}& 0.1296& 0.156\\[1.5ex]
\text{n=3000}& 1.000& 0.392& \textbf{0.083}& \textbf{0.096}& 0.187& 0.292& 1.000& 0.445& \textbf{0.060}& \textbf{0.073}& 0.1761& 0.199\\[1.5ex]
\text{n=5000}& 1.000& 0.655& \textbf{0.046}& \textbf{0.067}& 0.241& 0.492& 1.000& 0.694& \textbf{0.066}& \textbf{0.080}& 0.2126& 0.269\\[1.5ex]
\text{n=6000}& 1.000& 0.867& \textbf{0.076}& \textbf{0.053}& 0.279& 0.613& 1.000& 0.764& \textbf{0.089}& \textbf{0.050}& 0.1927& 0.316\\[.5ex]
\hline\\[-2.5ex]
& \multicolumn{6} {c}{ DGP 15}&\multicolumn{6} {c}{Electricity Data}  \\
  \cmidrule(lr){2-7}\cmidrule(lr){8-13} \\[-3.5ex]
&\text{B1}& \text{B2}& \textbf{B3} & {B4}  & \text{B5}  & \text{B6}  &\text{B1}& \text{B2}& \textbf{B3} & {B4}  & \text{B5}  & \text{B6}  \\ 
 \cmidrule(lr){2-2} \cmidrule(lr){3-3}\cmidrule(lr){4-4}\cmidrule(lr){5-5}\cmidrule(lr){6-6}\cmidrule(lr){7-7}\cmidrule(lr){8-8}\cmidrule(lr){9-9}\cmidrule(lr){10-10}\cmidrule(lr){11-11}\cmidrule(lr){12-12}\cmidrule(lr){13-13}
\text{n=500} & 1.000& 1.000& \textbf{0.027}& 1.000& 1.000& 1.000& 0.661& 0.110 &\textbf{0.086}& 0.096& 0.126& 0.086\\[1.5ex]
\text{n=1000}& 1.000& 1.000& \textbf{0.019}& 1.000& 1.000& 1.000& 0.924& 0.099 &\textbf{0.059}& 0.076& 0.086& 0.077\\[1.5ex]
\text{n=2000}& 1.000& 1.000& \textbf{0.039}& 1.000& 1.000& 1.000& 1.000& 0.199 &\textbf{0.063}& 0.239& 0.206& 0.226\\[1.5ex]
\text{n=3000}& 1.000& 1.000& \textbf{0.019}& 1.000& 1.000& 1.000& 1.000& 0.329 &\textbf{0.057}& 0.435& 0.336& 0.422\\[1.5ex]
\text{n=5000}& 1.000& 1.000& \textbf{0.029}& 1.000& 1.000& 1.000& 1.000& 0.688 &\textbf{0.040}& 0.824& 0.688& 0.804\\[1.5ex]
\text{n=6000}& 1.000& 1.000& \textbf{0.049}& 1.000& 1.000& 1.000& 1.000& 0.864 &\textbf{0.057}& 0.917& 0.794& 0.920\\[1.5ex]
\hline
 \hline
\end{tabular}
\caption*{MC Study. Shown are  the size and power properties for the test statistic $S_n^{CM,S}$.  The columns with bold numbers depict the size of the specification test $S_n^{CM,S}$. The remaining columns represent the power of the test.}
\label{3d21}
\end{table} 

\paragraph{Results} Considering the bivariate case (cf.~Tables \ref{3d21}) we make four  observations. First, the test $S_n^{CM,S}$ holds the size level for the DGPs 13--15 (cf.~Tables \ref{3d21}, bold columns). Second, generally, power properties depend on the degree of misspecification. In the case of a moderately misspecified model (the difference between $B4$ and $B5$ or $B6$ is only that the lower 25\% or upper 75\% quantile contains interaction effects), the test $S_n^{CM,S}$ shows reasonable power properties. The higher the degree of misspecification the higher the rejection rates. This is particularly evident on DGP $15$, where the test always detects misspecification. Third, consistent with our theoretical investigations in Sec.~\ref{asymptotics}, the rejection rate for misspecified models increases with increasing sample size. Fourth, omitted variable bias is always detected (cf.~ Tables \ref{3d21}, column B1 in DGP 13--15).
\begin{table}[htbp]
\centering
\caption{\normalsize{$S_n^{CM,S}$:  Size and power  for DGPs 16--17 and B2, T1--T5}}
\footnotesize
\def\arraystretch{0.75}
\begin{tabular}{ p{1.2cm}p{.8cm}p{.8cm}p{.8cm}p{.8cm}p{.8cm}p{.8cm}p{.8cm}p{.8cm}p{.8cm}p{.8cm}p{.8cm}p{.8cm}}
 \hline
\hline\\[-3.5ex]
& \multicolumn{6} {c}{ DGP 16}&\multicolumn{6} {c}{DGP 17}  \\
  \cmidrule(lr){2-7}\cmidrule(lr){8-13} \\[-3.5ex]
&\text{B2}& \text{T1}& \textbf{T2} & \textbf{T3}  & \text{T4}  & \text{T5}  &\text{B2}& \text{T1}& \textbf{T2} & \textbf{T3}  & \text{T4}  & \text{T5}  \\ 
 \cmidrule(lr){2-2} \cmidrule(lr){3-3}\cmidrule(lr){4-4}\cmidrule(lr){5-5}\cmidrule(lr){6-6}\cmidrule(lr){7-7}\cmidrule(lr){8-8}\cmidrule(lr){9-9}\cmidrule(lr){10-10}\cmidrule(lr){11-11}\cmidrule(lr){12-12}\cmidrule(lr){13-13}
\text{$n=500$}  &0.476 &0.179 &\textbf{0.034} &\textbf{0.037} &0.186 &0.033&0.691& 0.073& \textbf{0.063}& \textbf{0.073}& 0.060& 0.076\\[1.5ex]
\text{$n=1000$} &0.754 &0.332 &\textbf{0.017} &\textbf{0.043} &0.303 &0.040&0.940& 0.083& \textbf{0.073}& \textbf{0.073}& 0.083& 0.073\\[1.5ex]
\text{$n=2000$} &0.898 &0.472 &\textbf{0.008} &\textbf{0.055} &0.458 &0.063&0.998& 0.083& \textbf{0.043}& \textbf{0.069}& 0.080& 0.070\\[1.5ex]
\text{$n=3000$} &0.984 &0.780 &\textbf{0.032} &\textbf{0.055} &0.764 &0.063&1.000& 0.149& \textbf{0.063}& \textbf{0.099}& 0.163& 0.123\\[1.5ex]
\text{$n=5000$} &1.000 &0.852 &\textbf{0.012} &\textbf{0.066} &0.835 &0.070&1.000& 0.179& \textbf{0.069}& \textbf{0.093}& 0.163& 0.096\\[1.5ex]
\text{$n=6000$} &1.000 &0.878 &\textbf{0.017} &\textbf{0.086} &0.889 &0.104&1.000& 0.183& \textbf{0.053}& \textbf{0.086}& 0.179& 0.089\\[1.5ex]
\hline
 \hline
\end{tabular}
\caption*{MC Study. Shown are the size and power properties for the test statistic $S_n^{CM,S}$.  The columns with bold numbers depict the size of the specification test $S_n^{CM,S}$. The remaining columns represent the power of the test.}
\label{3d3}\vspace{-.5cm}
\end{table} 
Considering the multivariate case with three covariates (cf.~DGPs 16--17 and B2, T1--T5  in Table \ref{3d3}), we make the following observations: 
First, the test $S_n^{CM,S}$ holds the size level (cf.~Table \ref{3d3}, bold columns).
Second, as in the bivariate cases, power properties depend on the degree of misspecification. For a moderately misspecified model ($T4$ and $T5$ contain interaction effects only in the lower 25\% or upper 75\% quantile), the test shows reasonable power properties. The higher the degree of misspecification the higher the rejection rates  (columns B2 and T1). Third, due to the curse of dimensions, however, the multivariate case with three covariates requires a larger number of observations $n$ to obtain similar properties as the multivariate case with two covariates.  Last, the omitted variable bias is sufficiently well detected for $n\geq2000$ (cf.~Table \ref{3d3}, column B2).

\subsection{Further Results from the MC Study using Product Interacting Covariates}\label{app:mcstudy}
In this section, we show that our test $S_n^{CM,S}$ holds the size level and has good power properties in case of product interacting covariates. For this, we consider the complex multivariate case with three covariates for the test statistic $S_n^{CM,S}$ (cf.~DGPs 16--17 and B2, T1--T5 in Table \ref{3d3_prod}), where we replace the tensor interaction in T2--T5 by univariate product interactions, i.e. we replace $ti$ by $s(x_1\cdot x_2)$, $s(x_2\cdot x_3)$ and $s(x_1\cdot x_2\cdot x_3)$, respectively. We make the following observations: 

\begin{table}[htbp]
\centering
\caption{\normalsize{Product interaction $S_n^{CM,S}$:  Size and power  for DGPs 16--17 and B2, T2--T5}}
\footnotesize
\def\arraystretch{0.75}%
\begin{tabular}{ p{1.2cm}p{.8cm}p{.8cm}p{.8cm}p{.8cm}p{.8cm}p{.8cm}p{.8cm}p{.8cm}p{.8cm}p{.8cm}p{.8cm}p{.8cm}}
 \hline
\hline\\[-3.5ex]
& \multicolumn{6} {c}{ DGP 16}&\multicolumn{6} {c}{DGP 17}  \\
  \cmidrule(lr){2-7}\cmidrule(lr){8-13} \\[-3.5ex]
&\text{B2}& \text{T1}& \textbf{T2} & \textbf{T3}  & \text{T4}  & \text{T5}  &\text{B2}& \text{T1}& \textbf{T2} & \textbf{T3}  & \text{T4}  & \text{T5}  \\ 
 \cmidrule(lr){2-2} \cmidrule(lr){3-3}\cmidrule(lr){4-4}\cmidrule(lr){5-5}\cmidrule(lr){6-6}\cmidrule(lr){7-7}\cmidrule(lr){8-8}\cmidrule(lr){9-9}\cmidrule(lr){10-10}\cmidrule(lr){11-11}\cmidrule(lr){12-12}\cmidrule(lr){13-13}
\text{$n=500$ }  &0.194& 0.043& \textbf{0.016}& \textbf{0.010}& 0.054& 0.010 &0.623& 0.041& \textbf{0.050}& \textbf{0.050}& 0.040& 0.054 \\[1.5ex]
\text{$n=1000$}  &0.461& 0.107& \textbf{0.026}& \textbf{0.029}& 0.131& 0.030 &0.884& 0.038& \textbf{0.024}& \textbf{0.030}& 0.031& 0.030 \\[1.5ex]
\text{$n=2000$}  &0.854& 0.273& \textbf{0.036}& \textbf{0.054}& 0.287& 0.060 &0.999& 0.059& \textbf{0.043}& \textbf{0.061}& 0.061& 0.059 \\[1.5ex]
\text{$n=3000$}  &0.984& 0.673& \textbf{0.039}& \textbf{0.069}& 0.666& 0.067 &1.000& 0.088& \textbf{0.063}& \textbf{0.087}& 0.091& 0.089 \\[1.5ex]
\text{$n=5000$}  &0.999& 0.900& \textbf{0.027}& \textbf{0.098}& 0.886& 0.126 &1.000& 0.128& \textbf{0.059}& \textbf{0.066}& 0.131& 0.067 \\[1.5ex]
\text{$n=6000$}  &1.000& 0.927& \textbf{0.033}& \textbf{0.087}& 0.924& 0.146 &1.000& 0.108& \textbf{0.054}& \textbf{0.063}& 0.113& 0.077 \\[1.5ex]
\hline
 \hline
\end{tabular}
\caption*{MC Study on product interactions. Shown are  the size and power properties for the test statistic $S_n^{CM,S}$ using product interaction effects, i.e. the tensor product splines in T2--T5 $ti(x_1,x_2), ti(x_2,x_3)$ are replace by product interactions $s(x_1\cdot x_2)$ and $s(x_2\cdot x_3)$, respectively. The columns with bold numbers depict the size of the specification test $S_n^{CM,S}$. The remaining columns represent the power of the test.}
\label{3d3_prod}\vspace{-.5cm}
\end{table} 

First, the test $S_n^{CM,S}$ holds the size level (cf.~Table \ref{3d3_prod}, bold columns). Second, as in the tensor product cases in Sec.~\ref{sec:MCsemi}, power properties depend on the degree of misspecification. For a moderately misspecified model ($T4$ and $T5$ contain interaction effects only in the lower 25\% or upper 75\% quantile), the test shows reasonable power properties. The higher the degree of misspecification the higher the rejection rates (cf. columns B2 and T1). Third, due to the curse of dimensions, however, the multivariate case with three covariates requires a large number of observations $n$ to obtain a powerful testing procedure as discussed in Sec.~\ref{sec:MCsemi}. Last, the omitted variable bias is sufficiently well detected for $n\geq2000$ (cf.~Table \ref{3d3_prod}, column B2).

\newpage
\section{Further Results from Modelling Australian Electricity Prices}\label{app:electricity}
\paragraph{Tensor product interacting covariates}
In addition to Sec.~\ref{sec:edata} of our manuscript, Figure \ref{fig_S6_Visuals_01} shows the decomposition of the main and interaction effects at the 10\% quantile at 6:00 p.m.~using specification S6. Since the contour lines in the second and third panel (upper right and lower left) show the presence of interactions between demand and day, we conclude that the relation between the three covariates cannot be fully captured by product interactions based on univariate splines. In addition, different day-demand combinations have a different impact on the market wide price $P_i$.

\begin{figure}[htbp]
	\centering
	\caption{\normalsize{Estimated main and product interaction effects at the 90\% quantile at 6:00 p.m.}}
	\includegraphics[width=1.\textwidth]{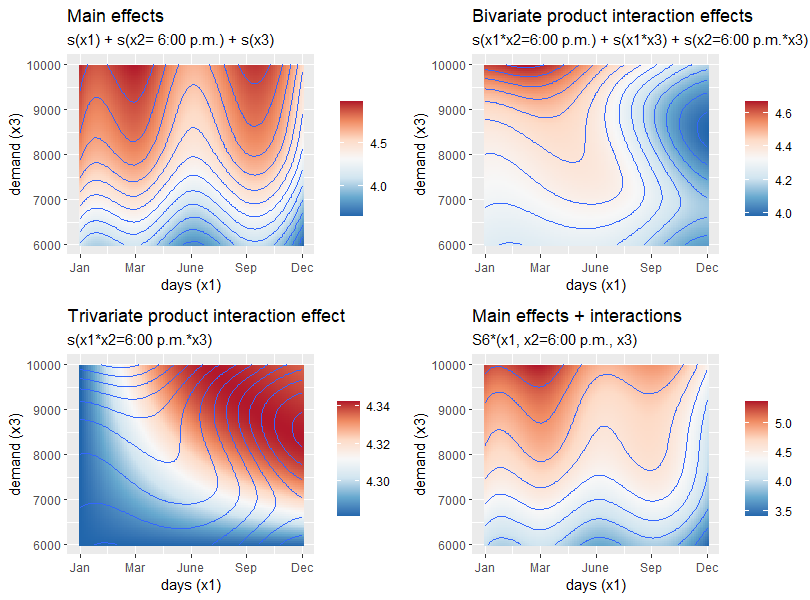}
	\caption*{Electricity prices. Figures depict the estimated effects of the three covariates on the 10\% quantile of the Australian NEM hourly electricity price distribution for 2019. The time of the day is set to 06:00 p.m. The estimation of the conditional qf is based on the sum of the main effects and bi- and trivariate product interaction effects, i.e. $F^{-1
	}_{Y\mid X}(\tau\mid x_1,x_2,x_3)=s(x_1,\tau)+s(x_2,\tau)+s(x_3,\tau)+s(x_1\cdot x_2,\tau)+s(x_1\cdot x_3,\tau)+s(x_2\cdot x_3,\tau)+s(x_1\cdot x_2\cdot x_3,\tau)$. The first panel (upper left) shows the sum of the univariate main effects of days (x1), time of day (x2) and total market demand (x3). The second and third panel illustrate the bivariate and trivariate product interaction effects. The overall effect is depicted in the last panel (lower right).}
	\label{fig_S6_Visuals_Prod}\vspace{-.5cm}
\end{figure}

\paragraph{Univariate product interacting covariates} Figure \ref{fig_S6_Visuals_Prod} shows the decomposition of the main and interaction effects at the 10\% quantile at 6:00 p.m.~using specification S6, where the tensor product interaction effects are replaced by univariate interacting covariates, i.e.~specification S6 is modified to 
\begin{align*}
\text{S6$^\ast$:}\qquad \qquad F^{-1
	}_{Y\mid X}(\tau\mid x_1,x_2,x_3)=s(x_1,\tau)&+s(x_2,\tau)+s(x_3,\tau)+s(x_1\cdot x_2,\tau)\\&+s(x_1\cdot x_3,\tau)+s(x_2\cdot x_3,\tau) +s(x_1\cdot x_2\cdot x_3,\tau).
\end{align*}
For the application of our test $S_n^{CM^*}$ with univariate product interacting effects to the Australian NEM, we consider a rolling window and set $n\in\{500,1000,2000\}$ and $\tau\in\{0.02,0.04,\ldots,0.98\}$. 
\begin{figure}[htbp]
	\centering
	\caption{\normalsize{Estimated main and product interaction effects at the 10\% quantile at 6:00 p.m.}}
	\includegraphics[width=1.\textwidth]{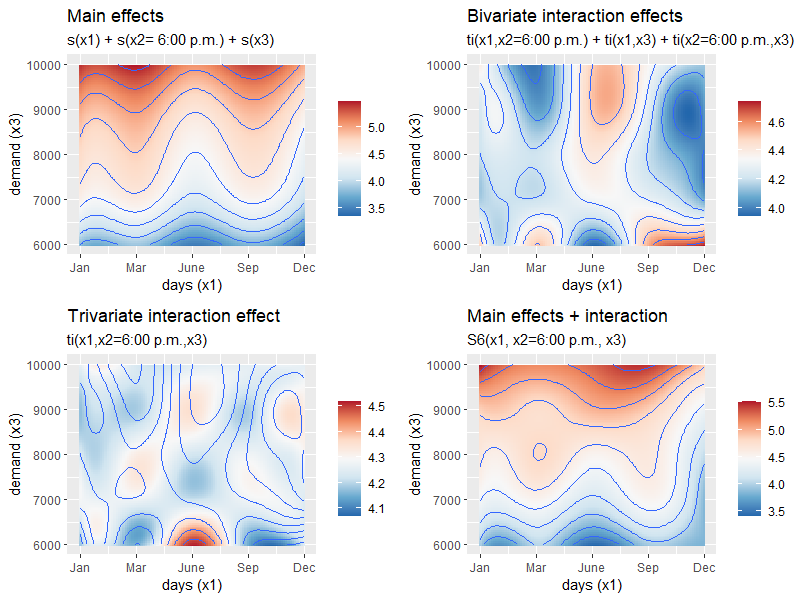}
	\caption*{Electricity prices. Figures depict the estimated effects of the three covariates on the 10\% quantile of the Australian NEM hourly electricity price distribution for 2019. The time of the day is set to 06:00 p.m.. The estimation is based on the model specification S6. The first panel (upper left) shows the sum of the univariate main effects of days (x1), time of day (x2) and total market demand (x3), where $x2$ is set to 6:00 p.m. The second and third panel illustrate the bivariate and trivariate interaction effects. The overall effect is depicted in the last panel (lower right).}
	\label{fig_S6_Visuals_01}\vspace{-.5cm}
\end{figure}
The number of sub-samples is $501$ and the critical values are calculated at a significance level of $5\%$. To ensure comparability of rejection rates for different $n$ and since we replaced the multivariate tensor by univariate interaction effects (cf.~S6 and S8--S13), we set the number of knots to $5$.  The rejection rates of the specification test $S_n^{CM*}$ are listed in Table \ref{tabstrom_prod}. For further details on the application we refer to Sec.~\ref{sec:edata}.
\begin{table}[htbp]
\centering
\caption{\normalsize Product Interaction: Rejection rates of the test statistic $S_n^{CM^*}$}
\footnotesize
\def\arraystretch{0.75}
\begin{tabular}{ p{1.1cm}cc ccc cc c}
 \hline
\hline\\[-3.5ex]
&\text{S6}& \text{S7}&\text{S8}&\text{S9}  & \text{S10}  & \text{S11} & \text{S12}  & \text{S13}    \\ 
\hline\\[-3.5ex]
\text{n=500}  &0.058& 0.072& 0.068& 0.058& 0.072& 0.056& 0.062& 0.081 \\[.5ex] 
\text{n=1000} &0.173& 0.235& 0.212& 0.183& 0.217& 0.164& 0.171& 0.192 \\[.5ex]
\text{n=2000} &0.646& 0.874& 0.773& 0.661& 0.739& 0.670& 0.685& 0.784\\[.5ex]
\hline
 \hline
\end{tabular}
\caption*{Electricity prices. The table shows the sub-sample rejection rates of size $n$ of the specification S6--S13, where the interaction effects modeled by tensor product splines $ti$ are replaced by product interaction effects, i.e. $s(x_1\cdot x_2),\ s(x_1\cdot x_3),\ s(x_2\cdot x_3)$ and $s(x_1\cdot x_2\cdot x_3)$.}
\label{tabstrom_prod}\vspace{-.5cm}
\end{table}

\end{document}